\newtheorem{remark}{Remark}
\newcommand{\Rmnum}[1]{\expandafter\@slowromancap\romannumeral #1@}
\newtheorem{theorem}{Theorem}
\newtheorem{corollary}{Corollary}
\newtheorem{proposition}{Proposition}
\newtheorem{assumption}{Assumption}
\newtheorem{definition}{Definition}
\newtheorem{lemma}{Lemma}
\def\BState{\State\hskip-\ALG@thistlm}
\begin{document}

\title{Minimizing the Age of Incorrect Information for Unknown Markovian Source}


\author{ Saad Kriouile}\author{Mohamad Assaad}
\affil{Laboratoire des Signaux et Syst\`emes, CentraleSup\'elec, Universit\'e Paris-Saclay,  91192 Gif sur Yvette, France}

\maketitle
\newcommand{\HRule}{\rule{\linewidth}{0.5mm}}

\begin{abstract}
The age of information minimization problems has been extensively studied in Real-time monitoring applications frameworks. In this paper, we consider the problem of monitoring the states of unknown remote source that evolves according to a Markovian Process. A central scheduler decides at each time slot whether to schedule the source or not in order to receive the new status updates in such a way as to minimize the Mean Age of Incorrect Information (MAoII). When the scheduler knows the source parameters, we formulate the minimization problem as an MDP problem. Then,  we prove that the optimal solution is a threshold-based policy, and we propose a low-complex algorithm that finds the optimal threshold.
When the source's parameters are unknown, the problem's difficulty lies in finding a strategy with a good trade-off between exploitation and exploration. Indeed, we need to provide an algorithm implemented by the scheduler that jointly estimates the unknown parameters (exploration) and minimizes the MAoII (exploitation). However, considering our system model, we can only explore the source if the monitor decides to schedule it. Then, applying the greedy approach, we risk definitively stopping the exploration process in the case where at a specific time, we end up with an estimation of the Markovian source's parameters to which the corresponding optimal solution is never to transmit. In this case, we can no longer improve the estimation of our unknown parameters, which may significantly detract from the performance of the algorithm. For that, we develop a new learning algorithm that gives a good balance between exploration and exploitation to avoid this main problem. Then, we theoretically analyze the performance of our algorithm compared to a genie solution by proving that the regret bound at time $T$ is $log(T)$. That implies that our solution converges to the optimal one at an efficient rate. 

Finally, we provide some numerical results to highlight the performance of our derived policy compared to the greedy approach.

\end{abstract}

\section{Introduction}

The remarkable growth of low-cost hardware has led to the emergence of real-time monitoring applications. In these systems, sensors are used to monitor events or environmental parameters, such as movement, temperature, humidity, and velocity. In order to have a timely reaction by the central entity, this later should receive from the sensor the recent status update of the remote source.

The main goal of these applications is accordingly to keep the monitor up-to-date by receiving the freshest information.


This concept of freshness is captured by the Age of Information (AoI), which is introduced for the first time in \cite{kaul2012real}. Since then, the AoI has become a hot research topic, and a considerable number of research works have been published on the subject  \cite{maatouk2020optimality,hsu2019scheduling,kadota2018scheduling,zou2019waiting,bedewy2016optimizing,bedewy2017age,bedewy2019age,maatouk2020status,sun2018age}.  

However, this metric doesn't consider the remote source's content. Precisely, it doesn't quantify the correctness of the information on the monitor's side.


This has been confirmed in \cite{jiang2019unified} where the authors establish that minimizing AoI gives a sub-optimal policy in minimizing the status error in remotely estimating Markovian sources.


To meet this goal, the authors in \cite{maatouk2020age} have designed a new metric dubbed Age of incorrect information AoII that captures both the freshness and the correctness of the information. Specifically, as long as the estimated state on the side of the receiver is different from the real state of the source, the AoII keeps growing by one per each time slot. While if the estimated state is equal to the actual state of the Markovian source, AoII goes to zero.
In \cite{maatouk2020age}, the authors have considered that the transmitter samples the source and decide whether to transmit the packet containing the useful information or not depending on the policy adopted. They have developed the optimal scheduling policy that minimizes the AoII under an energy constraint. 

Having said that, this metric is adopted in the Observable Markov Decision Problem framework in \cite{maatouk2020age}, i.e., in the case where the scheduler knows perfectly at each time the state of the Markovian source. To that extent, they assumed that the sensor knows exactly at each time slot the state of the source (always sampling the source) and performs, in addition, the scheduling task. This case may not be realistic since, in practice, the sensor has low energy and cannot perform both the sampling and the scheduling task at each time.

To deal with this issue, in \cite{kriouile2021minimizing}, the authors considered a Partially observable Markov Decision Problem framework where the monitor schedules the sensor to get the information of interest. To that extent, they proposed a slightly modified AoII metric to adapt it to the Partially observable Markov Decision Problem context. This metric estimates the value of AoII at each time $t$ on the side of the monitor as long as this later didn't receive the new status update regarding the information of interest yet. They considered a Markovian source with $N$ states and derived the low-complex and well-performing Whittle index policy in the multiple-sensors-one-receiver scenario. 
In order to get a simple increasing MAoII function with the age, they limited their analysis to the case where the transitioning probability of the Markovian source to another state is smaller than the remaining probability.     
However, in our case, we consider the pair-communication-scenario and extend the analysis done in \cite{kriouile2021minimizing} to the case where the transition probability could be higher than the remaining one. In this case, the difficulty lies on the fact that MAoII function is not an increasing function with the age but rather an oscillating function, which makes the investigations and analysis more challenging.
Moreover, when the source's parameters are unknown, we propose an online reinforcement algorithm that jointly estimates the parameters and minimizes the MAoII.

Regarding the learning aspect in the context of the Age of information minimization problem, one of the closest works to our proposed system model is \cite{tripathi2021online}. The authors proposed an online learning algorithm for the AoI minimization problem in this work. Thereby, they didn't take into account the information content of the remote source. Moreover, they considered that the successful transmission probability is equal to one. In order to ease the analysis regarding the bound of the regret function, they assumed that after $M$ time-slots, the packet is always transmitted. Leveraging these simplifying assumptions, they proposed an algorithm that gives a regret function bounded by the square root of $T$.

The paper \cite{fiez2018multi} shows the shortcomings of the standard UCB and $\epsilon$-greedy algorithms in the restless multi-armed bandit scheduling problem interacting with a correlated \textbf{markovian sources}. They proposed as alternative algorithms EpochUCB and EpochGreedy algorithms. Likewise, since the standard UCB algorithm is incompatible with the \textbf{MDP framework}, in \cite{auer2006logarithmic} and \cite{mete2021reward} and references therein, the authors developed round or episode-based learning algorithms using the optimistic approach for \textbf{Markov Decision Problems} such as UCRL, UCRL2, episodic Thompson sampling, RBMLE. In these algorithms, the policy corresponding to the estimated parameter at the beginning of the episode is applied during the entire episode to evaluate the average reward or the cost under this applied policy.   
Having said that, these papers used the notion of mixing time and considered that the source evolves under an \textbf{ergodic} Markov Chain in order to have a finite mixing time.
In contrast to these works, \cite{auer2008near} proposed a much weaker assumption than the ergodicity one: the Markov decision process (MDP) has a finite diameter. 
They defined the Diameter $D$ such that for any pair of states $s$, $s'$, there is a policy that moves from $s$ to $s'$ in at most $D$ steps (on average).
They evaluated for an undiscounted reinforcement learning problem the performance of a learning algorithm with respect to the optimal solution by analyzing the regret function.

Precisely, they proposed a reinforcement learning algorithm with total regret $O(D\sqrt{ST})$ after $T$ steps for any unknown MDP with $S$ states and diameter $D$. Though, in order to have a finite diameter, they considered a finite state space.
However, in our case, neither the MDP ergodicity nor the finite diameter assumptions are satisfied since the optimal solution may not give us an irreductible Markov process, and the state space is not finite.  
Moreover, we consider that the unknown parameters belong to a continuous set, while these mentioned works considered that the unknown parameters belong to a discrete set.
 

Another interesting work in the field of AoI that proposed an online learning algorithm is \cite{xiong2021learning}. In this work, the authors developed a learning augmented algorithm called UCB-whittle that estimates the parameters of the users' queues and applies the Whittle index policy at each epoch, considering the optimistic estimated parameters under the assumption that the markovian process is ergodic under the Whittle index policy. However, they considered that the system parameters belong to a discrete set which is already known by the algorithm. 

Unlike these works mentioned above, we suppose that the unknown parameter belongs to a continuous set, and we aim to estimate the transitioning probability of the Markovian process.
Moreover, in the opposite to these works, we don't use the episode-based approach where a given policy is applied for a fixed finite episode. Instead of that, we apply the estimated policy corresponding to the estimated parameter till we get a new estimation of the parameter in question. However, some estimations of the unknown parameter may provide us with scheduling policies that can be applied forever and eventually stop definitively the exploration process. This makes our problem more challenging since we need to avoid these policies while baring in mind at the same time that the optimal policy corresponding to the true parameter that we estimate could be itself among these typical policies. To that extent, we develop an algorithm that gives a good balance between the exploitation and exploration trade-off and that resolves this problem.

Specifically, our contributions can be summarized as follows:
\begin{itemize}
	\item When the parameters are known, and assuming that the remote source is volatile, we formulate the MAoII-based scheduling problem and provide the corresponding Bellman equation. Unlike \cite{kriouile2021minimizing}, we use a more complex and nontrivial analysis to establish that the optimal policy is a threshold-based policy since we get a non-monotone MAoII function regarding the age.
	\item We propose a low-complex algorithm that finds the optimal threshold policy.
	\item When the parameters are unknown, in contrast to \cite{tripathi2021online,fiez2018multi,mete2021reward,auer2006logarithmic,auer2008near, xiong2021learning}, in this work, we consider that the Markov Chain evolves under infinite state space and therefore we can not apply the episode-based approach.
Moreover, unlike \cite{tripathi2021online,fiez2018multi,mete2021reward,auer2006logarithmic,auer2008near, xiong2021learning}, where the authors consider that the estimation always occurs after a finite time, in our case, the estimation process depends on the policy applied. Indeed, it may no longer happen in some cases. For that, we propose an online reinforcement learning algorithm that matches this context and adjusts the exploration-exploitation trade-off.
	\item We compare between our solution and a genie algorithm and show that the regret bound is $O(Log(T))$ at time T, which implies that our solution converges to the optimal MAoII at an efficient convergence rate. 
	\item We provide numerical results that highlight the performance of our algorithm compared to the greedy policy.  
 \end{itemize}

\section{System Model}\label{sec:Syst_mod}
\subsection{Network description}\label{subsec:Net_descrip}
We consider in our paper one user that generates and send status updates about the process of interest to a central entity over unreliable channels. Time is considered to be discrete
and normalized to the time slot duration.
More specifically, the user observes an information process of interest $X(t)$ and at the request of the monitor, it samples the process $X(t)$ and send it to the monitor over an unreliable channel. Based on the last received update, the monitor constructs an estimate of the process, denoted by $\hat{X}(t)$.  
We suppose that the packet containing the information of interest, if it is successfully transmitted, will be instantaneously delivered to the monitor.  
In other words, if the monitor allows the user to transmit at time $t$, it receives the value of $X(t)$ at the same time $t$ if the packet is successfully transmitted. Therefore, it updates the estimate process as $\hat{X}(t)=X(t)$.
In any other case, namely when the user is not authorized to transmit or when the packet is unsuccessfully transmitted, the monitor keeps the same value at time slot $t$, specifically $\hat{X}(t)=\hat{X}(t-1)$.    
As for the unreliable channel, we suppose that at each time slot $t$, the probability of having successful transmission is $\rho$, and $1-\rho$ otherwise. Consequently, the channel realizations are independent and identically distributed
(i.i.d.) over time slots that we denote $c(t)$, i.e. $c(t)=1$ if the packet is successfully transmitted and $c(t)=0$ otherwise.

The next aspect of our model that we tackle is the nature of the process $X(t)$.
To that extent, the information process of interest $X(t)$ evolves under Markov chain. For that, we define the probability of remaining at the same state in the next time slot as $p$. Similarly, the probability of transitioning to another state is $r$. Denoting by $N$ the number of possible states of $X(t)$, then the following always holds:
\begin{equation}\label{eq:relation_p_r}
p+(N-1)r=1
\end{equation}
\begin{figure}[H]
\centering
\includegraphics[width=0.5\textwidth]{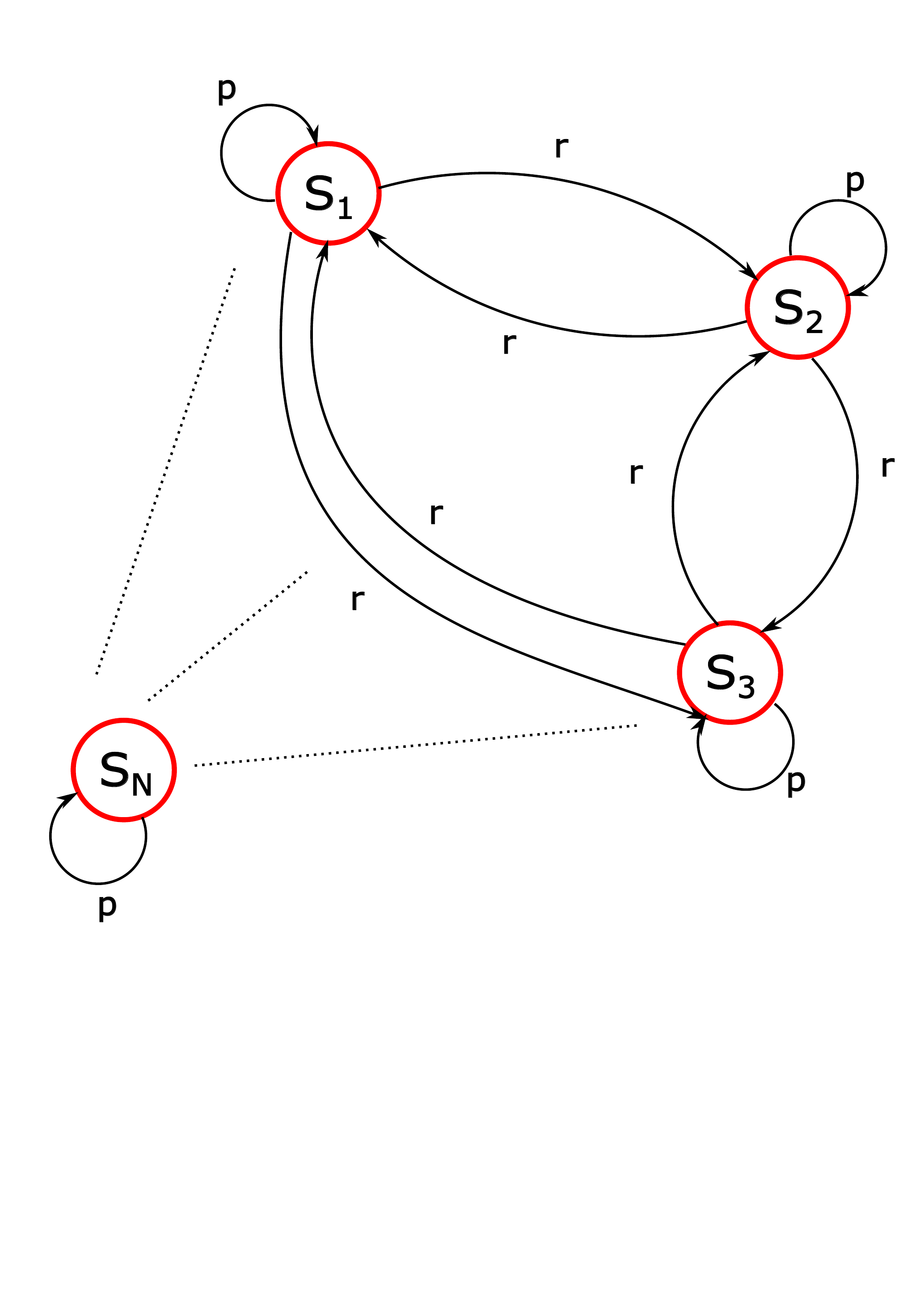}
\vspace{-4cm}
\caption{Illustration of process $X(t)$}
\end{figure}
Throughout this paper, we consider that the only unknown parameter by the monitor and the sensor is $r$. Moreover, we consider this following assumption.
\begin{assumption}\label{assump:volatility}
We consider a volatile source. In other words, the probability that the Markovian source transits to a different state is enough large. Explicitly, we assume that $(N-1)r \geq 4p$.  
\end{assumption}
In the sequel, we start first by deriving the optimal solution in the scenario where $r$ is considered to be known by the monitor, then we provide an algorithm that learns the parameter $r$ and minimizes simultaneously the objective function.
\subsection{MAoII metric}
In this paper, we study the mean age of incorrect information (MAoII) metric. The age of incorrect information has been introduced the first time in \cite{maatouk2020age}. Unlike the traditional AoI metric, this metric captures the freshness of \textbf{informative} updates. Specifically, if the monitor acquires the information about the process $X(t)$, as long as the state of the process $X(t)$ remains at the same state in the next time slots, the age of the incorrect information will not increase, since there is no new information unknown by the monitor. In \cite{maatouk2020age}, the authors presume that the scheduler has a perfect knowledge of the process at each time slot. While in our case, we consider that the monitor which plays the role of the scheduler, knows only the state of the last successively received packet. Accordingly, the explicit expression of MAoII metric is:
\begin{equation}
\delta_{MAoII}(t)=\mathbb{E}_{V}[(t-V(t)]
\end{equation}
where $V(t)$ is the last time instant such that $\mathbf{1}_{\{X(V(t))= \hat{X}(g(t))\}}=1$.
where $g(t)$\footnote{Considering our system model detailed in \ref{subsec:Net_descrip}, $g(t)$ refers also to the sampling time of the information of interest contained in the last successfully received packet} is the time-stamp of the last successfully received packet by monitor.
\begin{remark}
It is worth mentioning that, as it was explained in Section \ref{subsec:Net_descrip}, the reception of the successfully transmitted packet takes place at time slot $g(t)$. This means that $\hat{X}(g(t))=X(g(t))$.
\end{remark}
In order to use this metric effectively in a partially Observable
Markov Decision Process Problem, we need to take into consideration the markovian nature of the process $X(t)$. To that extent, we introduce in the next section the notion of the belief that represents the probability that $\hat{X}(t)$ is in the correct state.\\
 
\subsection{Metric evolution}\label{subsec:metrics_evolution}
In this section, we describe mathematically the evolution of the metric of interest depending on the system parameters and the action taken.
We denote by $d(t)$ the action prescribed to the user at time slot $t$ and by $a(t)$ the mean age of incorrect information function. 
To highlight the notion of the correctness, the monitor maintains a belief value $\pi(t)$ which is defined as the probability that the information state in the monitor, $\hat{X}(t)=\hat{X}(g(t))=X(g(t))$ at time $t$ being correct. Explicitly $\pi(t)=Pr(\hat{X}(t)=X(t))$. 
One can show that $\pi(t)$ evolves as follows:
\begin{lemma}\label{lem:pi_evolution}
\begin{equation}
 \pi(t+1)=\left\{
    \begin{array}{ll}
        1 \ \  \ \ \ \ \ \ \ \ if \ \  d(t+1)=1, c(t+1)=1 \\
        \pi(t)p+r(1-\pi(t)) \ \ \ else
    \end{array}
\right.
\end{equation}
\end{lemma}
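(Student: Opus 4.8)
The plan is to prove the two branches of the recursion separately, since the first follows directly from the model's update rule while the second requires a conditioning argument on the correctness of the current estimate, together with the symmetric transition structure of $X(t)$.

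For the first branch, I would suppose $d(t+1)=1$ and $c(t+1)=1$. By the description of the model in Section~\ref{subsec:Net_descrip}, when the user is scheduled and the transmission succeeds the monitor receives the fresh sample $X(t+1)$ and sets $\hat{X}(t+1)=X(t+1)$. Hence the event $\{\hat{X}(t+1)=X(t+1)\}$ occurs with certainty, so $\pi(t+1)=Pr(\hat{X}(t+1)=X(t+1))=1$, which is the first line.

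For the second branch, I would suppose instead that $d(t+1)=0$ or $c(t+1)=0$. In either case no new sample reaches the monitor, so by the model $\hat{X}(t+1)=\hat{X}(t)$, and therefore $\pi(t+1)=Pr(\hat{X}(t)=X(t+1))$. Applying the law of total probability by splitting on whether the current estimate is correct gives
\begin{equation}
\pi(t+1)=Pr\!\left(\hat{X}(t)=X(t+1)\mid \hat{X}(t)=X(t)\right)\pi(t)+Pr\!\left(\hat{X}(t)=X(t+1)\mid \hat{X}(t)\neq X(t)\right)(1-\pi(t)).
\end{equation}
I would then evaluate the two conditional probabilities using the symmetric Markov structure (the source stays put with probability $p$ and moves to each specific other state with probability $r$, consistent with \eqref{eq:relation_p_r}). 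On the event $\{\hat{X}(t)=X(t)\}$, the estimate still matches $X(t+1)$ exactly when the source does not move, an event of probability $p$. On the complementary event, writing $\hat{X}(t)=\hat{s}$ and $X(t)=s$ with $\hat{s}\neq s$, the estimate matches $X(t+1)$ exactly when the source transitions from $s$ into the particular state $\hat{s}$, which by homogeneity has probability $r$. Substituting these two values yields $\pi(t+1)=p\,\pi(t)+r\,(1-\pi(t))$, as claimed.

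The delicate point, and the step I expect to be the main obstacle, is justifying the second conditional probability together with its independence from the monitor's accumulated history. I would argue that the one-step transition $X(t)\to X(t+1)$ is conditionally independent of $\hat{X}(t)$ given $X(t)$: indeed $\hat{X}(t)$ is a function of past samples, past actions, and past channel realizations $c(\cdot)$, all of which are independent of the fresh transition drawn at step $t\to t+1$. Granting this, the symmetry of the transition probabilities ensures that the conditional probabilities depend only on whether $\hat{X}(t)=X(t)$, and not on the identities of the states involved, so the update collapses onto the scalar belief $\pi(t)$. This is precisely what makes the recursion self-contained and closes the argument.
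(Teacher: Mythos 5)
Your proposal is correct and follows essentially the same route as the paper's own proof (given in the cited appendix and reproduced in the source): the scheduled-and-successful case is settled directly by the update rule $\hat{X}(t+1)=X(t+1)$, and the other case uses the law of total probability conditioned on $\{\hat{X}(t)=X(t)\}$, with the conditional probabilities evaluating to $p$ and $r$ by the symmetric transition structure. Your additional remark on the conditional independence of the transition $X(t)\to X(t+1)$ from the monitor's history given $X(t)$ makes explicit a step the paper leaves implicit, but it does not change the argument.
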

\begin{IEEEproof}
See appendix A in \cite{kriouile2021minimizing}.
\end{IEEEproof}

According to the expression of MAoII given in section \ref{subsec:Net_descrip}, $(t-V(t))$ is a random variable that we denote $A(t)$ that satisfies:
\begin{lemma}\label{lem:random_variable}
\begin{align}
A(t)=\left\{
    \begin{array}{lll}
        0 & w.p & \pi(t)\\
        1 & w.p &  \pi(t-1).(1-p)\\
        2 & w.p &  \pi(t-2).(1-p).(1-r)\\
        3 & \cdots  &         \cdots\\
        \vdots & &\\
        t-g(t)-1& w.p & \pi(g(t)+1).(1-p)\\
                  &  & \  .(1-r)^{t-g(t)-2} \\
        \\          
        t-g(t)& w.p & (1-p).(1-r)^{t-g(t)-1} 
    \end{array}
\right.
\end{align}
\end{lemma}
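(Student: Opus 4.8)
The plan is to compute, for each value $k$, the probability of the event $\{A(t)=k\}=\{V(t)=t-k\}$ directly from the definition of $V(t)$ together with the Markov dynamics of $X(t)$. Write $\hat{x}:=\hat{X}(g(t))=X(g(t))$ for the monitor's current estimate, which is a fixed (known) value on the whole window $[g(t),t]$ because no new packet is received after $g(t)$. Since $X(g(t))=\hat{x}$ holds deterministically, we always have $V(t)\ge g(t)$, so $A(t)$ ranges over $\{0,1,\dots,t-g(t)\}$, matching the support in the statement. I would first record the two building blocks: $\Pr(X(s)=\hat{x})=\pi(s)$ for $g(t)\le s\le t$ (this is exactly the definition of the belief, since $\hat{X}(s)=\hat{x}$ throughout the window), and the one-step facts from the chain of Figure~1, namely $\Pr(X(s+1)=\hat{x}\mid X(s)=\hat{x})=p$ and $\Pr(X(s+1)=\hat{x}\mid X(s)=s')=r$ for every state $s'\ne\hat{x}$.

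Next I would translate the event. By definition $V(t)$ is the most recent instant at which the process agreed with $\hat{x}$, so for $1\le k\le t-g(t)$,
\[
\{A(t)=k\}=\{X(t-k)=\hat{x}\}\cap\bigcap_{j=0}^{k-1}\{X(t-j)\ne\hat{x}\},
\]
while $\{A(t)=0\}=\{X(t)=\hat{x}\}$. The $k=0$ case is then immediate: $\Pr(A(t)=0)=\Pr(X(t)=\hat{x})=\pi(t)$.

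For $k\ge 1$ I would peel the event apart with the Markov property, conditioning on $X(t-k)$ first and then propagating forward. The factor $\Pr(X(t-k)=\hat{x})=\pi(t-k)$ starts the product; the first transition out of $\hat{x}$ contributes $\Pr(X(t-k+1)\ne\hat{x}\mid X(t-k)=\hat{x})=1-p$; and each of the remaining $k-1$ transitions keeps the process away from $\hat{x}$. The one point that needs care, and the crux of the argument, is that this last probability is the same $1-r$ at every step regardless of which non-$\hat{x}$ state the chain currently occupies: by the symmetric structure of the chain the probability of jumping back to the specific state $\hat{x}$ is exactly $r$ from any state $s'\ne\hat{x}$, so $\Pr(X(s+1)\ne\hat{x}\mid X(s)=s')=1-r$ uniformly in $s'$. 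Because this quantity is constant over the non-$\hat{x}$ states, the conditioning on the earlier history drops out and the $k-1$ stay-away steps simply multiply to $(1-r)^{k-1}$. Assembling the factors gives
\[
\Pr(A(t)=k)=\pi(t-k)\,(1-p)\,(1-r)^{k-1},
\]
which is exactly the claimed weight for $1\le k\le t-g(t)-1$. For the terminal value $k=t-g(t)$ this reads $\pi(g(t))(1-p)(1-r)^{t-g(t)-1}$, and since $X(g(t))=\hat{x}$ with certainty we have $\pi(g(t))=1$, recovering the stated expression $(1-p)(1-r)^{t-g(t)-1}$.

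Finally, as a consistency check I would verify that these weights sum to one, equivalently that $V(t)$ is almost surely well defined on $[g(t),t]$, which it is because $X(g(t))=\hat{x}$; this can also be seen algebraically by telescoping with the belief recursion $\pi(s+1)=\pi(s)p+r(1-\pi(s))$ from Lemma~\ref{lem:pi_evolution}. I expect the only genuinely delicate step to be the uniform stay-away probability: it hinges entirely on the chain being symmetric (equal mass $r$ on each of the other states), so that the return probability to $\hat{x}$ does not depend on which incorrect state is currently occupied, and hence the path probabilities factor cleanly without having to track the full distribution over the $N-1$ wrong states.
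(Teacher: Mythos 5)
Your proof is correct and takes essentially the same route as the paper's: both decompose the event $\{A(t)=k\}$ (equivalently $\{V(t)=t-k\}$) into the chain occupying $\hat{X}(g(t))$ at time $t-k$ (with probability $\pi(t-k)$, and $\pi(g(t))=1$ for the boundary case $k=t-g(t)$) followed by $k$ consecutive steps away from that state, contributing $(1-p)(1-r)^{k-1}$. Your explicit justification that the per-step stay-away probability is uniformly $1-r$ irrespective of which incorrect state is occupied fills in a detail the paper states without comment, but the underlying argument is the same.
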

\begin{IEEEproof}
See appendix B in \cite{kriouile2021minimizing}.
\end{IEEEproof}

Therefore, the mean of the age of the incorrect information at slot $t$ equals to the mean of $A(t)$, i.e.
\begin{align}
n(t)=&\mathbb{E}[A(t)] \nonumber \\
=&\sum_{k=0}^{t-g(t)-1} k(1-p)(1-r)^{k-1} \pi(t-k) \nonumber \\
&+(t-g(t)).(1-p).(1-r)^{t-g(t)-1} \nonumber \\
=&\sum_{k=1}^{t-g(t)} (t-g(t)-k)(1-p)(1-r)^{t-g(t)-k-1} \pi(g(t)+k)\nonumber \\
&+ (t-g(t)).(1-p).(1-r)^{t-g(t)-1} 
\end{align}
One can establish that for all $t$, using definition of $g(t)$, $\pi(g(t))=1$. Hence, according to the evolution of $\pi(\cdot)$ in Lemma \ref{lem:pi_evolution}, for all $k \leq t-g(t)$, $\pi(g(t)+k)$ depends only on $k$. More precisely, we have that for each $k \leq t-g(t)$, $\pi(g(t)+k)=\pi_k$ where $\pi_k$ is a sequence defined by induction as follows:
\begin{equation}
 (\pi_k)=\left\{
    \begin{array}{ll}
        \pi_0=1\\
        \pi_{k+1}=p \pi_k+r(1-\pi_k) & if \ \ k \geq 0
    \end{array}
\right.
\end{equation}
In light of that fact, we have that:
\begin{equation}
n(t)=\sum_{k=0}^{t-g(t)} (t-g(t)-k)(1-p)(1-r)^{t-g(t)-k-1} \pi_k
\end{equation}
We conclude that $n(t)$ depends on $t-g(t)$. Therefore, we let $n(t)\overset{\Delta}{=}n(t-g(t))$.\\

To that extent, at time slot $t+1$, if the user is scheduled and the packet is successively transmitted, then $g(t+1)=t+1$. Accordingly, at time slot $t+1$, MAoII equals to $n(t+1-g(t+1))=n(0)$.   
If the user is not scheduled or if the packet is not successively transmitted, then $g(t+1)=g(t)$. Therefore, MAoII will transit to $n(t+1-g(t+1))=n(t-g(t)+1)$. Based on this and denoting $j(t)$ the index such that $n(j(t))$ is the value of MAoII at time slot $t$, MAoII will transit to the value $n(j(t)+1)$ at time instant $t+1$. To sum up, the evolution of MAoII can be summarized as follows:     
\begin{equation}
 a(t+1)=\left\{
    \begin{array}{ll}
        n(0) & if d(t+1)=1, c(t+1)=1 \\
        n(j(t)+1) & else
    \end{array}
\right.
\end{equation}
where $a(t)=n(j(t))$.
\section{Problem formulation}\label{sec:prob_form}
In this paper, we consider that if the user is scheduled, an additional cost should be paid due to the energy consumption during the transmission process. 
To that extent, we let $C(t)=a(t)+\lambda d(t)$ be the penalty function at the central entity of the user at time slot $t$ where $\lambda >0$ and $d(t)=1$ if the packet if the user is scheduled and $0$ otherwise. Our aim is to find a scheduling policy that decides to whether schedule the user or not in a such way to minimize the total expected average penalty function. A scheduling policy $\phi$ is defined as a sequence of actions $\phi=(d^{\phi}(0),d^{\phi}(1),\ldots)$ where $d^{\phi}(t)=1$ if the user is scheduled at time $t$, and $d^{\phi}(t)=0$ otherwise. 
Denoting by $\Phi$, the set of all causal scheduling policies, then
our scheduling problem can be formulated as follows:
\begin{equation}\label{eq:problem_formulation}
\setlength{\belowdisplayskip}{0pt} \setlength{\belowdisplayshortskip}{0pt}
\setlength{\abovedisplayskip}{0pt} \setlength{\abovedisplayshortskip}{0pt} 
\begin{aligned}
& \underset{\phi\in \Phi}{\text{minimize}}
& & \lim_{T\to+\infty} \text{sup}\:\frac{1}{T}\mathbb{E}^{\phi\in \Phi}\Big(\sum_{t=0}^{T-1}C^{\phi}(t)|C(0)\Big)\\
\end{aligned}
\end{equation}

\subsection{Structural results}
The problem in (\ref{eq:problem_formulation}) can be viewed as an infinite horizon average cost Markov decision process that is defined as follows:
\begin{itemize}
\item \textbf{States}: The state of the MDP at time $t$ is the MAoII function $a(t)$. 
\item \textbf{Actions}: The action at time $t$, denoted by $d(t)$, specify if the user is scheduled (value $1$) or not (value $0$).
\item \textbf{Transitions probabilities}: The transitions probabilities between the different states.
\item \textbf{Cost}: The instantaneous cost of the MDP, $C(a(t),d(t))$, be equal to $a(t)+\lambda d(t)$.
\end{itemize}
The optimal policy $\phi^*$ of Problem \eqref{eq:problem_formulation} can be obtained by solving the following Bellman equation for each state $a$:
\begin{align}
\theta &+ V(a) \nonumber \\ 
&=\min_{d\in\{0,1\}}\big\{a+\lambda d+\sum_{a'\in A^a }\Pr(a\rightarrow a'|d)V(a')\big\} 
\label{eq:bellman_general}
\end{align}
where $\Pr(a\rightarrow a'|d)$ is the transition probability from state $a$ to $a'$ under action $d$, $\theta$ is the optimal value of the problem, $V(a)$ is the differential cost-to-go function and $A$ is the set of states of the MAoII metric.
\subsection{Discounted Cost approach}
In order to examine the structure of the optimal solution of Problem \eqref{eq:problem_formulation}, we adopt the discounted cost approach. This approach is widely used in the framework of MDP problems (e.g. \cite{liu2010indexability,larranaga2017asymptotically}). Precisely, it consists of introducing a discounted factor $\beta$ in the cost function and deriving its optimal solution. After that, exploiting the fact $\lim V_{\beta}(\cdot)=V(\cdot)$ when $\beta$ goes to $1$ under some conditions that we will prove later, we conclude that the optimal solution of \eqref{eq:problem_formulation} follows the same structure as the discounted one. 

To that end, we introduce the Bellman equation of the discounted cost problem as follows:
\begin{align}
V_{\beta}(a) \nonumber \\ 
&=\min_{d\in\{0,1\}}\big\{a+\lambda d+\sum_{a'\in A }\Pr(a\rightarrow a'|d)\beta V_{\beta}(a')\big\} 
\label{eq:bellman_general}
\end{align}
where $\beta$ is strictly less than $1$. 
There exist several numerical algorithms that are developed to solve (\ref{eq:bellman_general}), such as the value iteration algorithm. This later consists first of updating per each iteration the value function $V_{\beta}^t(.)$ following the recurrence relation for each state $a$:
\begin{align}\label{eq:bellman_equation_time_t}
V^{t+1}_{\beta}(a) \nonumber\\
&=\min_{d\in\{0,1\}}\big\{a+\lambda d+\sum_{a'\in A }\Pr(a\rightarrow a'|d)\beta V^t_{\beta}(a')\big\}  
\end{align}
Given that $V^0_{\beta}(.)=0$, we compute $V_{\beta}(.)$ exploiting the fact that $\underset{t \rightarrow +\infty}{\text{lim}} V^t_{\beta}(a)=V_{\beta}(a)$ (see \cite[Chapter~8.5]{puterman2014markov}). The main shortcoming of this algorithm is that it requires high memory and computational complexity. To overcome this complexity, rather than computing the value of $V_{\beta}(.)$ for all states, we limit ourselves to study the structure of the optimal scheduling policy by exploiting the fact that $\underset{t \rightarrow +\infty}{\text{lim}} V^t_{\beta}(a)=V_{\beta}(a)$. In that way, we show that the optimal solution of Problem \eqref{eq:bellman_general} is a threshold-based policy: 
\begin{definition} 
A threshold policy is a policy $\phi \in \Phi$ for which there exists $n$ such that when the current state $a < n$, the prescribed action is $d^- \in \{0,1\}$, and when $ a \geq n$, the prescribed action is $d^+ \in \{0,1\}$ while baring in mind that $d^- \neq d^+$.
\end{definition}
To that extent, we show that the optimal policy of \eqref{eq:bellman_general} is a threshold based policy. 
For that purpose, we specify first the states space $A$, then we provide the expression of the corresponding Bellman equation \eqref{eq:bellman_general}. After that, we establish our desired result.
According to Section \ref{subsec:metrics_evolution}, $a(t)$ evolves in the state space:
\begin{equation}
A=\{a_{j}: j \geq 0, a_{j}=\sum_{k=0}^{j} k(1-p)(1-r)^{k-1} \pi_{j-k}\}
\end{equation}
Therefore, the expression of Bellman equation at state $a_{j}$ 
\begin{align}\label{eq:bellman_equation_discounted}
V(a_{j})=\min\big\{&a_{j}+\beta V_{\beta}(a_{j+1});\nonumber \\
&a_{j}+\lambda+\rho \beta V_{\beta}(a_{0})+(1-\rho)\beta V_{\beta}(a_{j+1})\big\} 
\end{align}

\begin{theorem}\label{theo:threshold_policy_discounted}
The optimal solution of the discounted problem in (\ref{eq:bellman_equation_discounted}) is an increasing threshold policy. Explicitly, there exists $a_n$ such that when the current state $a_{j} < a_n$, the prescribed action is a passive action, and when $ a_j \geq a_{n}$, the prescribed action is an active action.
\end{theorem}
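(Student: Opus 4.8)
The plan is to analyse the discounted Bellman equation \eqref{eq:bellman_equation_discounted} through its value–iteration representation $V_\beta=\lim_{t\to\infty}V^t_\beta$, with $V^0_\beta\equiv 0$ and $V^{t+1}_\beta$ defined by \eqref{eq:bellman_equation_time_t}. First I would reduce the optimality test to a scalar condition: subtracting the passive branch from the active branch of \eqref{eq:bellman_equation_discounted} shows that scheduling is optimal at state $a_j$ exactly when
\[
\rho\beta\big(V_\beta(a_{j+1})-V_\beta(a_0)\big)\ \ge\ \lambda .
\]
Because $V_\beta(a_0)$ is a constant, the threshold property is equivalent to showing that the ``scheduling advantage'' $\Delta_\beta(j):=V_\beta(a_{j+1})-V_\beta(a_0)$ crosses the level $\lambda/(\rho\beta)$ exactly once along the state space, so that the active set is of the form $\{a_j\ge a_n\}$ as claimed.

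The engine of the argument is an induction on the iteration index $t$. I would isolate a structural invariant for the iterates $V^t_\beta$ — monotonicity of the value along the state values together with a one–step comparison linking $V^t_\beta(a_{j+1})$ to the current cost $a_j$ — and verify that the Bellman operator in \eqref{eq:bellman_equation_time_t} preserves it; letting $t\to\infty$ then transfers the invariant to $V_\beta$, and the single crossing of $\Delta_\beta$ follows. This is precisely where the proof departs from \cite{kriouile2021minimizing}: there $a_j$ is monotone in the age $j$, the value function inherits that monotonicity, and the invariant is immediate.

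The main obstacle is exactly that monotonicity in $j$ fails here. By Lemma \ref{lem:pi_evolution} the belief obeys $\pi_{k+1}=r+(p-r)\pi_k$, so in the volatile regime of Assumption \ref{assump:volatility} one has $p-r<0$ and both $(\pi_k)$ and the resulting $(a_j)$ oscillate about their stationary value while converging to it geometrically at rate $r-p$; concretely the even–indexed $a_j$ increase to the limit $a_\infty$ while the odd–indexed ones decrease to it. The naive ``value increasing in the age'' induction therefore breaks, and $\mathrm{sign}(a_{j+1}-a_j)$ alternates with the parity of $j$. To handle this I would carry the induction on the two interlaced monotone subsequences (even and odd indices) separately, prove that each generates a monotone branch of $V_\beta$, and then recombine them, using the geometric damping of the oscillation to control $\Delta_\beta$ uniformly across the two phases. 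I expect the quantitative form of Assumption \ref{assump:volatility} — the explicit factor in $(N-1)r\ge 4p$ — to be exactly what guarantees that the oscillation is damped strongly enough for $\Delta_\beta$ to cross its level only once, which is the delicate step and the crux of the whole proof.
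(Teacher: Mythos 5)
Your plan reproduces, in outline, the paper's own strategy: the same reduction of optimality at $a_j$ to the sign of $\rho\beta\big(V_\beta(a_{j+1})-V_\beta(a_0)\big)-\lambda$, the same value-iteration induction starting from $V^0_\beta\equiv 0$, and, for the oscillating regime, an induction carried on the two interlaced subsequences. The paper's Appendix \ref{app:theo:threshold_policy_discounted} does exactly this: for $1-r<|p-r|$ it proves by induction value-monotonicity of $V^t_\beta$ \emph{jointly} with three quantitative couplings of the form $V^t_\beta(a_{2k+1})-V^t_\beta(a_{2k+3})\leq a_{2k+2}-a_{2k}$, propagated via the fact that $|a_{i+2}-a_i|$ decreases in $i$ (Lemma \ref{lem:growth_difference_by_two_steps_age}), and Assumption \ref{assump:volatility} enters exactly where you predict. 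Two inaccuracies in your setup, though: oscillation of $(a_j)$ is not generic under Assumption \ref{assump:volatility}; it occurs only when $1-r<|p-r|$, which forces $N=2$ (Remark \ref{rem:two_states}), and for $N\geq 3$ the sequence $a_j$ is monotone (Lemma \ref{lem:MAoII_increasing}) so the simple induction suffices; moreover Assumption \ref{assump:volatility} does not even imply $p<r$ for large $N$.

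The genuine gap is the final step, which you defer to "geometric damping" and the constant in Assumption \ref{assump:volatility}: the claim that the single crossing of $\Delta_\beta$ yields an active set of the form $\{a_j\geq a_n\}$. What the interlaced induction delivers — in your plan and in the paper — is that $V_\beta$ is increasing in the \emph{value} $a_i$. But in the oscillating regime the one-step map $a_j\mapsto a_{j+1}$ \emph{reverses} the value order: the values satisfy $a_0<a_2<a_4<\dots<a_5<a_3<a_1$, so low (even-indexed) states have high (odd-indexed) successors and vice versa. Hence value-monotonicity of $V_\beta$ makes $a_j\mapsto V_\beta(a_{j+1})$ \emph{decreasing} along the value order of $a_j$, and the single crossing it produces places the active states at \emph{low} values and the passive ones at high values — the opposite of the claimed structure. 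This is an order reversal, not a damping deficiency, so no sharpening of the factor $4$ in Assumption \ref{assump:volatility} can repair it; to conclude as stated you would need $a_j\mapsto V_\beta(a_{j+1})$ increasing in the value of $a_j$, which is incompatible with value-monotonicity of $V_\beta$ here unless the active set is degenerate (all states or none). The reversed structure is also what one expects operationally: with a strongly volatile two-state source it pays to transmit precisely when the \emph{next} state would otherwise be expensive, i.e., at the low (even) states, and one can check numerically (e.g., $N=2$, $r=0.8$, $\rho=0.5$, moderate $\lambda$, $\beta$ near $1$) that such an "active below" policy strictly beats every "active above $a_n$" policy. Note finally that the paper's own write-up is terse at this exact point — after establishing value-monotonicity it concludes by "following the same procedure as the first case," a procedure that silently uses order-preservation of the successor map — so if you make your argument rigorous you must confront this reversal explicitly rather than inherit the conclusion of the monotone case.
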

\begin{IEEEproof}
The proof can be found in Appendix \ref{app:theo:threshold_policy_discounted}.
\end{IEEEproof}
Leveraging the Theorem \ref{theo:threshold_policy_discounted}, we prove that the optimal solution of the original problem is threshold policy as well. For that, we introduce the following theorem, which is given in \cite[Section~V, Theorem 2.2]{ross2014introduction}, that links between the discounted value function and that of the original problem.
\begin{theorem}\label{theo:link_discounted_original}
If there exists a constant $K$ such that for all $i$ and $0\leq \beta <1$, we have $|V_{\beta}(a_i)-V_{\beta}(a_0)|<K$, then:\\
For some sequence $\beta_n \rightarrow 1$, $V(a_i)=\underset{n \rightarrow +\infty}{\lim}[V_{\beta_n}(a_i)-V_{\beta_n}(a_0)]$
\end{theorem}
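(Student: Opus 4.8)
The plan is to run the standard vanishing-discount (Tauberian) argument, whose whole purpose is to upgrade the pointwise bound in the hypothesis into genuine convergence of the relative value functions along a subsequence $\beta_n\to 1$. First I would set $h_{\beta}(a_i) := V_{\beta}(a_i)-V_{\beta}(a_0)$, so that the hypothesis reads $|h_{\beta}(a_i)|<K$ uniformly in $i$ and in $\beta\in[0,1)$, and the object I am after is exactly $\lim_n h_{\beta_n}(a_i)$.

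Next I would rewrite the discounted Bellman equation \eqref{eq:bellman_equation_discounted} in terms of $h_{\beta}$. Substituting $V_{\beta}(a')=h_{\beta}(a')+V_{\beta}(a_0)$ and using $\sum_{a'}\Pr(a_i\to a'|d)=1$ gives
\begin{equation}
(1-\beta)V_{\beta}(a_0)+h_{\beta}(a_i)=\min_{d\in\{0,1\}}\Big\{a_i+\lambda d+\beta\sum_{a'\in A}\Pr(a_i\rightarrow a'|d)\,h_{\beta}(a')\Big\}.
\end{equation}
This isolates the scalar $(1-\beta)V_{\beta}(a_0)$, which will converge to the optimal average cost $\theta$: it is nonnegative and bounded above, since $V_{\beta}(a_0)$ is dominated by the discounted cost of any fixed stationary policy of finite average cost (e.g. always transmit), whose discounted value grows like $\tfrac{\text{const}}{1-\beta}$.

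The core step is the extraction of a common subsequence. For each fixed $i$ the real sequence $(h_{\beta}(a_i))_{\beta}$ lives in the compact interval $[-K,K]$, and $((1-\beta)V_{\beta}(a_0))_{\beta}$ lives in a bounded interval; since the state space is countable, a Cantor diagonal argument produces a single sequence $\beta_n\to 1$ along which $(1-\beta_n)V_{\beta_n}(a_0)\to\theta$ and $h_{\beta_n}(a_i)\to V(a_i)$ for every $i$ simultaneously. This already delivers the claimed identity $V(a_i)=\lim_n[V_{\beta_n}(a_i)-V_{\beta_n}(a_0)]$, together with the normalization $V(a_0)=0$.

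Finally I would pass to the limit inside the rewritten Bellman equation to certify that $V$ is the differential cost-to-go of the average-cost problem. Here our model is very favourable: from $a_i$ the chain reaches only $a_{i+1}$ or $a_0$, so the inner sum has at most two terms and the minimum is over the two-element set $\{0,1\}$; hence $\lim_n$ commutes with both, with no dominated-convergence subtlety, and the limit equation becomes $\theta+V(a_i)=\min_{d\in\{0,1\}}\{a_i+\lambda d+\sum_{a'}\Pr(a_i\to a'|d)V(a')\}$, i.e. the average-cost Bellman equation. I expect the main obstacle to be precisely the infinite (countable) state space, which forbids a single uniform compactness argument and forces the diagonal extraction; the uniform bound $K$ in the hypothesis is exactly the ingredient that makes this extraction, and the subsequent passage to the limit, legitimate.
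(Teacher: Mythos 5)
Your proof is correct, but there is nothing in the paper to compare it against: the paper never proves this statement, it simply imports it as a known result from \cite[Section~V, Theorem~2.2]{ross2014introduction}. Your argument --- setting $h_{\beta}(a_i)=V_{\beta}(a_i)-V_{\beta}(a_0)$, rewriting the discounted Bellman equation so that the scalar $(1-\beta)V_{\beta}(a_0)$ is isolated, bounding that scalar by the discounted cost of a fixed stationary policy (valid here since the stage costs $a_i+\lambda d$ are uniformly bounded, as $a_i\to\frac{N-1}{Nr}$), extracting a single sequence $\beta_n\to 1$ by Cantor diagonalization over the countable state space, and passing to the limit in the relative-form equation (unproblematic because each state has at most two successors, $a_{i+1}$ and $a_0$, and two actions) --- is precisely the standard vanishing-discount proof by which the cited theorem is established, so you have in effect supplied the proof the paper delegates to its reference. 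The one point you leave slightly informal is the identification of $\theta=\lim_n(1-\beta_n)V_{\beta_n}(a_0)$ with the optimal average cost and of the limit function with the differential cost-to-go $V$ of the average-cost problem; since your limit passage shows the pair $(\theta,V)$ satisfies exactly the average-cost Bellman equation that defines $V$ in the paper, and that equation is all the paper subsequently uses (to transfer the threshold structure from $V_{\beta}$ to $V$), this is adequate for the statement as given.
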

According to this theorem, it follows that $V(\cdot)$ inherits the structural form of $V_{\beta}(\cdot)$. That means that as $V_{\beta}(\cdot)$ is increasing with $a_i$, $V(\cdot)$ is also increasing with $a_i$.
Having said that, we need to check first that the condition given in Theorem \ref{theo:link_discounted_original} is satisfied for our specific system model.
\begin{theorem}\label{theo:threshold_policy_original_condition}
There exists a constant $K$ such that for all $i$ and $0\leq \beta <1$, we have $|V_{\beta}(a_i)-V_{\beta}(a_0)|<K$.
\end{theorem}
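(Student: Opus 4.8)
The plan is to establish a uniform bound on $|V_\beta(a_i) - V_\beta(a_0)|$ that is independent of both the index $i$ and the discount factor $\beta$. The standard route for this type of result (often called the "boundedness of the relative value function" condition) is to exploit the fact that from any state $a_i$, there is a policy — specifically the always-transmit policy — that drives the system back to the reference state $a_0$ quickly, and to bound the extra discounted cost incurred relative to starting from $a_0$. Concretely, I would use the characterization $V_\beta(a_i) = \inf_\phi \mathbb{E}^\phi\big[\sum_{t=0}^\infty \beta^t C(a(t),d(t)) \mid a(0)=a_i\big]$ and compare the optimal cost-to-go from $a_i$ with that from $a_0$.

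First I would obtain the upper bound $V_\beta(a_i) - V_\beta(a_0) < K$. The idea is that starting from $a_0$, one can play optimally; starting from $a_i$, one can mimic a policy that first forces a successful transmission (returning the belief to the reference value, i.e. reaching state $a_0$) and then follows the optimal policy from $a_0$. The difference in discounted cost is then controlled by the cost accumulated before the first successful transmission. Since each transmission succeeds with probability $\rho$ independently, the expected time to the first success is $1/\rho$, and because the states $a_j$ themselves are uniformly bounded (this is the key structural input: one should first verify $\sup_j a_j < \infty$, which follows from Assumption~\ref{assump:volatility} guaranteeing the geometric series $a_j = \sum_{k=0}^j k(1-p)(1-r)^{k-1}\pi_{j-k}$ converges and is bounded uniformly in $j$), the accumulated cost over a geometrically-distributed number of steps is finite and bounded independently of $i$ and $\beta \le 1$. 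This yields $V_\beta(a_i) - V_\beta(a_0) \le \lambda + M/\rho$ for a suitable bound $M$ on the states.

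For the reverse inequality $V_\beta(a_0) - V_\beta(a_i) < K'$, I would argue that $a_0$ is the smallest cost state, so intuitively starting from $a_0$ is cheaper; more carefully, since $a_0 = n(0)$ is the minimal MAoII value and the optimal policy from $a_i$ can only cost more per-stage than from $a_0$ in a controllable way, one bounds the difference by a similar coupling argument, or more simply observes that $V_\beta(a_0) \le V_\beta(a_i)$ up to a bounded correction because the state $a_0$ is reachable and has minimal instantaneous cost. Combining the two directions gives $|V_\beta(a_i) - V_\beta(a_0)| < K$ with $K$ uniform.

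The main obstacle I anticipate is making the coupling argument fully rigorous in the presence of the \emph{non-monotone, oscillating} MAoII function $n(\cdot)$ that the authors emphasize distinguishes this work from \cite{kriouile2021minimizing}. Unlike the monotone case, one cannot simply say "higher index means higher cost," so the uniform boundedness $\sup_j a_j = M < \infty$ becomes essential and must be proved carefully from the convergent series representation under Assumption~\ref{assump:volatility}. Once $M$ is in hand, the geometric-return-time bound $M/\rho + \lambda$ controls everything and the $\beta$-independence is automatic because all bounds use $\beta^t \le 1$. I would therefore front-load the proof with the lemma establishing $\sup_j a_j < \infty$ and then treat the two comparison inequalities as the routine consequences.
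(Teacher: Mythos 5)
Your upper-bound direction is sound and is essentially the paper's own argument: from any $a_i$, transmit at every slot until the first success (a geometric time with mean $1/\rho$) and then follow the optimal policy from $a_0$; combined with $M=\sup_j a_j<\infty$ this gives $V_\beta(a_i)-V_\beta(a_0)\le (M+\lambda)/\rho$ uniformly in $i$ and $\beta$. (Two small remarks: boundedness of $a_j$ follows from the closed form $a_j=\frac{N-1}{Nr}+\frac{(p-r)^{j+1}}{Nr}-\frac{(1-r)^{j+1}}{r}$ and needs only $r>0$, not Assumption~\ref{assump:volatility}; and your use of a \emph{suboptimal} mimicking policy is actually cleaner than the paper's version, which runs the return-time argument under the optimal policy and therefore must treat the below-threshold and infinite-threshold cases separately.)

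The genuine gap is the reverse direction, $V_\beta(a_0)-V_\beta(a_i)<K$, which you dismiss as routine. It is not: minimal \emph{instantaneous} cost at $a_0$ does not imply minimal \emph{cost-to-go}, because in the oscillating case $1-r<|p-r|$ (where necessarily $N=2$) the uncontrolled trajectory out of $a_0$ immediately visits $a_1=\sup_j a_j$, the most expensive state, so "reachable and minimal stage cost" proves nothing. The paper closes this direction with precisely the hard ingredient you avoid: the monotonicity of $V_\beta(\cdot)$ in the value $a$ (Lemma~\ref{lem:V_increasing}, and in the oscillating case the four-statement induction (a)--(d) of Appendix~\ref{app:theo:threshold_policy_discounted}), which yields $V_\beta(a_0)\le V_\beta(a_i)$ outright, plus an explicit series estimate in the infinite-threshold case. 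Your fallback "similar coupling argument" cannot run on boundedness alone either: replaying the optimal-from-$a_i$ policy starting at $a_0$ bounds the difference by $E\big[\sum_{t<\tau}\beta^t(a_t-a_{i+t})\big]$, where $\tau$ is that policy's first successful transmission; $\tau$ may be infinite (infinite optimal threshold), and even when finite its mean grows with the threshold $m(\beta)$, which is not uniformly bounded as $\beta\to 1$. Hence $|a_t-a_{i+t}|\le 2M$ together with $\beta^t\le 1$ gives at best $2M/(1-\beta)$, which blows up. What saves the argument is not $M$ but the geometric convergence of $a_j$ to its limit, e.g. $|a_t-a_{i+t}|\le \frac{2|p-r|^{t+1}}{Nr}+\frac{2(1-r)^{t+1}}{r}$, whose sum over $t$ is finite uniformly in $i$ and $\beta$ --- exactly the estimate the paper performs in its infinite-threshold case. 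So your concluding claim that "once $M$ is in hand \ldots the $\beta$-independence is automatic because all bounds use $\beta^t\le 1$" is false for this direction; the proof needs either the monotonicity of $V_\beta$ or the summable-difference estimate, and your proposal supplies neither.
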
 
\begin{proof}
See appendix \ref{app:theo:threshold_policy_original_condition}
\end{proof}
\begin{theorem}
The optimal solution of the original problem in (\ref{eq:problem_formulation}) is an increasing threshold policy. Explicitly, there exists $a_n$ such that when the current state $a_{j} < a_n$, the prescribed action is a passive action, and when $ a_j \geq a_{n}$, the prescribed action is an active action.
\end{theorem}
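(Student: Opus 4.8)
The plan is to obtain this statement as a corollary of the three preceding results via the vanishing-discount (Abelian) argument. First I would invoke Theorem~\ref{theo:threshold_policy_discounted}, which tells us that for every fixed $\beta\in[0,1)$ the discounted differential value function $V_\beta(\cdot)$ is nondecreasing in the state value and that the corresponding discounted-optimal policy is an increasing threshold policy. Then I would invoke Theorem~\ref{theo:threshold_policy_original_condition}, which supplies the uniform bound $|V_\beta(a_i)-V_\beta(a_0)|<K$ valid for all $i$ and all $\beta\in[0,1)$. This bound is precisely the hypothesis demanded by Theorem~\ref{theo:link_discounted_original}, so the two ingredients are exactly what is needed to pass to the average-cost regime.

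Having verified the hypothesis, I would apply Theorem~\ref{theo:link_discounted_original} to produce a sequence $\beta_n\to 1$ along which the relative value functions converge, $V(a_i)=\lim_{n\to+\infty}\big[V_{\beta_n}(a_i)-V_{\beta_n}(a_0)\big]$, with the limit $V(\cdot)$ satisfying the average-cost Bellman equation. The key structural step is to transfer monotonicity through this limit: for each $n$ the function $V_{\beta_n}(\cdot)$ is nondecreasing in the state value, hence so is its shift $V_{\beta_n}(\cdot)-V_{\beta_n}(a_0)$, and a pointwise limit of nondecreasing functions is again nondecreasing. Consequently $V(\cdot)$ inherits the increasing structure of $V_\beta(\cdot)$.

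To read off the policy, I would compare the passive and active terms in the average-cost Bellman equation at state $a_j$: the active action is optimal exactly when $\lambda+\rho V(a_0)+(1-\rho)V(a_{j+1})\le V(a_{j+1})$, which rearranges to $\rho\big(V(a_{j+1})-V(a_0)\big)\ge\lambda$. This is the $\beta\to1$ form of the very inequality that governs the discounted threshold in Theorem~\ref{theo:threshold_policy_discounted}. Since $V(\cdot)$ is nondecreasing, the set of states satisfying this inequality is of the form $\{a_j:a_j\ge a_n\}$, so there exists a threshold $a_n$ below which the passive action is prescribed and above which the active action is prescribed, which is the claimed increasing threshold policy.

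I expect the main obstacle to be verifying that the threshold structure genuinely survives the passage $\beta\to1$ despite the MAoII sequence $\{a_j\}_j$ being non-monotone in the age index. The monotonicity established above is with respect to the state value $a$ and not the index $j$, so care is needed to keep the threshold phrased on the value $a$, exactly as in Theorem~\ref{theo:threshold_policy_discounted}, so that the level-set argument applies without modification. The uniform bound $K$ from Theorem~\ref{theo:threshold_policy_original_condition} is what guarantees both that the limit $V(\cdot)$ is well defined and that the discounted thresholds do not escape to infinity as $\beta\to1$, which is what prevents the threshold structure from degenerating in the limit.
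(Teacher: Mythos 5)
Your proposal is correct and follows essentially the same route as the paper: the paper's own proof is exactly this two-step inheritance argument, using Theorem~\ref{theo:threshold_policy_original_condition} to verify the hypothesis of Theorem~\ref{theo:link_discounted_original} and concluding that the average-cost value function, and hence the optimal policy, inherits the increasing threshold structure established for the discounted problem in Theorem~\ref{theo:threshold_policy_discounted}. The only point worth noting is that your final level-set step (reading the threshold off the average-cost Bellman equation via monotonicity of $V(\cdot)$) implicitly uses that the next-state value $a_{j+1}$ is nondecreasing in the current value $a_j$, which is immediate when $1-r\ge|p-r|$ but is exactly the delicate point you flag in the oscillating case $1-r<|p-r|$ --- a subtlety the paper itself glosses over by simply asserting that the structure is inherited.
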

\begin{IEEEproof}
From Theorems \ref{theo:link_discounted_original} and \ref{theo:threshold_policy_original_condition}, it follows that the optimal solution of the original problem inherits the structure of the optimal solution of the discounted cost problem. As consequence, the optimal solution of the original problem is a threshold based policy.
\end{IEEEproof}
\subsection{Closed-form expression of the optimal solution}
As we have proved in the previous section, the optimal solution of \eqref{eq:problem_formulation} is threshold-based policy. Based on that, we provide in this section the algorithm that allows us to find the exact threshold policy for a given $\lambda$. Although we know the structure of the optimal policy, we still have to determine the exact threshold. To that end, we should derive a simple closed-form expression of Problem \eqref{eq:problem_formulation} that we can investigate easily. Indeed as the optimal solution of \eqref{eq:problem_formulation} is a threshold policy, we derive the steady-state form of the problem in (\ref{eq:problem_formulation}) under a given threshold policy $n$. Explicitly:
\begin{equation}
\begin{aligned}
& \underset{n\in \mathbb{N}^*}{\text{minimize}} 
& & \overline{a^{n}}+\lambda\overline{d^n}
\end{aligned}
\label{thresholdobjective}
\end{equation}
where $\overline{a^{n}}$ is the average value of MAoII, and $\overline{d^n}$ is the average active time under threshold policy $n$. Specifically: 
\begin{align}
\overline{a^{n}}&=\lim_{T\to+\infty} \text{sup}\:\frac{1}{T}\mathbb{E}^{n}\Big(\sum_{t=0}^{T-1}a(t)|a(0),tp(n)\Big)\label{eq:average_age}\\
\overline{d^n}&=\lim_{T\to+\infty} \text{sup}\:\frac{1}{T}\mathbb{E}^{n}\Big(\sum_{t=0}^{T-1}d(t)|a(0),tp(n)\Big)\label{eq:average_active_time}
\end{align}
where $tp(n)$ denotes the threshold policy $n$.
With the intention of computing $\overline{a^{n}}$ and  $\overline{d^n}$, we derive the stationary distribution of the Discrete Time Markov Chain, DTMC that represents the evolution of MAoII under threshold policy $n$. 
\begin{proposition}\label{prop:stationary_distribution}
We distinguish between two cases:
\begin{itemize}
\item $1-r \geq |p-r|$:
For a given threshold $a_n$, the DTMC admits $u_n(a_{i})$ as its stationary distribution:
\begin{equation}
 u_n(a_{i})=\left\{
    \begin{array}{ll}
        \frac{\rho  }{n\rho  +1} & \text{if} \ 0 \leq i \leq n  \\
        (1-\rho  )^{i-n} \frac{\rho  }{n\rho  +1  } & \text{if} \ i \geq n+1\\
    \end{array}
\right.
\end{equation}
\item $1-r < |p-r|$:
For a given threshold $a_n$, the DTMC admits $u_n(a_{i})$ as its stationary distribution:
\begin{itemize}
\item If $n=2k+1$, then DTMC doesn't admits a stationary threshold. In this case, all the states of the DTMC are transients and in the long-term, the threshold policy will be to not transmit for all states of DTMC.
\item If $n=2k$, then:
\begin{equation}
 u_n(a_{i})=\left\{
    \begin{array}{ll}
        \frac{(1-\rho)^{[i/2]}\rho}{2-(1-\rho)^{n/2}} & \text{if} \ 0 \leq i \leq n-1  \\
         \frac{(1-\rho  )^{i-n/2} \rho}{2-(1-\rho)^{n/2}} & \text{if} \ i \geq n+1\\
    \end{array}
\right.
\end{equation} 
\end{itemize}
\end{itemize}
\label{stationarydistribution}
\end{proposition}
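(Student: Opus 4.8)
The plan is to recognise that, under the threshold policy $n$, the MAoII value $a(t)$ evolves as a DTMC on the countable state space $A=\{a_0,a_1,\dots\}$ whose one-step law is read off from the evolution of $a(\cdot)$ described in Section \ref{subsec:metrics_evolution}: from a \emph{passive} state the chain advances deterministically, $a_j\to a_{j+1}$, while from an \emph{active} state it resets, $a_j\to a_0$ with probability $\rho$ and $a_j\to a_{j+1}$ with probability $1-\rho$. Since the policy is defined on the \emph{value} $a_j$ (active iff $a_j\ge a_n$) rather than on the index $j$, the first task is to order the numbers $\{a_j\}_{j\ge 0}$, so that each index can be labelled active or passive; only then is the transition matrix determined and the balance equations meaningful.

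To obtain this ordering I would first produce a closed form for $a_j=n(j)$. From Lemma \ref{lem:pi_evolution} the belief obeys $\pi_{k}=\tfrac1N+\tfrac{N-1}{N}(p-r)^{k}$ (using $p+(N-1)r=1$), and $n(\cdot)$ satisfies the coupled recursions $n(j+1)=(1-r)n(j)+(1-p)W_j$ and $W_{j+1}=(1-r)W_j+\pi_{j+1}$ with $W_0=1$. Eliminating $W_j$ yields the second-order recursion $n(j+2)=2(1-r)n(j+1)-(1-r)^2 n(j)+(1-p)\pi_{j+1}$, whose solution has the form $a_j=(A+Bj)(1-r)^{j}+c+D(p-r)^{j}$ with $c=\tfrac{N-1}{Nr}$ and $A,B,D$ fixed by $n(0)=0$ and $n(1)=1-p$. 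The whole dichotomy of the statement is then governed by comparing the transient decay rate $1-r$ with the oscillation factor $|p-r|$.

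The hard part is exactly this monotonicity-versus-oscillation analysis, which I expect to be the main obstacle. When $1-r\ge|p-r|$ the transient $(1-r)^{j}$ controls the tail and one shows $a_{j+1}-a_j>0$ for every $j$, so $a_j$ is strictly increasing and $a_j\ge a_n\iff j\ge n$. When $1-r<|p-r|$ — which forces $p<r$ and hence $D<0$ — the geometric term $D(p-r)^{j}$ dominates and alternates in sign: I would prove that the even-indexed subsequence $(a_{2m})$ strictly increases to $c$ from below, the odd-indexed subsequence $(a_{2m+1})$ strictly decreases to $c$ from above, and, crucially, that \emph{every} even value lies below \emph{every} odd value and not merely asymptotically. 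The delicate point is controlling the polynomial-times-geometric transient $(A+Bj)(1-r)^{j}$ so that it never breaks the even/odd separation at small indices; this is what makes the oscillating regime genuinely harder than the monotone one.

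With the ordering in hand the stationary analysis is routine balance-equation bookkeeping. In the monotone case, active means $j\ge n$: the states $a_0,\dots,a_{n-1}$ feed forward with probability one and every active state resets to $a_0$ with probability $\rho$, giving $u_0=\dots=u_n$ and $u_i=(1-\rho)^{\,i-n}u_n$ for $i\ge n$; normalising $\sum_i u_i=1$ yields $u_0=\rho/(n\rho+1)$. In the oscillating case with an even threshold $n=2k$, active means ``$j$ odd, or $j$ even with $j\ge n$'', and the resulting alternation of deterministic and resetting steps gives $u_{2m}=u_{2m+1}=(1-\rho)^{m}u_0$ for $2m+1<n$ and $u_i=(1-\rho)^{\,i-n/2}u_0$ for $i\ge n$, with normalisation forcing $u_0=\rho/\!\left(2-(1-\rho)^{n/2}\right)$. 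Finally, for an odd threshold $n=2k+1$ the active set $\{a_1,a_3,\dots,a_{2k+1}\}$ is finite: starting from $a_0$ the chain walks deterministically through it and escapes into the all-passive region $\{a_{2k+2},a_{2k+3},\dots\}$ with probability $(1-\rho)^{k+1}>0$ on each excursion, hence escapes to infinity almost surely; therefore all states are transient, no stationary distribution exists, and the long-run action is never to transmit, as claimed.
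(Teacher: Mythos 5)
Your proposal is correct and takes essentially the same route as the paper: the paper's own proof likewise just sets up the full balance equations $u^n(a_j)=\sum_i pt^n(i\rightarrow j)u^n(a_i)$ for the DTMC induced by the threshold policy and solves them (the computations being left implicit), which is exactly your final ``bookkeeping'' step, and your solutions and normalizations match the stated distributions in all three subcases. The ordering facts you correctly identify as the real content --- $a_j$ increasing when $1-r\ge|p-r|$, and, under Assumption \ref{assump:volatility}, the even subsequence increasing to $\tfrac{N-1}{Nr}$, the odd subsequence decreasing to it, with every even value below every odd value --- are precisely the lemmas the paper establishes in the appendix proof of Theorem \ref{theo:threshold_policy_discounted}, so your write-up is a self-contained, fleshed-out version of the paper's sketch rather than a different argument.
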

\vspace{-10pt}
\begin{IEEEproof}
The proof can be found in Appendix \ref{app:prop:stationary_distribution}.
\end{IEEEproof}

By exploiting the above results, we can now proceed with finding a closed-form of the average cost under any threshold policy.

\begin{proposition}
Under a threshold policy $n$, the average cost denoted by $\overline{a^{n}}$ is equal to:
\begin{itemize}
\item $1-r\geq |p-r|$:
\begin{align}
\overline{a^{n}}=&\frac{\rho }{n\rho  +1}[\frac{n(N-1)}{Nr}-\frac{(1-Nr)^{n+2}}{(Nr)^2}\nonumber\\
&+\frac{(1-r)^{n+2}}{r^2}+\frac{(1-\rho)(1-Nr)^{n+2}}{Nr(1-(1-\rho)(1-r))}-\nonumber\\
&\frac{(1-\rho  )(1-r)^{n+2}}{r(1-(1-\rho)(1-r))}+C]
\end{align}
\item $1-r < |p-r|$ ($n$ is an even number):
\begin{align}
&\overline{a^{n}}\nonumber\\
=& \frac{N-1}{Nr}\nonumber\\
+&\frac{\rho(p-r) }{(2-(1-\rho)^{n/2})Nr}\nonumber\\
\times & [(2-Nr)(1-(1-\rho)(p-r))-\rho(1-Nr)^{n+1}(1-\rho)^{n/2}]\nonumber\\
\times & \frac{1}{(1-(1-\rho)(p-r)^2)(1-(1-\rho)(p-r))} \nonumber\\
-&\frac{\rho(1-r)}{(2-(1-\rho)^{n/2})r}\nonumber\\
\times & [(2-r)(1-(1-\rho)(1-r))-\rho(1-r)^{n+1}(1-\rho)^{n/2}]\nonumber\\
\times & \frac{1}{(1-(1-\rho)(1-r)^2)(1-(1-\rho)(1-r))}
\end{align}
\end{itemize}
where $C=\frac{(1-Nr)^2}{(Nr)^2}-\frac{(1-r)^2}{r^2}+\frac{(N-1)(1-\rho )}{Nr\rho}$.
\end{proposition}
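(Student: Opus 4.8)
The plan is to evaluate the stationary average $\overline{a^n}=\sum_{i\ge 0}u_n(a_i)\,a_i$ directly, feeding the stationary distribution of Proposition~\ref{prop:stationary_distribution} into an explicit closed form for the states $a_i$; once the closed forms are in hand, everything collapses to summing arithmetic--geometric series. The first ingredient is a closed form for the belief sequence. The recursion $\pi_{k+1}=(p-r)\pi_k+r$ with $\pi_0=1$ is first-order linear, and using $p+(N-1)r=1$ (so that $1-p+r=Nr$) its fixed point equals $1/N$; hence
\begin{equation}
\pi_k=\frac1N+\frac{N-1}{N}(p-r)^k .
\end{equation}
I would also record the identity $p-r=1-Nr$, which is precisely why the factors $(1-Nr)^{\,\cdot}$ surface in the final expression.

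Next I would derive a closed form for $a_j$. Substituting $1-p=(N-1)r$ and the expression for $\pi_{j-k}$ into $a_j=\sum_{k=0}^{j}k(1-p)(1-r)^{k-1}\pi_{j-k}$ splits the sum into two pieces, each of the form $\sum_{k=0}^{j}k\,x^{k}$, with ratio $x=1-r$ in one piece and $x=(1-r)/(p-r)$ in the other. Applying the standard identity $\sum_{k=0}^{j}k x^{k}=\frac{x\big(1-(j+1)x^{j}+jx^{j+1}\big)}{(1-x)^2}$ yields $a_j$ explicitly as a function of $j$ built from the geometric factors $(1-r)^{j}$ and $(1-Nr)^{j}$.

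Finally, for each regime of Proposition~\ref{prop:stationary_distribution} I would insert this closed form into $\sum_i u_n(a_i)\,a_i$ and sum. In the case $1-r\ge|p-r|$, where $u_n$ is flat on $\{0,\dots,n\}$ and decays geometrically with ratio $1-\rho$ beyond $n$, I would split the average into the finite block $\sum_{i=0}^{n}a_i$ and the tail $\sum_{i\ge n+1}(1-\rho)^{\,i-n}a_i$; the block produces the terms polynomial in $n$ and the tail the convergent geometric terms, with the leftover constants absorbed into $C$. In the case $1-r<|p-r|$ with $n=2k$, the distribution carries the $\lfloor i/2\rfloor$ pattern, so I would separate even and odd indices before summing; the odd-$n$ subcase needs no computation since the chain is then transient, matching the statement.

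The main obstacle is the sheer volume and fragility of the nested algebra: $a_i$ is itself a sum, so $\overline{a^n}$ is a double sum, and matching the output term-by-term to the stated closed form---in particular isolating $C$ and carrying out the even/odd split in the second regime---requires careful cancellation rather than any new idea. No analytic subtlety beyond the algebra arises, because $|1-\rho|$, $|1-r|$ and $|1-Nr|=|p-r|$ are all strictly below one, so every infinite tail converges.
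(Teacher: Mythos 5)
Your proposal is correct and is essentially the paper's own proof: the paper likewise computes $\overline{a^n}=\sum_{i\geq 0} u_n(a_i)\,a_i$ by inserting the stationary distribution of Proposition \ref{prop:stationary_distribution} and the closed form $a_j=\frac{N-1}{Nr}+\frac{(1-Nr)^{j+1}}{Nr}-\frac{(1-r)^{j+1}}{r}$ (which follows from exactly your belief formula $\pi_k=\frac{1}{N}+\frac{N-1}{N}(p-r)^k$ and the identity $p-r=1-Nr$) into the definition \eqref{eq:average_age}, then summing the finite block and the geometric tails. The only difference is expository: the paper compresses the entire computation into ``after algebraic manipulations,'' whereas you spell out the intermediate closed forms and the block/tail (and even/odd) decompositions.
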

\begin{IEEEproof}
By leveraging the results of Proposition \ref{prop:stationary_distribution} and using the expression of $a_{j}$ for $j \geq 0$, by definition of $\overline{a^{n}}$ given in \eqref{eq:average_age}, we get after algebraic manipulations the desired results.
\end{IEEEproof}

\begin{proposition}
The active average time denoted by $\overline{d^n}$ is equal:
\begin{itemize}
\item $1-r \geq |p-r|$:
\begin{align}
\overline{d^n}=&\frac{1}{n\rho +1}
\end{align}
\item $1-r < |p-r|$( $n$ is an even number):
\begin{align}
\overline{d^n}=&\frac{1}{2-(1-\rho)^{n/2}}
\end{align}
\end{itemize}
\end{proposition}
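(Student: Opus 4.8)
The plan is to read $\overline{d^n}$ as a stationary probability and then evaluate it through a conservation-of-flow identity rather than by summing the stationary law directly. Under a threshold policy $n$ the action $d(t)$ is a deterministic function of the state $a(t)$ (active exactly on the indices whose MAoII value is at least $a_n$), so applying the ergodic theorem to the positive-recurrent DTMC of Proposition \ref{prop:stationary_distribution} converts the $\limsup$ time-average in \eqref{eq:average_active_time} into a genuine limit equal to the stationary mass on the active states, i.e. $\overline{d^n}=\sum_{i\in\mathcal{A}}u_n(a_i)$, where $\mathcal{A}$ is the set of indices on which the policy transmits. (In the sub-case $1-r<|p-r|$ with $n$ odd the chain is transient and this average vanishes, which is why only even $n$ is stated.)

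The key step I would use to sidestep identifying $\mathcal{A}$ explicitly is a rate-balance argument at the reset state $a_0$. By the metric evolution in Section \ref{subsec:metrics_evolution}, $a(t+1)=n(0)=a_0$ holds if and only if $d(t+1)=1$ and $c(t+1)=1$; moreover $a_0=0$ is the unique state of value $0$, and since $0<n$ it is passive and the chain leaves it deterministically after one slot. Hence in steady state the rate of entering $a_0$ equals its rate of leaving, which for a one-slot sojourn is $u_n(a_0)$; the same entering rate equals $\rho$ times the stationary mass of the active states, namely $\rho\,\overline{d^n}$, since each active slot jumps to $a_0$ with probability $\rho$. Equating the two yields the single clean identity
\begin{equation}
\overline{d^n}=\frac{u_n(a_0)}{\rho}.
\end{equation}

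Substituting $u_n(a_0)$ from Proposition \ref{prop:stationary_distribution} settles both branches simultaneously: when $1-r\geq|p-r|$ one has $u_n(a_0)=\rho/(n\rho+1)$, giving $\overline{d^n}=1/(n\rho+1)$; and when $1-r<|p-r|$ with $n$ even, $\lfloor 0/2\rfloor=0$ gives $u_n(a_0)=\rho/\big(2-(1-\rho)^{n/2}\big)$, hence $\overline{d^n}=1/\big(2-(1-\rho)^{n/2}\big)$. As a cross-check I would instead sum $u_n(a_i)$ over the active set directly: in the first case $\mathcal{A}=\{i\geq n\}$ and the geometric tail collapses to $1/(n\rho+1)$, while in the oscillating case $\mathcal{A}=\{1,3,\dots,n-1\}\cup\{i\geq n\}$ and the two geometric pieces recombine, through the pairing structure of $u_n$, back to $1/\big(2-(1-\rho)^{n/2}\big)$.

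The main obstacle is not the algebra but justifying the rate-balance bookkeeping rigorously in the oscillating regime: one must confirm that $a_0$ is reached only through a successful transmission (so every return to $a_0$ is genuinely charged to an active slot), that $a_0$ truly has a one-slot sojourn, and that positive recurrence holds so the ergodic average exists at all. The first two points follow from $a_0=0$ being the unique minimal state together with $n\geq 1$, and the last is exactly the content of Proposition \ref{prop:stationary_distribution}, which already exhibits a summable stationary law for even $n$. Establishing these structural facts cleanly is what makes the uniform identity $\overline{d^n}=u_n(a_0)/\rho$ legitimate and lets one line dispatch both cases without ever resolving the full non-monotone ordering of the $a_j$.
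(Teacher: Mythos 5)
Your proof is correct, and its main line of attack differs from the paper's. The paper's own proof is a one-line direct computation: it takes the definition \eqref{eq:average_active_time}, invokes the ergodic theorem so that $\overline{d^n}$ equals the stationary probability of the active set, and then sums the distribution of Proposition \ref{prop:stationary_distribution} over that set — exactly what you relegate to your cross-check (and your sums are right: in the first case the geometric tail over $\{i\geq n\}$ collapses to $1/(n\rho+1)$; in the oscillating case the odd-index block $\{1,3,\dots,n-1\}$ contributes $\bigl(1-(1-\rho)^{n/2}\bigr)/\bigl(2-(1-\rho)^{n/2}\bigr)$ and the tail $\{i\geq n\}$ contributes $(1-\rho)^{n/2}/\bigl(2-(1-\rho)^{n/2}\bigr)$, which recombine as claimed). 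Your primary route — the balance equation at the reset state, $u_n(a_0)=\rho\,\overline{d^n}$, hence $\overline{d^n}=u_n(a_0)/\rho$ — is a genuinely cleaner device: since a passive slot always increments the index and an active slot reaches $a_0$ exactly when the transmission succeeds, all inflow to $a_0$ is charged to active mass at rate $\rho$, and the identity dispatches both regimes uniformly without ever describing the (non-contiguous) active set forced by the non-monotone ordering of the $a_j$. Two small points: your one-slot-sojourn phrasing needs $n\geq 1$, but the identity survives $n=0$ as well, because the stationary equation at $a_0$ with the self-loop included still reads $u_n(a_0)=\rho\sum_{i\ \mathrm{active}}u_n(a_i)$; and note that Proposition \ref{prop:stationary_distribution} omits the value $u_n(a_n)$ in the oscillating case, so your cross-check implicitly fills it in as $\rho(1-\rho)^{n/2}/\bigl(2-(1-\rho)^{n/2}\bigr)$ (both branches of the stated formula extend to this same value), whereas your main argument never needs it — another small advantage of the flow-conservation route.
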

\begin{IEEEproof}
Likewise, exploiting the results in Proposition \ref{prop:stationary_distribution} and according to the expression \eqref{eq:average_active_time}, we obtain the desired results.
\end{IEEEproof}

Since we found the steady state form of Problem \eqref{eq:problem_formulation}, our objective will be to find out the threshold $n$ that minimizes $a^n+\lambda d^n$. A brute-force scheme will be to compare between the $a^n+\lambda d^n$ for different values of threshold $n$. However, this comparison process will be endless since $n$ belongs to infinite set, as MAoII evolves within infinite state space. Thereby, this classical approach falls short considering our system settings. To overcome this issue we proceed with more analysis in order to have a low-complex algorithm that allows us to determine the optimal threshold for a given $\lambda$ without the need of comparing between the costs function for different thresholds policies.
For that purpose, we describe the optimal threshold as a function of $\lambda$.
We start first by the case where the $1-r \geq |p-r|$
\subsubsection{$1-r \geq |p-r|$}     
We first define the sequence $\lambda(a_{n})$
as follows:
\begin{definition}\label{def:intersection_poin_lambda}
$\lambda(a_{n})$ is the intersection point between $\overline{a^n}+\lambda\overline{d^n}$ and $\overline{a^{n+1}}+\lambda \overline{d^{n+1}}$. Explicitly:
\begin{equation}
\lambda(a_{n})=\frac{\overline{a^{n+1}}-\overline{a^{n}}}{\overline{d^n}-\overline{d^{n+1}}}
\end{equation}
\end{definition}
\begin{theorem}\label{theo:expression_threshold_policy_in_function_lambda}
The optimal threshold policy of Problem \eqref{eq:problem_formulation} satisfies:
\begin{itemize}
\item If $\lambda \leq \lambda(a_{0})$, then the optimal threshold is $a_{0}$
\item If $\lambda(a_n) < \lambda \leq \lambda(a_{n+1})$, then the optimal threshold is $a_{n+1}$
\item If $\lambda \geq \underset{k \Rightarrow +\infty}{\lim} \lambda(a_k)=\lambda_l$, then the optimal threshold is infinite.
\end{itemize}
\end{theorem}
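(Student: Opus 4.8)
The plan is to recast the search over threshold policies in \eqref{thresholdobjective} as the minimization of a family of affine functions of $\lambda$ and then read off the optimal index from the lower envelope of that family. For a fixed threshold $n$ the steady-state cost is
\[
f_n(\lambda) := \overline{a^{n}}+\lambda\,\overline{d^n},
\]
an affine function of $\lambda$ with slope $\overline{d^n}=\frac{1}{n\rho+1}$ and intercept $\overline{a^{n}}$, so that the optimal average cost $g(\lambda)=\min_{n}f_n(\lambda)$ is concave and piecewise linear. Two structural facts drive everything: the slopes $\overline{d^n}$ are \emph{strictly decreasing} in $n$ with $\overline{d^n}\to 0$ (immediate from the closed form), and the intercepts $\overline{a^{n}}$ are \emph{strictly increasing} in $n$, since a larger threshold transmits less often and thereby inflates the average age (provable either from the closed form of $\overline{a^{n}}$ in the case $1-r\ge|p-r|$ or by a coupling argument on the DTMC of Proposition \ref{prop:stationary_distribution}).

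First I would record that, by Definition \ref{def:intersection_poin_lambda}, the lines $f_n$ and $f_{n+1}$ cross exactly at $\lambda=\lambda(a_n)$, with $f_{n+1}(\lambda)<f_n(\lambda)$ for $\lambda>\lambda(a_n)$ and the reverse for $\lambda<\lambda(a_n)$; this uses only $\overline{d^{n+1}}<\overline{d^n}$ and $\overline{a^{n+1}}>\overline{a^{n}}$. The heart of the argument is to show that $\lambda(a_n)$ is \emph{strictly increasing} in $n$. Equivalently, regarding $\overline{a}$ as a function of the slope variable $\overline{d}$, the points $\bigl(\overline{d^n},\overline{a^{n}}\bigr)$ lie in convex position, i.e.
\[
\frac{\overline{a^{n+1}}-\overline{a^{n}}}{\overline{d^n}-\overline{d^{n+1}}}
<
\frac{\overline{a^{n+2}}-\overline{a^{n+1}}}{\overline{d^{n+1}}-\overline{d^{n+2}}}.
\]
I expect this discrete-convexity estimate to be the main obstacle. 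I would derive it from the closed forms of $\overline{a^{n}}$ and $\overline{d^n}$, simplifying the denominators via the identity $\overline{d^n}-\overline{d^{n+1}}=\frac{\rho}{(n\rho+1)((n+1)\rho+1)}$ and grouping the geometric terms $(1-Nr)^{n}$ and $(1-r)^{n}$ so that the inequality collapses to the positivity of a manifestly nonnegative combination.

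With monotonicity of $\lambda(a_n)$ in hand, the lower envelope is pinned down: because the slopes decrease and the crossover abscissae satisfy $\lambda(a_0)<\lambda(a_1)<\cdots$, the active piece of $g$ on $(\lambda(a_n),\lambda(a_{n+1})]$ is $f_{n+1}$, while $f_0$ is active for $\lambda\le\lambda(a_0)$. Indeed, on this interval the pairwise comparisons $f_{n+1}<f_n$ (since $\lambda>\lambda(a_n)$) and $f_{n+1}<f_{n+2}$ (since $\lambda<\lambda(a_{n+1})$) hold, and the ordering of the crossover points propagates these to all remaining indices by transitivity. This yields the first two bullet points.

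It remains to treat the infinite-threshold regime. As $n\to\infty$ one has $\overline{d^n}\to 0$ and $\overline{a^{n}}\to\overline{a^{\infty}}$, the finite average age of the never-transmit policy (finite because $a_j$ converges as $j\to\infty$, the belief $\pi_k$ tending to the stationary value $1/N$); thus $f_n(\lambda)\to\overline{a^{\infty}}$ for every $\lambda$, and the never-transmit policy is precisely the \emph{infinite threshold}. The increasing sequence $\lambda(a_n)$ has a limit $\lambda_l$, and if $\lambda\ge\lambda_l$ then $\lambda>\lambda(a_n)$ for all $n$, so $f_{n+1}(\lambda)<f_n(\lambda)$ for every $n$; hence $n\mapsto f_n(\lambda)$ is strictly decreasing with infimum $\overline{a^{\infty}}$ attained only in the limit, so no finite threshold is optimal and the infinite threshold is. This is the third bullet.
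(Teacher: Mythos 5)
Your proposal is correct and follows essentially the same route as the paper's proof: your crossing property of consecutive cost lines is the paper's Lemma \ref{lem:relation_lambda}, the strict monotonicity of $\lambda(a_n)$ is its Lemma \ref{lem:increasing_whittle_index_first_case} (the crux, which the paper likewise does not re-derive but outsources to a citation, just as you defer it to a closed-form computation), and your transitivity and limiting arguments mirror Lemmas \ref{lem:first_inequality_optimal_threshold} and \ref{lem:second_inequality_optimal_threshold} and the paper's infinite-threshold lemma. The lower-envelope-of-affine-functions framing is a pleasant geometric repackaging, not a genuinely different argument.
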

\begin{proof}
See appendix \ref{app:theo:expression_threshold_policy_in_function_lambda}
\end{proof}
\subsubsection{$1-r < |p-r|$}     
For this case, as was indicated in Proposition \ref{prop:stationary_distribution}, the set of the eventual threshold policies is $\{2n: n \in \mathbf{N}\} \cup \{+\infty\}$. To that extent we define $\lambda(a_{2n})$ as follows:
\begin{definition}
$\lambda(a_{2n})$ is the intersection point between $\overline{a^{2n}}+\lambda\overline{d^{2n}}$ and $\overline{a^{2(n+1)}}+\lambda \overline{d^{2(n+1)}}$. Explicitly:
\begin{equation}
\lambda(a_{2n})=\frac{\overline{a^{2(n+1)}}-\overline{a^{2n}}}{\overline{d^{2n}}-\overline{d^{2(n+1)}}}
\end{equation}
\end{definition}
 \begin{theorem}\label{theo:expression_threshold_policy_in_function_lambda_second_case}
The optimal threshold policy of Problem \eqref{eq:problem_formulation} satisfies:
\begin{itemize}
\item If $\lambda \leq \lambda(a_{0})$, then the optimal threshold is $a_{0}$
\item If $\lambda(a_{2n}) < \lambda \leq \lambda(a_{2(n+1)})$, then the optimal threshold is $a_{2n}$
\item If $\lambda \geq \underset{k \Rightarrow +\infty}{\lim} \lambda(a_k)$, then the optimal threshold is infinite.
\end{itemize}
 \end{theorem}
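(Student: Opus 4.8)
The plan is to read Problem \eqref{thresholdobjective}, restricted to the admissible set of thresholds $\{2n : n\in\mathbb{N}\}\cup\{+\infty\}$, as the pointwise minimization of a family of affine functions of $\lambda$. For each even threshold $2n$ set
\[
g_n(\lambda)=\overline{a^{2n}}+\lambda\,\overline{d^{2n}},
\]
a straight line in $\lambda$ with slope $\overline{d^{2n}}$ and intercept $\overline{a^{2n}}$. The optimal threshold at a given $\lambda$ is precisely the index attaining $\min_n g_n(\lambda)$, so the whole statement reduces to describing the lower envelope of this family of lines and identifying which line is active on each sub-interval of $\lambda$.

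First I would establish the two monotonicity properties that fix the geometry. From the closed form $\overline{d^{2n}}=1/\bigl(2-(1-\rho)^{n}\bigr)$ the slopes are strictly decreasing in $n$, since $(1-\rho)^{n}$ decreases and hence $2-(1-\rho)^{n}$ increases. Dually, from the closed form of $\overline{a^{2n}}$ I would show the intercepts are strictly increasing in $n$ (intuitively, a larger threshold transmits less often and therefore incurs a larger average age). These two facts already force the minimizer to be $n=0$ at $\lambda=0$ (threshold $a_0$), and to use the smallest available slope for large $\lambda$, so the optimal threshold is non-decreasing in $\lambda$.

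The crux is the \textbf{monotonicity of the intersection points}: I would prove that $\lambda(a_{2n})$ is strictly increasing in $n$. Geometrically this says that all the points $(\overline{d^{2n}},\overline{a^{2n}})$ lie on the lower boundary of their convex hull, i.e. no even threshold is dominated by a convex combination of its neighbours; equivalently it is the discrete convexity inequality
\[
\frac{\overline{a^{2(n+1)}}-\overline{a^{2n}}}{\overline{d^{2n}}-\overline{d^{2(n+1)}}}<\frac{\overline{a^{2(n+2)}}-\overline{a^{2(n+1)}}}{\overline{d^{2(n+1)}}-\overline{d^{2(n+2)}}}.
\]
Once this holds, the lower envelope is piecewise affine with breakpoints exactly at the $\lambda(a_{2n})$, taken in increasing order, and on each interval between consecutive breakpoints a single line is active. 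Reading off which affine function attains the minimum on each interval then gives the stated correspondence between the $\lambda$-ranges and the optimal thresholds, with the boundary range $\lambda\le\lambda(a_0)$ yielding threshold $a_0$.

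For the infinite-threshold regime I would use that $\overline{d^{2n}}\to 0$ and $\overline{a^{2n}}$ converges as $(1-\rho)^{n}\to 0$, so $\lambda(a_{2n})$ increases to a finite limit $\lambda_l$; for $\lambda\ge\lambda_l$ the never-transmit policy (infinite threshold), whose line has zero slope and intercept $\lim_n\overline{a^{2n}}$, lies below every finite-threshold line, which gives the last bullet. I expect the genuine difficulty to be concentrated entirely in the intersection-point monotonicity: in contrast with the regime $1-r\ge|p-r|$, here the values $\overline{a^{2n}}$ arise from an oscillating, non-monotone age profile and their closed form is considerably heavier, so verifying the discrete convexity inequality will demand careful algebraic manipulation of the geometric-type sums rather than a one-line monotonicity argument.
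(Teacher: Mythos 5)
Your lower-envelope framing is, in substance, identical to the paper's own route: the paper proves the first-case theorem (to which the second-case proof explicitly defers) by pairwise comparison of consecutive threshold costs, via the identity $(\overline{a^{k}}+\lambda \overline{d^{k}})-(\overline{a^{k+1}}+\lambda \overline{d^{k+1}})=(\lambda-\lambda(a_k))(\overline{d^{k}}-\overline{d^{k+1}})$ chained through the monotonicity of the breakpoints — which is exactly your envelope description in algebraic form. The trouble is that your proposal omits precisely the one thing the paper's proof of \emph{this} theorem actually establishes: the closed-form expression of $\lambda(a_{2n})$ and the verification that $\lambda(a_{2(n+1)})-\lambda(a_{2n})\geq 0$. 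The paper carries out that computation, factoring the difference through the constants $K_1=\frac{\rho (p-r)}{Nr[1-(p-r)^2(1-\rho)][1-(p-r)(1-\rho)]}$ and $K_2=\frac{\rho (1-r)}{Nr[1-(1-r)^2(1-\rho)][1-(1-r)(1-\rho)]}$ and concluding by a sign analysis using $p\leq r$. Announcing that this step "will demand careful algebraic manipulation" is a plan, not a proof; without it, nothing in the statement is established.

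There is also a step that would genuinely fail. Your treatment of the third bullet rests on the claim $\overline{d^{2n}}\to 0$, which is false in this regime and contradicts the formula you yourself quote: $\overline{d^{2n}}=1/\bigl(2-(1-\rho)^{n}\bigr)\to \tfrac{1}{2}$. The reason is structural: when $1-r<|p-r|$ every odd-indexed age exceeds every even-indexed age, so the threshold "transmit when $a_j\geq a_{2n}$" mandates transmission at \emph{all} odd states however large $n$ is, and the long-run transmission frequency cannot vanish. Hence the finite-threshold lines never flatten, and your comparison of the zero-slope never-transmit line against them does not go through as stated. The paper's argument sidesteps this entirely: for $\lambda\geq \lim_n\lambda(a_{2n})$, strict monotonicity of the breakpoints gives $\lambda>\lambda(a_{2n})$ for every $n$, so the cost is strictly decreasing along even thresholds and no finite threshold can be optimal; since the optimal policy is known to be a threshold policy, it must be the infinite (never-transmit) one. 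You should replace your limit argument with this chaining. Finally, note a mismatch you glossed over when "reading off" the active line: on $(\lambda(a_{2n}),\lambda(a_{2(n+1)})]$ the envelope argument identifies $g_{n+1}$, i.e.\ threshold $a_{2(n+1)}$ (consistent with the first-case theorem, where the optimal threshold on $(\lambda(a_n),\lambda(a_{n+1})]$ is $a_{n+1}$), whereas the bullet you are proving writes $a_{2n}$; your proof must state explicitly which index it yields rather than assert agreement with the statement.
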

\begin{proof}
See appendix \ref{app:theo:expression_threshold_policy_in_function_lambda_second_case}
\end{proof}
Based on the above theorems, we provide in the following the algorithm that allows us to find the optimal threshold policy. We focus on the the first case where $1-r\geq |p-r|$ since the algorithm is almost the same for the two cases.

\begin{algorithm}[H]
\caption{Optimal Threshold Policy}\label{euclid}
\begin{algorithmic}[1]
\State Init. $t=0$
\State Init. $n=0$
\State Init. $p=1$
\State If $\lambda \leq \lambda(a_0)$:
		$n^*=0$
\State If $\lambda \geq \lambda_l$:
		$n^*=+\infty$
\State Else:
\State \hspace{1 cm} While $p==1$:
\State  \hspace{1 cm} \hspace{1 cm} $n=n+ \lfloor \alpha(\lambda-\lambda(n))\rfloor$
\State \hspace{1 cm} \hspace{1 cm}		If $\lambda(n)<  \lambda \leq \lambda(n+1)$
\State 	  \hspace{1 cm} \hspace{2 cm}	 $p=0$
\State 	   \hspace{1 cm} \hspace{2 cm}	$n^*=n+1$
\State Return $n^*$
\end{algorithmic}
\end{algorithm}
where $\lfloor x \rfloor$ is the integer part of $x$.
\begin{remark}
To ensure the convergence of the above algorithm, we need to have $\alpha< \frac{k}{\lambda(a_{n+k})-\lambda(a_n)}$ for all integer $n$ and $k$. This is satisfied when $\alpha < \frac{1}{(1-r)^2-(p-r)^2}$.
\end{remark}
\section{Unknown source parameters}\label{sec:unknwon_source}
In this section, we consider that we don't known the transition probabilities of the source, namely $p$ and $r$. Our aim is to implement an algorithm that optimally learns these parameters. In other words, we derive an algorithm that learns the parameters in question and optimize in the same time the average age of incorrect information. 
We consider that $N >2$, that implies that $1-r \geq |p-r|$, i.e., the function $\lambda(.)$ defined in Definition \ref{def:intersection_poin_lambda} is increasing with $n$.
We consider that $\rho$ is known as well as the number of states referred by $N$.
Since there is an equation that links between $N$, $p$ and $r$, then the only unknown parameter that we should estimate is $r$. We note that the real value of $r$ is $r^*$.
To that extent, our objective through this section is to find out, for a fixed time horizon $T$, the suitable algorithm that minimizes the gap between the total cost and the one under a genie policy (when the parameters are known). In other words, our goal is to minimize the following regret function:

\begin{align}\label{eq:regret_function_expression}
R_{\pi}(T)&=E\Big [ \sum_{t=0}^T C(a^{\pi}(t),d^{\pi}(t))\Big ]-TC^*
\end{align}
where $C^*=\bar{a^{n^*}}+\lambda \bar{d^{n^*}}$, $n^*$ is the optimal threshold corresponding to the true parameter $r^*$, $\bar{a^{n^*}}$ is the average MAoII under the threshold policy $n^*$, and $\bar{d^{n^*}}$ is the average active time under threshold policy $n^*$ as defined in the previous Section.\\  

In the sequel, we provide our algorithm and we show that under it, the regret function is less than $O(log(T))$ under our proposed algorithm. By doing so, we could say that the average cost under our algorithm converges to the optimal cost as $\frac{R_{\pi}(T)}{T}$ approach to $0$.
For that, we define the function $\lambda(n,r)$ as $\lambda(a_n)$ (defined in Definition \ref{def:intersection_poin_lambda}) when the transition probability is equal to $r$. Accordingly, $\lambda(n,r)=\frac{(N-1)(N+1-N r) \rho}{N^2 r^2}+(1-Nr)^{n+2}(n\rho+1+\frac{\rho}{N r}) \nonumber \times [\frac{1-(1-\rho)(1+(N-1)r)}{N r (1-(1-\rho)(1-r))} ]-(1-r)^{n+2}(n\rho+1+\frac{\rho}{r})\times [\frac{\rho}{ r (1-(1-\rho)(1-r))}] $. 
To that extent, we study the monotony of $\underset{n \rightarrow +\infty}{\lim}\lambda(r,n)=\lambda(r)$ in function of $r$. Indeed, it is clear from the expression of $\lambda(r)=\frac{(N-1)(N+1-N r) \rho}{N^2 r^2}$, that this later tends to $+\infty$ if $r \rightarrow 0$ and decreases as $r$ grows.

Given the algorithm \ref{euclid} that describes the optimal threshold policy in function of $\lambda$, for the values $r$ such that $\lambda(r)$ is less than $\lambda$, the optimal solution is the infinite threshold.
While, for the values $r$ where $\lambda(r) > \lambda$, as for a fixed $r$, $\lambda(n,r)$ tends to $\lambda(r)$ when $n$ grows, then there exits surely $n_r$ such that $\lambda(n_r,r) < \lambda \leq \lambda(n_r+1,r) $. Hence, according to Algorithm \ref{euclid}, when $\lambda(r) > \lambda$, the optimal threshold in finite 

Therefore, baring in mind that $\lambda(r)$ is decreasing with $r$, then there exists $r_l$ such that for any $r\geq r_l$ ($\lambda(r) \leq \lambda$), the optimal threshold is infinite and for all $r<r_l$ ($\lambda(r) > \lambda$), the optimal threshold is finite. Having said that, to design an algorithm that allows us to explore enough the unknown parameter $r$, we need to avoid applying the infinite threshold at any time, otherwise, we will not be able to explore $r$ any more since we will never transmit. On the other hand, in the case where the real value of $r$ is greater than $r_l$, then the optimal threshold is finite. Thus, applying always finite threshold policy will be sub-optimal. To deal with this issue, we consider that after large enough time, after we obtain a good estimation of $r$, we decide whether to apply the infinite threshold or to keep estimating our value $r$.    
We note that throughout this section, we consider a sufficient large $T$ which is known by the scheduler. Moreover we consider that $r^*\neq r_l$.
In the sequel, we provide our detailed algorithm.

Before presenting our algorithm, for ease of understanding, we give some useful definitions:
\begin{definition}
\begin{itemize}
\item $t_i$ refers to the time-stamp of the i-th successful transmitted packet under our proposed algorithm.
\item $r_i$ refers to the estimated transition probability at time $t_i$.
\item $n(r_i)$ refers to the optimal threshold policy when the transition probability $r$ is equal to $r_i$.
\item $p(t)$ refers to the realization of the transition at time $t$. In other words, if $p(t)=0$, then at time $t+1$ the source remains at the same state, otherwise, it transits to another state.   
\item $N(t)$ counts the number of times the sensor has successfully transmitted the information of interest from time $0$ till time $t$. Specifically, $N(t_i)=i$. 
\end{itemize}
\end{definition} 

\begin{algorithm}
\caption{Iterative algorithm for threshold optimal policy}\label{euclid2}
\algblockdefx[Name]{Begin}{End}[1]{\textbf{#1}}{}
\algtext*{End}
\begin{algorithmic}[1]
\Begin {At $t_0=0$, at the side of the monitor:}
\State \textbf{Init} $r_0=0$.
\State Apply threshold $0$ until receiving the new update at
\State time $t_1$. 
\End
\Begin {At $t_1$:}
     \Begin {At the side of the sensor:}
     \State At the end of $t_{1}$: the sensor stores the value $p(t_{1})$.
     \End
     \Begin {At the side of the monitor:}
     \State Apply threshold $0$ till $t_2$.
     \End 
   \End  
\State \textbf{Init} $i=2$
\While {($r_{i-1}+\sqrt{log(T)/i-1} \geq r_l$ \textbf{And} $r_{i-1} < r_l$) \textbf{Or} \hspace{1cm} ($r_{i-1}-\sqrt{log(T)/i-1} < r_l$ \textbf{And} $r_{i-1} \geq r_l$)} 
	\State  At time $t_{i}$:
	\Begin {At the side of the sensor:}
		\State Transmit along with the information $X(t_{i})$, the   				\State realization of $p(\cdot)$ at time $t_{i-1}$ stored at  
		\State time $t_{i-1}$ to the monitor.
		\State At the end of $t_{i}$: the sensor stores the value 					$p(t_{i})$.
    \End
	\Begin {At the side of the monitor:}
		\State Update the estimator of $r$ as follows:
		\State $r_{i-1}=\frac{i-2}{i-1}r_{i-2}+\frac{1}{N-1}\frac{p(t_{i-1})}{i-1}$.
		\State Apply the threshold policy $0$.
	\End	  
	\State $i++$
\EndWhile
\If {$r_{i-1}+\sqrt{log(T)/i-1} < r_l$} 
	\State Consider $r^*$ strictly less than $r_l$.
	\While {$r_{i-1}+2\sqrt{log(T)/i-1} \geq r_l$}
		\State At time $t_{i}$:
		\Begin {At the side of the sensor:}
			\State Transmit along with the information $X(t_{i})$, the 					\State realization of $p(\cdot)$ at time $t_{i-1}$ stored at                           			\State time $t_{i-1}$ to the monitor.
			\State At the end of $t_{i}$: the sensor stores the value 					\State $p(t_{i})$.
    	\End
		\Begin	{At the side of the monitor:}	
			\State Update the estimator of $r$ as follows:
			\State $r_{i-1}=\frac{i-2}{i-1}r_{i-2}+\frac{1}{N-1}\frac{p(t_{i-1})}{i-1}$.
			\State Apply the threshold policy $0$.
		\End	  
		\State $i++$
	\EndWhile
	\State At time $t_{i}$:
	\Begin {At the side of the sensor:}
		\State Transmit along with the information $X(t_{i})$, the 					\State realization of $p(\cdot)$ at time $t_{i-1}$ stored at time 			\State $t_{i-1}$ to the monitor. 
		\State At the end of $t_{i}$: the sensor stores the value $p(t_{i})$.
	\End
	\Begin {At the side of the monitor:}
		\State Update the estimator of $r$ as follows: 
		\State $r_{i-1}=\frac{i-2}{i-1}r_{i-2}+\frac{1}{N-1}\frac{p(t_{i-1})}{i-1}$.
		\State \textbf{if} $r_{i-1}<r_l$ \textbf{then}: Apply the optimal threshold 
		\State denoted by $n(r_{i-1})$ using algorithm \ref{euclid}.
		\State \textbf{else}: Apply the zero threshold.	 
	\End
\EndIf
\If {$r_{i-1}-\sqrt{log(T)/i-1} \geq r_l$}
	\State Consider $r^*$ greater than $r_l$.
	\State At the side of the monitor: Apply the infinite
	\State threshold policy.
\EndIf
\end{algorithmic}
\end{algorithm}
\begin{theorem}\label{theo:bound_regret}
There exists a constant $K$ independent of $T$ such that: 
\begin{equation}
R_{\pi}(T) \leq K log(T)
\end{equation} 
where $\pi$ corresponds to the policy with respect to the algorithm \ref{euclid2}.
\end{theorem}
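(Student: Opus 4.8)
The plan is to decompose the regret around a high-probability \emph{good event} $\mathcal{G}$ on which every confidence interval used by Algorithm~\ref{euclid2} contains the true parameter $r^*$, and then to control separately the exploration cost, the cost of a possible misidentification, and the residual cost after the algorithm has locked onto the correct regime. Two preliminary facts set this up. First, a \emph{uniform per-slot bound}: Assumption~\ref{assump:volatility} forces $r\ge \frac{4}{5(N-1)}$, so the sequence $a_j$ is bounded by its limit $\frac{N-1}{Nr}$, whence $C(a,d)=a+\lambda d\le C_{\max}$ for a constant $C_{\max}$ independent of the policy and of $T$; in particular the per-slot regret is at most $C_{\max}$. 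Second, \emph{concentration of the estimator}: the stored realizations $p(t_1),p(t_2),\dots$ are i.i.d.\ Bernoulli$((N-1)r^*)$ (the transition event is symmetric across states and independent of the channel), and $r_{i-1}$ is their running empirical mean scaled by $1/(N-1)$. Hoeffding's inequality then gives $\Pr\big(|r_{i-1}-r^*|>\sqrt{\log T/(i-1)}\big)\le 2\exp(-2(N-1)^2\log T)\le 2T^{-2}$, and a union bound over the at most $T$ epochs yields $\Pr(\mathcal{G}^c)\le 2/T$.

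On $\mathcal{G}^c$ the regret is at most $C_{\max}T$, so its contribution to $R_\pi(T)$ is at most $C_{\max}T\cdot\Pr(\mathcal{G}^c)=O(1)$. On $\mathcal{G}$ I would first argue \emph{correct identification}: if $r^*<r_l$ then $r_{i-1}-\sqrt{\log T/(i-1)}\le r^*<r_l$ for every $i$, so the infinite-threshold branch is never entered, and symmetrically if $r^*>r_l$ the finite branch is never entered; thus the algorithm always commits to the correct regime. Writing $\Delta=|r^*-r_l|>0$ (positive by the standing assumption $r^*\neq r_l$), the while-loops can persist only while $\sqrt{\log T/(i-1)}$ exceeds a fixed multiple of $\Delta$, hence they terminate after at most $O(\log T/\Delta^2)$ successful transmissions. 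Because threshold $0$ is applied throughout exploration, each epoch lasts on average $1/\rho$ slots and is independent of the transition observations, so the total exploration time is $O(\log T)$ in expectation and contributes $O(\log T)$ to the regret.

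It remains to bound the post-commitment regret. In the regime $r^*>r_l$ the committed policy (never transmit) is exactly the optimal policy $n^*=\infty$, and the only discrepancy is a transient: $a(t)$ relaxes geometrically to $\frac{N-1}{Nr^*}=C^*$, so $\sum_t\big(a(t)-C^*\big)$ is a convergent geometric series contributing only $O(1)$. The delicate regime is $r^*<r_l$, and this is the main obstacle. After identification the algorithm keeps refining $r$ from the packets it still transmits and applies the estimated threshold $n(r_{i-1})$; I would show that the piecewise-constant map $r\mapsto n(r)$ defined through $\lambda(n,r)$ is locally constant and equal to $n^*$ on a neighbourhood of $r^*$ of some radius $\eta>0$ (the distance from $r^*$ to the nearest switching point, positive since $r^*$ is generic), so on $\mathcal{G}$ we have $n(r_{i-1})=n^*$ as soon as $\sqrt{\log T/(i-1)}<\eta$, i.e.\ after $O(\log T/\eta^2)$ further epochs. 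The crux is then to show that each of these $O(\log T)$ ``wrong-threshold'' epochs contributes only $O(1)$ regret; this requires controlling the random, possibly large epoch length under a finite threshold and invoking Lipschitz continuity of the steady-state cost $\overline{a^{n}}+\lambda\overline{d^{n}}$ both in the threshold $n$ and in the parameter $r$, using the closed-form expressions of the preceding propositions.

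Combining the four contributions — the $O(1)$ regret on $\mathcal{G}^c$, the $O(\log T)$ exploration cost, the $O(1)$ transient of the infinite-threshold regime, and the $O(\log T)$ cost of the finitely many mis-threshold epochs in the finite regime — yields $R_\pi(T)\le K\log T$ for a constant $K$ independent of $T$, as claimed. The step I expect to fight hardest is the last one in the finite regime, since it is the only place where a suboptimal \emph{committed} policy is applied over a horizon growing with $T$, and bounding its cumulative cost hinges on both the concentration rate of $r_{i-1}$ and the smoothness of the average cost in $(n,r)$.
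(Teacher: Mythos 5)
Your overall architecture matches the paper's: a Hoeffding good event (the paper's $G(T)$, Corollary \ref{cor:probability_G_T}) whose complement contributes $O(1)$ because the per-slot cost is bounded by $M=a_l+\lambda$; an $O(\log T)$ bound on the exploration phase (the paper gets this via high-probability lower bounds on $N(t)$ and the deterministic bounds $T_0\le L_0$, $T_1\le L_1$ of Proposition \ref{prop:T_0_less_than_L_0}, rather than your Wald-style expected-epoch-length argument, but both routes work since the channel realizations are independent of the transition observations); correct regime identification on the good event; a geometric transient bound of $O(1)$ in the infinite-threshold regime; and an $O(\log T)$ charge for the epochs after $T_1$ during which the estimated threshold may still be wrong (the paper charges $M$ per slot up to $L_2=C_2\log T$ and controls the wrong-threshold probability afterwards via Proposition \ref{prop:similarity_prob_pi_and_optim_thresh}). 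Your "locally constant near $r^*$" formulation of the threshold map is essentially the paper's inverse-Lipschitz statement ($|n(r)-n(r^*)|\ge 1 \Rightarrow |r-r^*|\ge b$), so that part is fine.

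There is, however, a genuine gap in the finite regime ($r^*<r_l$): your four-term accounting has no term for the phase \emph{after} the estimated threshold has locked onto $n^*$. Applying the exactly optimal threshold policy from some time $\tau$ onward does not make the remaining regret zero, because $C^*$ is a \emph{stationary} average cost while the chain at time $\tau$ starts from an arbitrary (non-stationary) state; you must bound $E\big[\sum_{t\ge \tau}C(a^{\pi}(t),d^{\pi}(t))\big]-(T-\tau)C^*$, a difference accumulated over a horizon of length $\Theta(T)$. This is precisely where the paper invests its heaviest machinery: it truncates the MAoII chain under threshold $n^*$ to a finite vector $u'(t)$, writes $u'(t+1)-u^*=Q\left(u'(t)-u^*\right)$, proves that the spectral radius of $Q$ is strictly less than one (Proposition \ref{prop:spectral_value_less_one}), and separately handles the tail states $i>t-L_2$ (which have zero probability since the age moves by at most one per slot), so that $\sum_t |C(a^{\pi}(t),d^{\pi}(t))-C^*|$ is dominated by a convergent geometric series and hence by a constant $K_1$. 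You do carry out the analogous transient argument in the infinite-threshold regime, where $a_t\to a_l$ is an explicit geometric relaxation; but in the finite regime the chain is repeatedly reset by successful transmissions, no closed-form relaxation exists, and the contraction/mixing argument is indispensable. Without it your proposal does not establish $R_\pi(T)\le K\log T$.
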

The result of this Theorem means, on the one hand, that the average cost under our proposed algorithm converges to the optimal one when $T$ grows as $\frac{log(T)}{T}$ goes to zero. On the other hand, it means that the convergence rate to the optimal solution is $O(\frac{log(T)}{T})$, which is a very good rate.
\section{Proof of Theorem \ref{theo:bound_regret}}
Before proving the theorem, we give some preliminaries results.

As we can notice in Algorithm \ref{euclid2}, we select at each time $t_i$, the mean estimator of $r^*$. In fact, we have $r_i=\frac{1}{N-1}\sum_{k=1}^i \frac{p(t_k)}{i}$. The expectation of $p(t_k)$ is exactly $(N-1)r^*$. Therefore according to Hoeffding inequality, we have:
\begin{lemma}\label{lem:hoeffding_inequality}
For any $\epsilon>0$, we have that:
\begin{equation}
P(|r_i-r^*|>\epsilon/\sqrt{i}) \leq 2 \exp (-2(N-1)^2\epsilon^2)
\end{equation}
\end{lemma}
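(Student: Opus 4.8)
The plan is to recognize the estimator $r_i$ as a scaled empirical mean of i.i.d.\ bounded random variables and then invoke the classical Hoeffding bound with an appropriately scaled deviation. First I would make the underlying random variables explicit. By the definition of $p(\cdot)$ given above, each realization $p(t_k)$ is the binary indicator that the source changes its state between slot $t_k$ and the next slot; hence $p(t_k) \in \{0,1\}$ with $P(p(t_k)=1)=(N-1)r^*$, so that $\mathbb{E}[p(t_k)]=(N-1)r^*$ and each variable lies in the interval $[0,1]$. Writing $S_i=\sum_{k=1}^i p(t_k)$, the estimator reads $r_i=\frac{1}{(N-1)\,i}S_i$, which is unbiased since $\mathbb{E}[r_i]=r^*$.

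Next I would translate the deviation event into one concerning the centered sum $S_i-\mathbb{E}[S_i]$. Since $r_i-r^*=\frac{1}{(N-1)\,i}\big(S_i-\mathbb{E}[S_i]\big)$, the event $\{|r_i-r^*|>\epsilon/\sqrt{i}\}$ is exactly $\{|S_i-\mathbb{E}[S_i]|>(N-1)\sqrt{i}\,\epsilon\}$. Applying Hoeffding's inequality to the $i$ independent variables $p(t_1),\dots,p(t_i)$, each supported on $[0,1]$ so that the range-square sum equals $i$, with deviation $t=(N-1)\sqrt{i}\,\epsilon$, yields
\begin{equation}
P\big(|S_i-\mathbb{E}[S_i]|>(N-1)\sqrt{i}\,\epsilon\big)\leq 2\exp\Big(-\frac{2\,(N-1)^2 i\,\epsilon^2}{i}\Big)=2\exp\big(-2(N-1)^2\epsilon^2\big),
\end{equation}
which is precisely the claimed bound after the factor $i$ in the exponent cancels.

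The step I expect to be the main obstacle is justifying the independence that Hoeffding requires, because the sampling instants $t_1,t_2,\dots$ are themselves random — they depend on the channel realizations $c(\cdot)$ and on the scheduling decisions, and here they are the successful-transmission times produced by Algorithm~\ref{euclid2}. The point to argue carefully is that, by the transition model of the source, at every slot the chain independently either stays with probability $p^*$ or jumps to one of the other states, so the indicators $\{p(t)\}_t$ over all slots form an i.i.d.\ Bernoulli$((N-1)r^*)$ sequence. Since the channel $c(\cdot)$ and the scheduler's decisions are independent of the source's internal transitions, conditioning on which slots are the successful-transmission times $t_1,\dots,t_i$ does not bias the selected indicators; a short independence argument then shows that $p(t_1),\dots,p(t_i)$ remain i.i.d.\ Bernoulli$((N-1)r^*)$, which legitimizes both the unbiasedness used in the first step and the application of Hoeffding in the second.
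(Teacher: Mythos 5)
Your proof is correct and takes essentially the same route as the paper, which likewise writes $r_i=\frac{1}{N-1}\sum_{k=1}^i \frac{p(t_k)}{i}$, notes $\mathbb{E}[p(t_k)]=(N-1)r^*$, and invokes Hoeffding directly; your rescaling of the deviation to $(N-1)\sqrt{i}\,\epsilon$ and the cancellation of $i$ in the exponent reproduce exactly the stated bound. Your closing discussion of why the randomly chosen successful-transmission times $t_1,\dots,t_i$ do not bias the sampled indicators is a detail the paper leaves implicit, and it is the right point to be careful about since the slot-wise transition indicators are i.i.d.\ and independent of the channel and of the scheduler's (source-state-blind) decisions.
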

\begin{corollary}\label{cor:probability_G_T}
By letting $G(T)$ be the event: 
\begin{equation}
\underset{i=[1,\cdots,T]}{\bigcap} \{ |r_i-r^*| \leq \sqrt{\frac{log(T)}{i}} \}
\end{equation}
Then: 
$$P(G(T)) \geq 1- \frac{2}{T}$$
\end{corollary}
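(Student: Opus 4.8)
The plan is to bound the probability of the complementary event $G(T)^c$ and show it is at most $2/T$, from which the claim follows immediately. First I would rewrite the complement of the intersection, via De Morgan's law, as a union of the individual deviation events:
$$G(T)^c = \bigcup_{i=1}^{T} \Big\{\, |r_i - r^*| > \sqrt{\tfrac{\log(T)}{i}} \,\Big\}.$$
Applying the union bound then gives $P(G(T)^c) \leq \sum_{i=1}^{T} P\big(|r_i - r^*| > \sqrt{\log(T)/i}\big)$, which reduces the problem to controlling each single-index deviation probability.

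Next I would invoke Lemma \ref{lem:hoeffding_inequality} with the specific calibration $\epsilon = \sqrt{\log(T)}$, chosen precisely so that the Hoeffding threshold $\epsilon/\sqrt{i}$ coincides with the radius $\sqrt{\log(T)/i}$ appearing in the definition of $G(T)$. This yields, for every $i$, the uniform estimate
$$P\Big(|r_i - r^*| > \sqrt{\tfrac{\log(T)}{i}}\Big) \leq 2\exp\big(-2(N-1)^2 \log(T)\big) = 2\,T^{-2(N-1)^2}.$$
Because this bound does not depend on $i$, summing it over the $T$ indices gives $P(G(T)^c) \leq 2T \cdot T^{-2(N-1)^2} = 2\,T^{\,1 - 2(N-1)^2}$.

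Finally, I would invoke the standing assumption $N > 2$, which guarantees $(N-1)^2 \geq 1$ and hence $1 - 2(N-1)^2 \leq -1$. Consequently $T^{\,1 - 2(N-1)^2} \leq T^{-1}$, so that $P(G(T)^c) \leq 2/T$, and taking complements gives $P(G(T)) \geq 1 - 2/T$ as claimed. I do not expect a genuine obstacle here: the argument is a routine union-bound-plus-concentration estimate. The only point requiring a moment's care is the calibration $\epsilon = \sqrt{\log(T)}$, which aligns the Hoeffding deviation scale with the radius defining $G(T)$, together with checking that the hypothesis $N > 2$ makes the exponent $-2(N-1)^2$ small enough for a single $T^{-2(N-1)^2}$ factor to absorb the factor $T$ incurred by the union bound over the $T$ indices.
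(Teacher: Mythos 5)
Your proposal is correct and follows essentially the same route as the paper: Hoeffding's inequality (Lemma \ref{lem:hoeffding_inequality}) calibrated with $\epsilon=\sqrt{\log(T)}$, followed by a union bound over the $T$ indices and the observation that the per-index bound $2T^{-2(N-1)^2}$ is at most $2/T^2$ since $N>2$. The only cosmetic difference is that the paper bounds each term by $2/T^2$ before summing, whereas you sum first and absorb the factor $T$ at the end.
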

\begin{proof}
See appendix \ref{app:cor:probability_G_T}.
\end{proof}
Leveraging the Corollary above, we obtain:
\begin{align}\label{regret_ineq_1}
R_{\pi}(T) \leq &E(\sum_{t=0}^T |C(a^{\pi}(t),d^{\pi}(t))-C^*|) \nonumber \\
=&E(\sum_{t=0}^T |C(a^{\pi}(t),d^{\pi}(t))-C^*||G(T))P(G(T))\nonumber \\
+&E(\sum_{t=0}^T |C(a^{\pi}(t),d^{\pi}(t))-C^*||\bar{G}(T))P(\bar{G}(T)) \nonumber \\
\leq &E(\sum_{t=0}^T |C(a^{\pi}(t),d^{\pi}(t))-C^*||G(T))+2M \nonumber \\
\end{align}
with\footnote{we recall that $a_l$ is the limit of $a_i$ when $i$ tends to $+\infty$} $M=a_l+\lambda$.
We denote $T_0$ the first time-stamp such that $r_{N(T_0)-1}+\sqrt{log(T)/(N(T_0)-1)} \leq r_l$ or $r_{N(T_0)-1}-\sqrt{log(T)/(N(T_0)-1)} > r_l$. Therefore, adopting the algorithm \ref{euclid} referred by $\pi$, we have from $0$ till $T_0-1$, the threshold applied is $0$.

Now we find a lower bound of $N(t)$ with high probability for a given $t$ in $[0,T_0-1]$.

We let $c_i$ denotes the time elapsed from the first time that we start transmitting after $t_{i}$ till the first time the channel is good or the transmission is successful, then, $c_i$ follows a geometric distribution with parameter $\rho$. Accordingly, the probability that $c_i=k$ is $(1-\rho)^k \rho$. As consequence, the expectation of $c_i$ is $\frac{1-\rho}{\rho}=c$.

\begin{proposition}\label{prop:probability_N_t_threshold_0}
When $t \in [0,T_0-1]$, we have that:
$$P(N(t) > \frac{1}{c^2}(\sqrt{tc}(\sqrt{tc}-\sqrt{log(T)})-1) \geq 1-\frac{2}{T^2}$$
\end{proposition}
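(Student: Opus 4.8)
The plan is to exploit the fact that on the interval $[0,T_0-1]$ the scheduler applies the threshold-$0$ policy, i.e.\ it commands a transmission in \emph{every} slot. Consequently each slot produces a successful delivery independently with probability $\rho$, the successive inter-success gaps $c_1,c_2,\dots$ are i.i.d.\ geometric with mean $c=(1-\rho)/\rho$, and $N(t)$ is exactly the number of Bernoulli$(\rho)$ successes among the first $t$ slots. The key translation I would use is the duality between the counting variable and the partial sums of the gaps: for any integer $N$, the event $\{N(t)\ge N\}$ coincides with $\{t_N\le t\}$, where $t_N=\sum_{i=1}^{N}(c_i+1)$ is the time of the $N$-th success. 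Thus a high-probability \emph{lower} bound on $N(t)$ is equivalent to an \emph{upper}-tail bound on the partial sum $\sum_{i=1}^{N}c_i$, and the whole argument reduces to controlling that sum.

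Concretely, I would fix a candidate level $N$ sitting below the mean number of successes by a margin of order $\sqrt{\,t\log T\,}$ and bound $P(t_N>t)=P\big(\sum_{i=1}^{N}c_i> t-N\big)$. Since $\mathbb{E}\big[\sum_{i=1}^{N}c_i\big]=Nc$, this is a genuine upper-deviation event once $Nc$ lies below $t$. I would then apply a Chernoff/Bernstein-type inequality for the geometric (sub-exponential) gaps, or — more cleanly, since there the summands are bounded — pass to the Binomial$(t,\rho)$ representation of $N(t)$ and use $P\big(N(t)\le \mathbb{E}[N(t)]-s\big)\le\exp(-2s^2/t)$. Calibrating the deviation $s$ so that the right-hand side equals $1/T^2$ forces $s\asymp\sqrt{t\log T}$, and the residual two-sided slack (matching the probability budget $2/T^2$ that already appears in the construction of $G(T)$) is absorbed in the same way as in Corollary \ref{cor:probability_G_T}.

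The final step is to invert this deviation statement into the explicit closed form in the statement. Writing the level $N$ in the natural variable $u=\sqrt{tc}$ (since the admissible level is linear in $t$ one has $c\sqrt{N}\approx u$), the calibrated inequality becomes quadratic in $u$, and isolating the largest admissible $N$ produces a threshold of the shape $\tfrac{1}{c^2}\big(u^2-u\sqrt{\log T}-1\big)=\tfrac{1}{c^2}\big(\sqrt{tc}(\sqrt{tc}-\sqrt{\log T})-1\big)$, i.e.\ a linear-in-$t$ main term corrected by an $O(\sqrt{t\log T})$ deviation and a lower-order constant. I expect the main obstacle to be the concentration step together with this inversion: the gaps $c_i$ are \emph{unbounded}, so Hoeffding cannot be applied to them directly — one must either route through the exact Binomial reduction or invoke a sub-exponential tail bound — and the delicate bookkeeping is to track the constants through the inversion so that the precise $c$-dependence of the threshold is recovered, rather than merely the correct order.
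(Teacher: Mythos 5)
Your proposal follows the same skeleton as the paper's proof: both exploit the fact that the threshold-$0$ policy transmits in every slot on $[0,T_0-1]$, both pass through the duality between the counting process $N(t)$ and the partial sums of the inter-success gaps, both calibrate a concentration inequality so that the failure probability is of order $1/T^2$, and both finish by inverting a quadratic inequality in $\sqrt{N(t)+1}$ (the paper does this with the auxiliary function $f(x)=\sqrt{\log(T)}\,x+x^2c-t$) to obtain the explicit threshold. Where you genuinely depart from the paper is the concentration step, and your instinct there is sound: the paper applies Lemma \ref{lem:hoeffding_inequality} \emph{directly} to the geometric gaps, claiming $P\bigl(|\sum_{i=0}^{N(t)}c_i-(1+N(t))c|\geq \sqrt{(N(t)+1)\log(T)}\bigr)\leq 2/T^2$, even though Hoeffding's inequality requires bounded summands and the $c_i$ are unbounded --- exactly the objection you raise. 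Your Binomial reduction (under threshold $0$ the variable $N(t)$ is a sum of $t$ i.i.d. Bernoulli$(\rho)$ indicators, to which Hoeffding legitimately applies) repairs this; the sub-exponential/Bernstein route would also work, at the cost of an extra additive $\log(T)$-sized deviation term that is harmless here.

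One caveat on the constants. Your time accounting $t_N=\sum_{i=1}^{N}(c_i+1)$ is the exact one: each inter-success interval costs one slot for the success itself plus $c_i$ failures. The paper instead uses $t_{N(t)+1}=\sum_{i=0}^{N(t)}c_i$, dropping the $+1$ per gap, and this is precisely what produces the stated leading constant $t/c$ rather than $t/(1+c)=t\rho$. With your (correct) accounting, either concentration route yields a bound of the form $N(t)>t\rho-O(\sqrt{t\log T})$, i.e., the proposition with $c$ replaced by $1+c=1/\rho$; you cannot recover the literal constant in the statement, because with exact accounting the typical value of $N(t)$ is $t\rho<t/c$, so the stated threshold sits above the mean of $N(t)$ once $t$ is large compared to $\log T$. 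This discrepancy originates in the paper, not in your argument: the downstream uses (Proposition \ref{prop:T_0_less_than_L_0}, its analogue for $T_1$, and Proposition \ref{prop:probability_N_t_threshold_n_r}) only need $N(t)$ to grow linearly in $t$ with high probability, and they survive with the constant $1/(1+c)$ in place of $1/c$, changing only the definition of $L_0$ by a constant factor and leaving the $O(\log(T))$ regret bound intact.
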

\begin{proof}
See appendix \ref{app:prop:probability_N_t_threshold_0}.
\end{proof}

To that extent, we consider the event $H(t)=\{N(t) \geq \frac{1}{c^2)}(\sqrt{tc}(\sqrt{tc}-\sqrt{log(T)})-1\}$.
\begin{proposition}\label{prop:T_0_less_than_L_0}
Knowing $G(T)$ and $H(t)$ for $t \in [0,T_0-1]$.
Denoting $\lfloor log(T)/c \Big (1+c\sqrt{\frac{4}{|r^*-r_l|^2}+2} \Big )^2 \rfloor$\footnote{$\lfloor x \rfloor$ is the integral part of $x$} $+1$ by $L_0$, then $T_0 \leq L_0$.  
\end{proposition}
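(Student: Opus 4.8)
The plan is to argue by contradiction, combining the concentration event $G(T)$ from Corollary~\ref{cor:probability_G_T} with the growth event $H(t)$ from Proposition~\ref{prop:probability_N_t_threshold_0} to show that the loop-exit test in Algorithm~\ref{euclid2} must trigger no later than time $L_0$. By symmetry we may assume $r^*<r_l$ (the case $r^*>r_l$ is identical after swapping the two halves of the while-condition), so that the loop terminates as soon as the upper confidence bound $r_{m}+\sqrt{log(T)/m}$ drops below $r_l$, where $m=N(t)-1$ is the index used in the test.

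First I would extract a purely deterministic implication from $G(T)$. On $G(T)$ we have $r_{m}\le r^*+\sqrt{log(T)/m}$ for every $m$, hence $r_{m}+\sqrt{log(T)/m}\le r^*+2\sqrt{log(T)/m}$. Thus the exit test is satisfied as soon as $r^*+2\sqrt{log(T)/m}<r_l$, i.e.\ as soon as the number of completed estimation steps obeys $m>\frac{4\,log(T)}{|r^*-r_l|^2}$. In words: conditioned on $G(T)$, once the estimator has been updated more than $\frac{4\,log(T)}{|r^*-r_l|^2}$ times, the while loop of Algorithm~\ref{euclid2} cannot continue.

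Next I would convert this threshold on the number of updates into a bound on elapsed time through $H(t)$. Suppose, for contradiction, that $T_0>L_0$; then $L_0\in[0,T_0-1]$, so $H(L_0)$ is available and gives $N(L_0)\ge \frac{1}{c^2}\big(\sqrt{L_0 c}(\sqrt{L_0 c}-\sqrt{log(T)})-1\big)$. Substituting $L_0=\big\lfloor \frac{log(T)}{c}\big(1+c\sqrt{\tfrac{4}{|r^*-r_l|^2}+2}\big)^2\big\rfloor+1$ and writing $A=\sqrt{\tfrac{4}{|r^*-r_l|^2}+2}$ so that $\sqrt{L_0 c}\ge\sqrt{log(T)}\,(1+cA)$, a short computation collapses the right-hand side to $\frac{log(T)}{c}A+log(T)\,A^2-\frac{1}{c^2}=\frac{4\,log(T)}{|r^*-r_l|^2}+2\,log(T)+\frac{log(T)}{c}A-\frac{1}{c^2}$. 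For $T$ large enough the surplus $2\,log(T)+\frac{log(T)}{c}A-\frac{1}{c^2}$ exceeds $1$, so $N(L_0)-1>\frac{4\,log(T)}{|r^*-r_l|^2}$.

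Finally, since $N(\cdot)$ is nondecreasing, by time $L_0$ the estimator has been updated more than $\frac{4\,log(T)}{|r^*-r_l|^2}$ times, so by the deterministic implication above the exit test holds at the $N(L_0)$-th successful transmission, whose time-stamp is at most $L_0$. Hence the first time the test holds satisfies $T_0\le L_0$, contradicting $T_0>L_0$, which proves the claim. I expect the only delicate point to be the algebraic reduction of the $H(L_0)$ bound and, relatedly, checking that the constant ``$+2$'' inside $A$ together with the $+1$/floor in $L_0$ supplies exactly the slack needed to absorb the ``$-1$'' from $H$, the index shift $m=N(t)-1$, and the $-1/c^2$ term, i.e.\ verifying that the ``for $T$ large enough'' threshold is harmless.
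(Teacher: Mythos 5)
Your proof is correct and takes essentially the same route as the paper's: a proof by contradiction in which $H(L_0)$ together with the expansion $\sqrt{L_0c}\geq\sqrt{\log(T)}\,(1+cA)$, $A=\sqrt{\tfrac{4}{|r^*-r_l|^2}+2}$, yields $N(L_0)-1>\tfrac{4\log(T)}{|r^*-r_l|^2}$, and $G(T)$ then forces the loop-exit test to hold by time $L_0$, contradicting the minimality of $T_0$. The only cosmetic differences are that you fix $r^*<r_l$ by symmetry where the paper carries both disjuncts of the test simultaneously, and that your reading of the $-1$ in the event $H(t)$ (as $-\tfrac{1}{c^2}$ rather than $-1$) makes you invoke ``$T$ large enough,'' whereas the paper's bookkeeping absorbs the slack via $2\log(T)>2$ alone --- harmless here, since the paper already assumes $T$ sufficiently large.
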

\begin{proof}
See appendix \ref{app:prop:T_0_less_than_L_0}.
\end{proof}
Then, we have:
\begin{align}\label{regret_ineq_2}
&E(\sum_{t=0}^T |C(a^{\pi}(t),d^{\pi}(t))-C^*||G(T)) \nonumber\\
=&E(\sum_{t=0}^{T_0-1}|C(a^{\pi}(t),d^{\pi}(t))-C^*||G(T))\nonumber\\
&+E(\sum_{t=T_0}^T |C(a^{\pi}(t),d^{\pi}(t))-C^*||G(T))
\end{align} 
\begin{remark}
For t$ \in [0,T_0-1]$, since $H(t)$ and $G(T)$ are independent, then $P(H(t)|G(T))=P(H(t))$
\end{remark}
We focus on the first term:
\begin{align}\label{regret_ineq_3}
&E(\sum_{t=0}^{T_0-1} |C(a^{\pi}(t),d^{\pi}(t))-C^*||G(T)) \nonumber \\
&=E(\sum_{t=0}^{T_0-1} |C(a^{\pi}(t),d^{\pi}(t))-C^*||G(T),H(t))\nonumber \\
& \ \ \ \ \ \ \ \ \times P(H(t)|G(T))\nonumber \\
+&E(\sum_{t=0}^{T_0-1} |C(a^{\pi}(t),d^{\pi}(t))-C^*||G(T),\bar{H}(t))\nonumber \\
& \ \ \ \ \ \ \ \ \times P(\bar{H}(t)|G(T)) \nonumber \\
\leq& E(\sum_{t=0}^{L_0}|C(a^{\pi}(t),d^{\pi}(t))-C^*||G(T),H(t)) \nonumber \\
&+\frac{2M}{T}\nonumber \\
\leq& M L_0+ \frac{2M}{T}\nonumber \\ 
\end{align}

Now we deal with the second term. To that extent, we denote $A(T_0)$ the event $\{r_{N(T_0)-1}+\sqrt{log(T)/(N(T_0)-1)} \leq r_l\}$ and $B(T_0)$ the event $\{r_{N(T_0)-1}-\sqrt{log(T)/(N(T_0)-1)} > r_l\}$.
We have that:
\begin{align}\label{regret_ineq_4}
&E(\sum_{t=T_0}^T |C(a^{\pi}(t),d^{\pi}(t))-C^*||G(T)) \nonumber \\
=&E(\sum_{t=T_0}^T|C(a^{\pi}(t),d^{\pi}(t))-C^*||G(T),A(T_0))\nonumber \\
& \ \ \ \ \ \ \ \ \times P(A(T_0)|G(T))\nonumber \\
&+E(\sum_{t=T_0}^T|C(a^{\pi}(t),d^{\pi}(t))-C^*||G(T),B(T_0))\nonumber \\
& \ \ \ \ \ \ \ \ \times P(B(T_0)|G(T))
\end{align}
If $A(T_0)$ occurs\footnote{We consider in this case that $r^* < r_l$}, then, according to Algorithm \ref{euclid2}, we apply $0$ till the first time $t_i$ where $t_i$ satisfies $r_{i-1}+2\sqrt{\frac{log(T)}{i-1}}<r_l$. To that extent, we denote $T_1$ the first time after $T_0$, such that we have $r_{N(T_1)-1}+2\sqrt{\frac{log(T)}{N(T_1)-1}}<r_l$.

\begin{proposition}
Knowing $G(T)$ and $H(t)$ for $t \in [0,T_1-1]$. By letting $L_1$ be $\lfloor log(T)/c \Big (1+c\sqrt{\frac{9}{|r^*-r_l|^2}+2} \Big )^2 \rfloor +1$, then $T_1 \leq L_1$.  
\end{proposition}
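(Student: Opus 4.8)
The plan is to mirror the proof of Proposition \ref{prop:T_0_less_than_L_0} almost verbatim, the only change being that the confidence radius in the stopping condition defining $T_1$ is doubled, which propagates a factor $9=(1+2)^2$ in place of the factor $4=(1+1)^2$ appearing in $L_0$. I would argue by contradiction: suppose $T_1 > L_1$. Then $L_1 \in [0,T_1-1]$, so both $G(T)$ and $H(L_1)$ are available in our conditioning, and I will show they force the stopping test $r_{i-1}+2\sqrt{log(T)/(i-1)}<r_l$ to have been satisfied at some successful-transmission time $t_i \leq L_1$, contradicting the minimality of $T_1$.

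First I would locate the index threshold past which the stopping test is guaranteed. Since we are in the event $A(T_0)$, we have $r^*<r_l$, hence $|r^*-r_l|=r_l-r^*>0$. On $G(T)$ we have $r_{i-1}\leq r^*+\sqrt{log(T)/(i-1)}$, so
\begin{equation}
r_{i-1}+2\sqrt{\frac{log(T)}{i-1}}\leq r^*+3\sqrt{\frac{log(T)}{i-1}}.
\end{equation}
The right-hand side is strictly below $r_l=r^*+|r^*-r_l|$ as soon as $3\sqrt{log(T)/(i-1)}<|r^*-r_l|$, i.e. as soon as $i-1>\frac{9\,log(T)}{|r^*-r_l|^2}$. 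This is exactly where the coefficient $9$ enters, the extra factor $2$ in the radius turning $(1+1)^2$ into $(1+2)^2$. Since $N(t_i)=i$ and $N(\cdot)$ is non-decreasing, it therefore suffices to exhibit $L_1$ with $N(L_1)-1>\frac{9\,log(T)}{|r^*-r_l|^2}$.

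Next I would feed in the lower bound from $H(L_1)$, namely $N(L_1)\geq \frac{1}{c^2}\big(\sqrt{L_1 c}\,(\sqrt{L_1 c}-\sqrt{log(T)})-1\big)$, and evaluate it at $L_1=\frac{log(T)}{c}\big(1+c\sqrt{\frac{9}{|r^*-r_l|^2}+2}\big)^2$ (the integer rounding in the definition of $L_1$ only increases the time, hence increases $N$). Writing $b=\sqrt{\frac{9}{|r^*-r_l|^2}+2}$ gives $\sqrt{L_1 c}=\sqrt{log(T)}\,(1+cb)$ and $\sqrt{L_1 c}-\sqrt{log(T)}=\sqrt{log(T)}\,cb$, so after substitution and using $b^2=\frac{9}{|r^*-r_l|^2}+2$,
\begin{equation}
N(L_1)\geq \frac{9\,log(T)}{|r^*-r_l|^2}+\frac{b\,log(T)}{c}+2\,log(T)-\frac{1}{c^2}.
\end{equation}
Hence $N(L_1)-1>\frac{9\,log(T)}{|r^*-r_l|^2}$ provided $\frac{b\,log(T)}{c}+2\,log(T)-\frac{1}{c^2}-1>0$, which holds for the sufficiently large $T$ we have fixed throughout the section. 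This supplies the contradiction and yields $T_1\leq L_1$.

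The only genuinely delicate point is bookkeeping rather than substance: keeping the index shift between the \emph{time} $t$ and the \emph{number of successful transmissions} $N(t)$ consistent (the stopping test is phrased in $r_{N(\cdot)-1}$ with radius $\sqrt{log(T)/(N(\cdot)-1)}$), and justifying that conditioning on $H(L_1)$ is legitimate, which is precisely why the contradiction hypothesis $T_1>L_1$ is invoked so that $L_1$ lands inside $[0,T_1-1]$. All remaining manipulations are the identical algebraic identity already used for $L_0$, with $4$ replaced by $9$.
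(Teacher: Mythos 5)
Your proof is correct and is essentially the paper's own argument: the paper omits this proof, stating that it follows the same procedure as Proposition \ref{prop:T_0_less_than_L_0}, and your proof is exactly that adaptation — contradiction via $T_1>L_1$ to legitimize conditioning on $H(L_1)$, the one-sided stopping test bounded by $r^*+3\sqrt{\log(T)/(i-1)}$ on $G(T)$ (the doubled radius turning the constant $4$ into $9$), and the same algebraic evaluation of the $H(L_1)$ lower bound at $L_1$. The minor placement of the $-1$ in the $H(t)$ event and the appeal to large $T$ are immaterial bookkeeping differences.
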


\begin{proof}
We omit the proof as it follows the same procedure as done for Proposition \ref{prop:T_0_less_than_L_0}.
\end{proof}

Leveraging the above Proposition, we have that:
\begin{align}\label{regret_ineq_5}
&E(\sum_{t=T_0}^T |C(a^{\pi}(t),d^{\pi}(t))-C^*||G(T),A(T_0)) \nonumber \\
=&E(\sum_{t=T_0}^{T_1-1} |C(a^{\pi}(t),d^{\pi}(t))-C^*||G(T),A(T_0))\nonumber \\
&+E(\sum_{t=T_1}^T|C(a^{\pi}(t),d^{\pi}(t))-C^*||G(T),A(T_0)) \nonumber \\ 
\leq& ML_1+E(\sum_{t=T_1}^T |C(a^{\pi}(t),d^{\pi}(t))-C^*||G(T),A(T_0))
\end{align}
Knowing $G(T)$ and $A(T_0)$, we have for $t \geq T_1$, $r_{N(t)}$ is less than $r^*+\sqrt{log(T)/N(L_1)} < r_l$. Hence, by letting $a$ be $r^*+\sqrt{log(T)/N(L_1)}$, for all $r \leq a$, the function $r \rightarrow n(r)$ that represents the optimal threshold under the transition probability $r$ of the source, is lipchitz function. To that extent, we denote by $C_1 \geq 1$ the constant such that for all $r,r' \in [0,a]^2$, if $|r-r'|< \delta$, then $|n(r)-n(r')|< C_1\delta$.
Moreover the function $n(r)$ is upper bounded by a constant denoted by $m$ for $0 \leq r \leq a$.  
\begin{proposition}\label{prop:probability_N_t_threshold_n_r}
Denoting by $t'$, $t-T_1$, and by $N'(t)$, $N(t)-N(T_1)$, then we have for $t\geq T_1$:
\begin{align}
&P(N'(t) > \frac{1}{(m+c)^2}(\sqrt{t'(m+c)}(\sqrt{t'(m+c)}-\sqrt{log(T)})-1 \nonumber \\
& \ \ \ \ \ \ \ \ \ \ \ \ \ |G(T),A(T_0))\geq 1-\frac{2}{T^2}
\end{align}
\end{proposition}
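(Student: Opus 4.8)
The plan is to reduce this statement to the already-established Proposition~\ref{prop:probability_N_t_threshold_0} by a stochastic-domination argument, replacing the mean inter-success time $c$ of the threshold-$0$ regime by the enlarged mean $m+c$ that governs the regime after $T_1$. First I would set up the renewal structure of the successful transmissions occurring after time $T_1$. Writing $t'=t-T_1$ and $N'(t)=N(t)-N(T_1)$, I would express $N'(t)$ as the counting process associated with the successive inter-success intervals $\eta_1,\eta_2,\dots$ that elapse between consecutive successful deliveries after $T_1$, so that $\{N'(t)\ge K\}=\{\sum_{i=1}^{K}\eta_i\le t'\}$ and hence $P(N'(t)\le K)=P(\sum_{i=1}^{K+1}\eta_i>t')$. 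The whole argument is then a concentration bound on the partial sum $\sum_i \eta_i$, exactly as in the proof of Proposition~\ref{prop:probability_N_t_threshold_0}.

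The key observation is that, conditioned on $G(T)$ and $A(T_0)$, each interval $\eta_i$ decomposes into a deterministic passive phase followed by a geometric channel-waiting phase. Indeed, on $G(T)\cap A(T_0)$ it has been established that for every $t\ge T_1$ the applied threshold equals $n(r_{N(t)})\le m$; thus after each success the metric climbs deterministically through at most $m$ passive steps before the user becomes active, after which the number of additional slots until the channel is good is one of the geometric variables $c_i$ with $P(c_i=k)=(1-\rho)^k\rho$ and mean $c$. Consequently $\eta_i$ is stochastically dominated by $m+c_i$, so $\mathbb{E}[\eta_i\mid G(T),A(T_0)]\le m+c$, and the partial sums $\sum_{i=1}^{K}\eta_i$ are dominated by $Km+\sum_{i=1}^{K}c_i$. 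I would also note that conditioning on $G(T)$ and $A(T_0)$ — events measurable with respect to the source-transition realizations $p(\cdot)$ — does not disturb the i.i.d. geometric law of the future channel realizations, since the channel process $c(t)$ is independent of the $p(t_k)$ used to build the estimators $r_i$; this is what legitimizes applying the tail bound under the conditioning.

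Having reduced everything to the tail of $\sum_{i=1}^{K+1}\eta_i$, I would then repeat the same moment-generating-function (Chernoff) tail estimate used for Proposition~\ref{prop:probability_N_t_threshold_0}, with the single substitution $c\mapsto m+c$: choosing $K=\frac{1}{(m+c)^2}\big(\sqrt{t'(m+c)}(\sqrt{t'(m+c)}-\sqrt{\log T})-1\big)$ makes the target time $t'$ sit a $\Theta(\sqrt{t'\log T})$ deviation above the mean $(K+1)(m+c)$, so that $P(\sum_{i=1}^{K+1}\eta_i>t')\le 2/T^2$, which yields the claimed inequality. The main obstacle I anticipate is the bookkeeping needed to make the domination rigorous: the thresholds $n(r_{N(t)})$ are themselves random and vary from cycle to cycle as the estimator $r_i$ is refreshed, so I must argue that each such threshold remains uniformly bounded by $m$ on $G(T)\cap A(T_0)$ — which follows from $r_{N(t)}\le r^*+\sqrt{\log T/N(L_1)}<r_l$ together with the boundedness of $n(\cdot)$ on $[0,a]$ — in order to obtain the uniform domination $\eta_i\preceq m+c_i$ required before the concentration step can be invoked.
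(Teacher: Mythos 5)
Your proposal is correct and follows essentially the same route as the paper's proof: both decompose each inter-success interval after $T_1$ into a passive phase bounded by $m$ (using that the applied thresholds $n(r_i)$ stay below $m$ on $G(T)\cap A(T_0)$) plus a geometric waiting time $c_i$, apply the same concentration bound to the sum of the $c_i$, and then solve the same quadratic inequality as in Proposition \ref{prop:probability_N_t_threshold_0} to convert the time bound into the stated lower bound on $N'(t)$. Your renewal/stochastic-domination framing with a fixed $K$ is merely a slightly more careful packaging of the paper's argument (which writes $t_{N(t)+1}-T_1$ explicitly and applies the concentration step with the random index $N(t)$), not a different method.
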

\begin{proof}
See appendix \ref{app:prop:probability_N_t_threshold_n_r}
\end{proof}
We denote $W(t)$ the event $\{N'(t) > \frac{1}{(m+c)^2}(\sqrt{t'(m+c)}(\sqrt{t'(m+c)}-\sqrt{log(T)})-1\}$.
Accordingly, we have that:
\begin{align}\label{regret_ineq_6}
&E(\sum_{t=T_1}^T |C(a^{\pi}(t),d^{\pi}(t))-C^*||G(T),A(T_0))\nonumber \\
=&E(\sum_{t=T_1}^T |C(a^{\pi}(t),d^{\pi}(t))-C^*||G(T),A(T_0),W(t))\nonumber \\
& \ \ \ \ \ \ \ \times P(W(t)|G(T),A(T_0))\nonumber \\
+&E(\sum_{t=T_1}^T |C(a^{\pi}(t),d^{\pi}(t))-C^*||G(T),A(T_0),\bar{W}(t))\nonumber \\
& \ \ \ \ \ \ \ \times P(\bar{W}(t)|G(T),A(T_0)) \nonumber \\
\leq& E(\sum_{t=T_1}^T|C(a^{\pi}(t),d^{\pi}(t))-C^*||G(T),A(T_0),W(t))\nonumber \\
&+\frac{2M}{T}
\end{align}
Leveraging the above equations, our goal in the sequel will be to bound the term $E(\sum_{t=T_1}^T|C(a^{\pi}(t),d^{\pi}(t))-C^*||G(T),A(T_0),W(t))$. 
To that extent we provide this following proposition.
We have:
\begin{proposition}\label{prop:similarity_prob_pi_and_optim_thresh}
There exists a constant $C_2$ such that for $L_2=C_2log(T)$, we have that:\\
\begin{align}
&P(F(L_2))\nonumber \\
&=P(\{ \underset{t\in[L_2,T]}{\cup} \pi(s^{\pi}(t)) \neq \pi^*(s^{\pi}(t))|G(T), A(T_0),W(t)\})\nonumber \\
&\leq \frac{2}{T}
\end{align}
\end{proposition}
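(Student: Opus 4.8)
The plan is to show that, conditionally on the favourable events, the only way the applied policy $\pi$ can disagree with the optimal policy $\pi^*$ at some time $t\ge L_2$ is for one of the lower-bound events $W(t)$ to fail, and then to control the total disagreement probability by a union bound over $t\in[L_2,T]$. (I read the conditioning in the statement as asserting $P(F(L_2)\mid G(T),A(T_0))\le 2/T$, the event $W(t)$ being the per-slot lower bound on $N(t)$ that enters the argument.)

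First I would exploit the structure of the threshold map $r\mapsto n(r)$ set up just before the statement. On $A(T_0)$ we have $r^*<r_l$, and for $t\ge T_1$ the monitor applies the threshold $n(r_{N(t)-1})$. Recall that on $[0,a]$, with $a=r^*+\sqrt{\log(T)/N(L_1)}<r_l$, the map $n(\cdot)$ is $C_1$-Lipschitz and integer valued with $C_1\ge 1$. Hence, if the current estimate satisfies $|r_{N(t)-1}-r^*|<1/C_1$, then $|n(r_{N(t)-1})-n(r^*)|<1$, and since both are integers this forces $n(r_{N(t)-1})=n(r^*)=n^*$; that is, $\pi$ and $\pi^*$ prescribe the same threshold, so $\pi(s^{\pi}(t))=\pi^*(s^{\pi}(t))$. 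I would also record that, on $G(T)\cap A(T_0)$ and for $t\ge T_1$, both $r_{N(t)-1}$ and $r^*$ stay inside the Lipschitz region $[0,a]$, so this step argument is legitimate.

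Second, I would convert the condition $|r_{N(t)-1}-r^*|<1/C_1$ into a lower bound on the number of successful transmissions. On $G(T)$ the estimator obeys $|r_{N(t)-1}-r^*|\le\sqrt{\log(T)/(N(t)-1)}$, which is smaller than $1/C_1$ as soon as $N(t)-1>C_1^2\log(T)$. On $W(t)$ the bound of Proposition \ref{prop:probability_N_t_threshold_n_r} gives, with $t'=t-T_1$,
$$N(t)\ge N'(t)>\frac{1}{(m+c)^2}\Big(\sqrt{t'(m+c)}\big(\sqrt{t'(m+c)}-\sqrt{\log T}\big)-1\Big),$$
whose right-hand side equals $\tfrac{t'}{m+c}-O(\sqrt{t'\log T})$ and thus grows linearly in $t'$. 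Using $T_1\le L_1=O(\log T)$, for $t\ge L_2=C_2\log T$ we have $t'\ge (C_2-c')\log T$ for some constant $c'$, so the linear term dominates and one can fix $C_2$ — depending only on $m$, $c$, $C_1$, $\rho$, all independent of $T$ — large enough that $N(t)-1>C_1^2\log(T)$. Combining the two steps, on $G(T)\cap A(T_0)\cap W(t)$ with $t\ge L_2$ the two policies necessarily coincide.

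Finally I would assemble the probability bound. The previous paragraph shows that, conditionally on $G(T)\cap A(T_0)$, a disagreement at some $t\in[L_2,T]$ can only occur if the corresponding $W(t)$ fails; hence $F(L_2)\cap G(T)\cap A(T_0)\subseteq\bigcup_{t=L_2}^{T}\bar{W}(t)$. Proposition \ref{prop:probability_N_t_threshold_n_r} gives $P(\bar{W}(t)\mid G(T),A(T_0))\le 2/T^2$, so a union bound over the at most $T$ indices yields $P(F(L_2)\mid G(T),A(T_0))\le T\cdot 2/T^2=2/T$, as claimed. I expect the calibration of $C_2$ in the second step to be the main obstacle: one must verify that the linear-in-$t'$ growth of the $W(t)$ lower bound on $N(t)$ beats both the $C_1^2\log T$ threshold required to trap the estimate within a single step of $n(\cdot)$ and the $O(\log T)$ offset coming from $T_1\le L_1$, while also confirming that $r_{N(t)-1}$ and $r^*$ indeed remain in the Lipschitz region $[0,a]$ so that the integer-valued Lipschitz collapse is valid.
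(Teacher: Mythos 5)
Your proposal is correct, but it follows a genuinely different route from the paper's own proof. The paper argues through two auxiliary lemmas: Lemma \ref{lem:upper_bound_probability_fixed_N_t} bounds the disagreement probability \emph{conditional on} $N(t)=j$ by $2\exp(-2b^2(j-1))$ --- using the same Lipschitz-plus-integrality reduction you use (disagreement forces $|n(r_{N(t)-1})-n(r^*)|\geq 1$, hence $|r_{N(t)-1}-r^*|\geq b$ with $b$ essentially $1/C_1$), but then invoking a fresh application of Hoeffding's inequality (Lemma \ref{lem:hoeffding_inequality}) rather than the uniform guarantee already contained in $G(T)$; Lemma \ref{lem:upper_bound_N_t} then shows that on $W(t)$, $t\geq L_2=C_2\log(T)$ forces $N'(t)\geq \log(T)/b^2+1$. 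The paper concludes by summing over $t\in[L_2,T]$ and taking the expectation of $2\exp(-2b^2(N(t)-1))$ over the conditional distribution of $N(t)$, each term being at most $2/T^2$. You instead use the event $G(T)$ \emph{deterministically}: on $G(T)$ the estimator satisfies $|r_{N(t)-1}-r^*|\leq\sqrt{\log(T)/(N(t)-1)}$, so once $W(t)$ and the calibration of $C_2$ force $N(t)-1>C_1^2\log(T)$, disagreement is impossible, and all of the failure probability is shifted onto $\bigcup_{t\geq L_2}\bar{W}(t)$, handled by a union bound with $P(\bar{W}(t)\mid G(T),A(T_0))\leq 2/T^2$ from Proposition \ref{prop:probability_N_t_threshold_n_r}. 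The two calibrations of $C_2$ coincide (your $C_1^2\log T$ threshold is the paper's $\log T/b^2$), and both end at $T\cdot 2/T^2=2/T$. What your version buys: it is more elementary, dispensing with the second Hoeffding application and the expectation over $N(t)$, and it sidesteps the delicate point of applying a concentration inequality conditionally on $\{G(T),A(T_0),N(t)=j\}$, where conditioning could in principle distort the sample distribution (a point the paper treats rather casually). What the paper's version buys: a self-contained per-$j$ exponential bound at each time $t$ that does not rely on reading $G(T)$ as a deterministic envelope, and which degrades gracefully if one only has the weaker event $W(t)$ without $G(T)$. Your handling of the $t$ versus $t'=t-T_1$ offset via $T_1\leq L_1=O(\log T)$ is also slightly more careful than the paper's Lemma \ref{lem:upper_bound_N_t}, which silently conflates the two.
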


\begin{proof}
See Appendix \ref{app:prop:similarity_prob_pi_and_optim_thresh}
\end{proof}

We have that:
\begin{align}
&E(\sum_{t=T_1}^T|C(a^{\pi}(t),d^{\pi}(t))-C^*||G(T),A(T_0),W(t))\nonumber \\
=&E(\sum_{t=T_1}^{L_2-1}|C(a^{\pi}(t),d^{\pi}(t))-C^*||G(T),A(T_0),W(t))\nonumber \\
&+E(\sum_{t=L_2}^T|C(a^{\pi}(t),d^{\pi}(t))-C^*||G(T),A(T_0),W(t))\nonumber \\
&\leq M L_2 \nonumber \\
&+E(\sum_{t=L_2}^T|C(a^{\pi}(t),d^{\pi}(t))-C^*||G(T),A(T_0),W(t))
\end{align}
We have for $t\geq L_2$, leveraging the above Proposition:
\begin{align}
&E(\sum_{t=L_2}^T|C(a^{\pi}(t),d^{\pi}(t))-C^*||G(T),A(T_0),W(t)) \nonumber \\
=&E(\sum_{t=L_2}^T|C(a^{\pi}(t),d^{\pi}(t))-C^*||G(T),A(T_0),W(t),F(T_1)) \nonumber \\
& \ \ \ \ \ \ \ \times P(F(L_2)) \nonumber \\
+&E(\sum_{t=L_2}^T|C(a^{\pi}(t),d^{\pi}(t))-C^*||G(T),A(T_0),W(t),\bar{F}(L_2)) \nonumber \\
& \ \ \ \ \ \ \ \times P(\bar{F}(L_2)) \nonumber \\
\leq & E(\sum_{t=T_1}^{L_2-1}|C(a^{\pi}(t),d^{\pi}(t))-C^*||G(T),A(T_0),W(t),F(L_2)) \nonumber \\
&\leq 2M \\
&+E(\sum_{t=L_2}^T|C(a^{\pi}(t),d^{\pi}(t))-C^*||G(T),A(T_0),W(t),\bar{F}(L_2))
\end{align}
In the sequel, we bound the term $E(\sum_{t=L_2}^T|C(a^{\pi}(t),d^{\pi}(t))-C^*||G(T),A(T_0),W(t),\bar{F}(L_2))$. Given $\bar{F}(L_2)$, then $\pi(s(t))=\pi^*(s(t))$ for all $t\geq L_2$. Hence starting from $L_2$, we apply exactly the optimal threshold $n^*$. To that extent, we denote by $u_i(t)$ the probability that MAoII is at state $i$ at time $t \geq L_2$ under threshold policy $n^*$. We have that:
\begin{align}
E(C(a^{\pi}(t),d^{\pi}(t))| A(T_0),\bar{F(L_2)})=\sum_{i=0}^{+\infty} u_i(t)a_i+\lambda \sum_{i=n^*}^{+\infty}u_i(t)
\end{align}
On the other hands:
\begin{align}
C^*=\sum_{i=0}^{+\infty} u_i^*a_i+\lambda\sum_{i=n^*}^{+\infty}u_i^*
\end{align}
Therefore:
\begin{align}
|C(a^{\pi}(t),d^{\pi}(t))-C^*|\leq& \sum_{i=0}^{+\infty} |u_i(t)-u_i^*|a_i \nonumber \\
&+\lambda|\sum_{i=n^*}^{+\infty}u_i(t)-u_i^*|
\end{align}
We need to find a bound for $|u_i(t)-u_i^*|$ that depends on $t$. To that end, we need to express the distribution of DTMC at time $t+1$ in function of the one at time $t$. In order to easily manipulate and analyze the evolution of $u(t)=(u_0(t),u_1(t),\cdots)^{\top}$, we restrict our analysis only to the $n^*$ first terms of the vector $u(t)$ ($n^*$ is finite knowing $A(T_0)$). While, for the states greater or equal to $n^*$, we consider that they constitute one state. Without loss of generality, we denote this state by $n^*$. In other words, we consider the vector $u'(t)=(u_0(t),u_1(t),\cdots,u'_{n^*}(t))^{\top}$ where $u'_{n^*}(t)=\sum_{i=n^*}^{+\infty}u_i(t)$. As consequence, the transition probability (the relation between $u'(t+1)$ and $u'(t)$) considering this new vector is equal:
\begin{equation}Q=\left[\begin{array}{ccccccc}
0&0&\cdots &\cdots &0 &\rho \\
1&0&\cdots &\cdots &0  &0 \\
0 &1 &\ddots & &  &0 \\
&\ddots&1 &\ddots&&\vdots\\
 &&\ddots&\ddots &0 &0\\
0&0&\cdots&0&1&1-\rho\\
\end{array}\right].
\end{equation}
Moreover as $\sum_{i=0}^{n^*}u^{'}_i(t)=1$, we replace $u^{'}_{n^*}(t)$ by $1-\sum_{i=0}^{n^*-1}u^{'}_i(t)$, and we get the following transition probability by omitting the element $u^{'}_{n^*}(t)$ from the vector $u^{'}(t)$: 
\begin{equation}Q=\left[\begin{array}{ccccccc}
-\rho&-\rho&\cdots &\cdots &-\rho &-\rho \\
1&0&\cdots &\cdots &0  &0 \\
0 &1 &\ddots & &  &0 \\
&\ddots&1 &\ddots&&\vdots\\
 &&\ddots&\ddots &0 &0\\
0&0&\cdots&0&1&0\\
\end{array}\right].
\end{equation}
Thus, $u'(t+1)=Qu'(t)+v$, where $u'(t)=(u_1(t),u_2(t), \cdots, u_{n^*-1}(t))^{\top}$ and $v=(\rho,0,\cdots,0)^{\top}$.
On the other hands, by definition of the stationary distribution, $u^*=Qu^*+v$, where  $u^*(t)=(u_1^*,u_2^*, \cdots, u_{n^*-1}^*)^{\top}$
Hence, $u'(t+1)-u^*=Q(u'(t)-u^*)$.
Consequently: $||u'(t)-u^*|| \leq ||Q||^t ||u(L_2)-u^*||$.
Our aim will be then to prove that $||Q||$ is strictly less than $1$.
\begin{proposition}\label{prop:spectral_value_less_one}
The spectral value of $Q$ denoted by $\gamma$ is strictly less than 1.
\end{proposition}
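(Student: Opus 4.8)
The plan is to compute the characteristic polynomial of $Q$ explicitly, exploiting its companion-like (first row full of $-\rho$, unit subdiagonal) structure, and then to prove that every root lies strictly inside the unit disk. Let $m$ denote the order of $Q$ and suppose $x$ is an eigenvalue with eigenvector $v=(v_1,\dots,v_m)^{\top}$. Rows $2,\dots,m$ of $Qv=xv$ read $v_{k-1}=x\,v_k$, which forces $v_k=x^{m-k}v_m$ with $v_m\neq 0$ (otherwise $v=0$). Substituting this into the first row $-\rho\sum_{k=1}^{m}v_k=x\,v_1$ and dividing by $v_m$ gives the scalar equation $x^{m}+\rho\sum_{j=0}^{m-1}x^{j}=0$. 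Multiplying by $(x-1)$ and using $(x-1)\sum_{j=0}^{m-1}x^{j}=x^{m}-1$ yields $P(x):=x^{m+1}-(1-\rho)x^{m}-\rho=0$. Since $P(1)=1-(1-\rho)-\rho=0$, the factor $(x-1)$ is spurious, so the eigenvalues of $Q$ are exactly the roots of $P$ other than $x=1$, and it suffices to show each such root satisfies $|x|<1$.

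I would then rewrite the root condition as $x^{m+1}=(1-\rho)x^{m}+\rho$, which displays $x^{m+1}$ as a convex combination of $x^{m}$ and $1$ with weights $1-\rho,\rho\in(0,1)$ (this uses that the channel is genuinely unreliable, i.e. $0<\rho<1$). Taking moduli and applying the triangle inequality, any root with $s:=|x|$ obeys $s^{m+1}\leq (1-\rho)s^{m}+\rho$, that is $P(s)\leq 0$ when $P$ is read as a real function. A short computation gives $P(1)=0$ and $P'(s)=s^{m-1}\big((m+1)s-(1-\rho)m\big)$, and for $s\geq 1$ the bracket is at least $1+\rho m>0$, so $P$ is strictly increasing on $[1,\infty)$ and therefore $P(s)>0$ for every $s>1$. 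This contradiction rules out any root with $|x|>1$.

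The delicate case, which I expect to be the main obstacle, is $|x|=1$ with $x\neq 1$, since this is precisely where the modulus bound is attained with equality and where strictness can genuinely fail. Here I would invoke the equality condition in the triangle inequality: $1=|x|^{m+1}=|(1-\rho)x^{m}+\rho|$ can equal $(1-\rho)+\rho=1$ only if the summands $(1-\rho)x^{m}$ and $\rho$ are non-negatively aligned; because $\rho>0$ is a positive real, this forces $x^{m}$ to be a non-negative real of modulus one, i.e. $x^{m}=1$. Feeding $x^{m}=1$ back into the root equation gives $x=x^{m+1}=(1-\rho)+\rho=1$, contradicting $x\neq 1$. Hence no root other than $x=1$ lies on the unit circle, and combining the two cases shows every eigenvalue of $Q$ has modulus strictly below one, i.e. $\gamma<1$.

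It is worth noting that the hypothesis $\rho<1$ is essential: at $\rho=1$ one has $P(x)=x^{m+1}-1$, whose roots are $(m+1)$-th roots of unity on the circle, reflecting that the lumped threshold chain degenerates into a deterministic periodic cycle. As an independent sanity check (and an alternative proof route), one can argue probabilistically: $Q$ arises from the column-stochastic transition matrix of the lumped threshold-$n^{*}$ chain by eliminating the conserved quantity $\sum_i u_i=1$, so its spectrum is that of the stochastic matrix with one eigenvalue $1$ removed; since the lumped chain is irreducible and aperiodic (the top state carries a self-loop of probability $1-\rho>0$), the Perron--Frobenius theorem gives a simple eigenvalue at $1$ and all remaining eigenvalues strictly inside the unit disk, again giving $\gamma<1$.
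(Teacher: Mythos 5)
Your proof is correct, and while it shares the paper's overall strategy (derive the characteristic polynomial of $Q$, then rule out roots on or outside the unit circle by contradiction), the execution differs at both key steps in ways worth noting. Both arguments in fact revolve around the same trinomial: the paper rewrites its characteristic equation as $\lambda^{n^*}(\lambda+\rho-1)=\rho$ via the geometric-sum identity, which is exactly your $P(\lambda)=\lambda^{n^*+1}-(1-\rho)\lambda^{n^*}-\rho=0$ obtained by multiplying by $(x-1)$. From there the paper handles $|\lambda|\geq 1$ with the \emph{reverse} triangle inequality, landing on the quadratic inequality $|\lambda|^2-(1-\rho)|\lambda|-\rho\leq 0$ and hence $|\lambda|=1$, and then disposes of the unit-circle case by an explicit real/imaginary-part computation ($\lambda=x+iy$, squaring moduli, using $x^2+y^2=1$) to force $\lambda=1$. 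You instead use the \emph{forward} triangle inequality plus strict monotonicity of $P$ on $[1,\infty)$ to exclude $|x|>1$, and the equality case of the triangle inequality (alignment of $(1-\rho)x^{m}$ with $\rho>0$, forcing $x^{m}=1$ and hence $x=1$) to exclude $|x|=1$; this second step is cleaner than the paper's coordinate computation and generalizes more readily. Two small remarks: first, your phrase ``the factor $(x-1)$ is spurious'' silently uses the fact that $1$ is not a root of the original eigenvalue polynomial — this is the one-line check $1+\rho m\neq 0$, which the paper states explicitly and you should too; second, your closing Perron--Frobenius argument (spectrum of $Q$ equals that of the lumped column-stochastic matrix restricted to the invariant subspace $\{\mathbf{1}^{\top}u=0\}$, and the lumped chain is irreducible and aperiodic since the top state has self-loop probability $1-\rho>0$) is a genuinely different second proof absent from the paper, and it nicely explains why the hypothesis $\rho<1$ — which the paper uses only implicitly when dividing by $1-\rho$ — is essential rather than technical.
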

\begin{proof}
See Appendix \ref{app:prop:spectral_value_less_one}.
\end{proof}
Leveraging this proposition, we have that $\sum_{i=0}^{n^*-1} |u_i(t)-u_i^*|a_i \leq \sum_{i=0}^{n^*-1} \gamma^t ||u_i(L_2)-u_i^*|| a_l$. And, $\lambda|\sum_{i=n^*}^{+\infty} u_i(t)-\sum_{i=n^*}^{+\infty} u_i^*|=\lambda|\sum_{i=0}^{n^*-1} u_i(t)-\sum_{i=0}^{n^*-1} u_i^*| \leq \sum_{i=0}^{n^*-1} \lambda\gamma^t ||u_i(L_2)-u_i^*||$.\\
We still have to deal with $\sum_{i=n^*}^{+\infty} |u_i(t)-u_i^*|a_i$.
We suppose that at time $t=L_2$, MAoII is at state\footnote{The analysis follows the same steps starting from a different state than $a_0$ at time $L_2$} $a_0$. Hence, we have:
\begin{align}
&\sum_{i=n^*}^{+\infty} |u_i(t)-u_i^*|a_i \nonumber \\
&=\sum_{i=n^*}^{t-L_2} |u_i(t)-u_i^*|a_i+\sum_{i=t-L_2+1}^{+\infty} |u_i(t)-u_i^*|a_i \nonumber \\  
&\leq  \sum_{i=n^*}^{t-L_2} |(1-\rho)^{i-n^*}u_0(t-i)-(1-\rho)^{i-n^*}u_0^*|a_i  \nonumber \\
&+ \sum_{i=t-L_2+1}^{+\infty} (1-\rho)^{i-n^*}u_0^*a_i \nonumber\\
&\leq^{(a)}   \sum_{i=n^*}^{t-L_2}a_l (1-\rho)^{i-n^*}\gamma^{t-i}||u(L_2)-u^*|| \nonumber \\
&+  \sum_{i=t-L_2+1}^{+\infty} a_l (1-\rho)^{i-n^*}u_0^* \nonumber \\
&\leq   a_l \frac{\gamma^{t-L_2-n^*+1}-(1-\rho)^{t-L_2-n^*+1}}{\gamma-(1-\rho)}||u(L_2)-u^*|| \nonumber \\
&+ a_l (1-\rho)^{t-L_2-n^*+1} \frac{1}{\rho} u_0^* 
\end{align}
$(a)$ comes from the fact that at time $t$, for all $i > t-L_2$, $u_i(t)=0$ since we move at most by one state at each time slot.  
Therefore: 
\begin{align}
&|C(a^{\pi}(t),d^{\pi}(t))-C^*| \nonumber \\
&\leq \sum_{i=0}^{n^*-1} \gamma^{t-L_2} ||u_i(L_2)-u_i^*|| a_l \nonumber \\
&+a_l \frac{\gamma^{t-L_2-n^*+1}-(1-\rho)^{t-L_2-n^*+1}}{\gamma-(1-\rho)}||u(L_2)-u^*|| \nonumber \\
&+ a_l (1-\rho)^{t-L_2-n^*+1} \frac{1}{\rho} u_0^*+\sum_{i=0}^{n^*-1} \lambda\gamma^t ||u_i(L_2)-u_i^*||
\end{align}
As consequence:
\begin{align}\label{regret_ineq_7}
&E(\sum_{t=L_2}^T|C(a^{\pi}(t),d^{\pi}(t))-C^*||G(T),A(T_0),W(t),\bar{F}(L_2)) \nonumber \\ 
&\leq \sum_{t=0}^T \Big [ \sum_{i=0}^{n^*-1} \gamma^t ||u_i(L_2)-u_i^*|| a_l \nonumber \\
&+ a_l \frac{\gamma^{t-n^*+1}-(1-\rho)^{t-n^*+1}}{\gamma-(1-\rho)}||u(L_2)-u^*|| \nonumber \\
&+ a_l (1-\rho)^{t-n^*+1} \frac{1}{\rho} u_0^*+\sum_{i=0}^{n^*-1} \lambda\gamma^t ||u(L_2)-u^*|| \Big ] \nonumber \\
&\leq K_1
\end{align}
      
where $K_1$ is a constant independent of $T$.   
If $B(T_0)$ occurs\footnote{we consider that $r^* \geq r_l$}, then according to Algorithm \ref{euclid2}, we apply the infinite threshold. Knowing the event $G(T)$, $r^* \geq r_l$. That means the optimal threshold will be effectively infinite knowing $G(T)$. Thus, $C^*=a_l$ and $C(a^{\pi}(t),d^{\pi}(t))=a_{t-T_0}$. 
\begin{align}\label{regret_ineq_8}
&E(\sum_{t=T_0}^T|C(a^{\pi}(t),d^{\pi}(t))-C^*||G(T),B(T_0))\nonumber\\ &=\sum_{t=T_0}^T |a_{t-T_0}-a_l| \nonumber \\
& \leq \sum_{t=T_0}^T \frac{(p-r)^{t-T_0+1}}{Nr}+\frac{(1-r)^{t-T_0+1}}{r} \nonumber \\
&\leq K_2
\end{align}
where $K_2$ a constant independent of $T$.
Therefore, combining \eqref{regret_ineq_1},\eqref{regret_ineq_2},\eqref{regret_ineq_3},\eqref{regret_ineq_4},\eqref{regret_ineq_5},\eqref{regret_ineq_6}, \eqref{regret_ineq_7} and \eqref{regret_ineq_8}, we get our desired result, i.e., there exists a constant $K$ such that $R_{\pi}(T)\leq K log(T)$. Hence, we proved Theorem \ref{theo:bound_regret}.

\section{Numerical Results}\label{sec:num_reslt}
Our goal in this section is to evaluate the performance of our proposed solution given in Algorithm \ref{euclid2} and compare it with the greedy policy. This later consists of applying at each $t_i$, the optimal threshold with respect to the estimated transition probability at $t_i$ which is $r_{i-1}$ in the function of $T$.\\
To that extent, we showcase the evolution of the regret function under our proposed solution when the true value of $r^*$ is greater than $r_l$. Specifically, we consider these following settings:
\begin{itemize}
\item $r^*=0.25$.
\item $\rho_1=0.5$.
\item $N=5$.
\item $\lambda=8$.
\item $r_l=0.2212$\footnote{By definition of $r_l$, we get $r_l$ by resolving the equation $\frac{(N-1)(N+1-N r_l) \rho}{N^2 r_l^2}=\lambda$}.
\end{itemize}
Then we compare the regret function under our proposed solution with the one under the greedy policy when the true value of $r^*$ is strictly less than $r_l$, i.e., the optimal threshold is finite. 
We show numerically that our proposed algorithm outperforms the greedy one when the real value of $r^*$ is strictly less than $r_l$.  
To that extent, we consider the respective parameters:
\begin{itemize}
\item $r^*=0.1$.
\item $\rho_1=0.5$.
\item $N=5$.
\item $\lambda=8$.
\item $r_l=0.2212$
\end{itemize} 

\begin{figure} 
\centering
\includegraphics[scale=0.6]{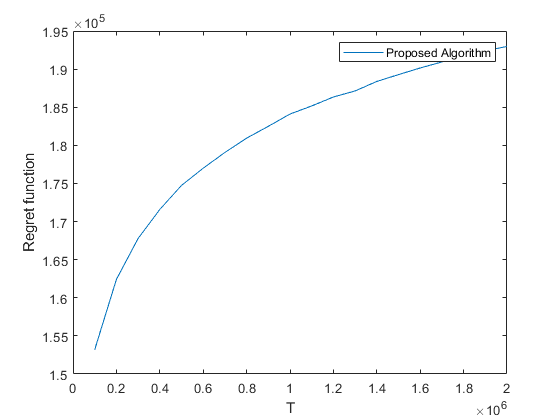}
\caption{Evolution of the regret function under the proposed policy}
\label{fig:comp_gred_propos_alg_r_greater_rlim}
\end{figure}
\begin{figure}
\centering
\includegraphics[scale=0.6]{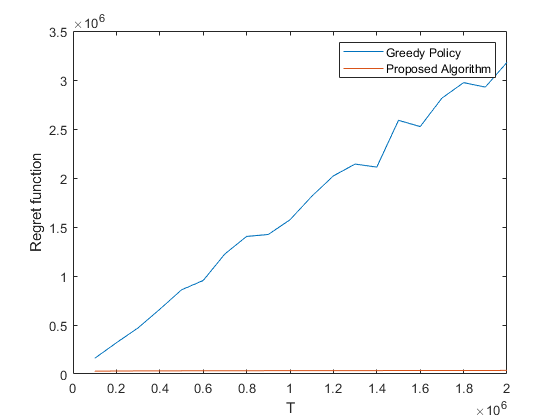}
\caption{Comparison between the greedy policy and the proposed algorithm in terms of the regret function when $r^*$ is less than $r_l$}
\label{fig:comp_gred_propos_alg_r_less_rlim}
\end{figure}
In Figure \ref{fig:comp_gred_propos_alg_r_greater_rlim}, one can notice that our solution gives us a logarithmic regret when $T$ grows.

In Figure \ref{fig:comp_gred_propos_alg_r_less_rlim}, one can observe that our proposed algorithm gives us a sub-linear regret function, precisely a logarithmic regret. Whereas the greedy policy gives us a linear regret. This is because the greedy policy always applies the optimal threshold corresponding to the estimated parameter, which can definitively stop the exploration when encountering an estimated value greater than $r_l$ even in the earliest steps of the exploration process.   
Consequently, our algorithm outperforms the greedy policy in minimizing the regret function.


In conclusion, our developed algorithm turns out to be essential to ensure a logarithmic regret whatever the value of the real $r^*$ since the greedy policy fails to reach this goal when $r^* < r_l$.


\section{Conclusion}\label{sec:concl}
In this paper, we considered the problem of remote monitoring of an unknown source where a central entity decide whether to schedule the source or not in order to receive the new updates under energy constraint. We established that the optimal policy is a threshold based policy. When the source parameters are known, we have provided a simple algorithm that finds the optimal threshold policy. When the source parameters are unknown, we developed an online reinforcement learning algorithm that gives a good balance between the exploration-exploitation trade-off. We proved that the regret function under our proposed algorithm is less than $O(log(T))$.
Finally, we have provided numerical results that highlight the performance of our proposed policy compared to the greedy one.

\bibliographystyle{IEEEtran} 

\bibliography{bibliography}

\begin{appendices}

\section{Proof of Theorem \ref{theo:threshold_policy_discounted}}\label{app:theo:threshold_policy_discounted}
In this proof, we distinguish between two cases:
\begin{itemize}
\item $1-r \geq |p-r|$
\item $1-r < |p-r|$
\end{itemize}
We start by the first case:
\begin{itemize}
\item $1-r \geq |p-r|$:\\
We provide first an useful lemma. 
\begin{lemma}\label{lem:MAoII_increasing}
$a_{j}$ is increasing with $j$
\end{lemma}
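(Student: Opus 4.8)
The plan is to avoid wrestling with the convolution sum $a_j=\sum_{k=1}^{j}k(1-p)(1-r)^{k-1}\pi_{j-k}$ head on, and instead reduce the monotonicity claim to a clean first-order recursion for $a_j$ whose increments can be solved in closed form. First I would pin down the belief sequence explicitly: since Lemma~\ref{lem:pi_evolution} gives $\pi_{k+1}=(p-r)\pi_k+r$ with $\pi_0=1$, and since \eqref{eq:relation_p_r} yields $1-p+r=Nr$, solving this affine recursion produces
\[
\pi_k=\tfrac1N+\tfrac{N-1}{N}(p-r)^k,\qquad\text{hence}\qquad \pi_{k+1}-\pi_k=-(N-1)r\,(p-r)^k .
\]
The second identity is the ingredient I will need when differencing.

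Next I would derive a recursion linking $a_{j+1}$ to $a_j$. Writing $A^{(j)}$ for the age random variable $A(t)$ at $t-g(t)=j$, I split off the $k=1$ term of $a_{j+1}$ and shift the index in the remaining sum, which gives $a_{j+1}=(1-p)\pi_j+(1-r)\big[a_j+\sum_{k=1}^{j}(1-p)(1-r)^{k-1}\pi_{j-k}\big]$. The inner sum is exactly $1-\pi_j$, because by Lemma~\ref{lem:random_variable} it equals $\sum_{k\ge 1}\Pr(A^{(j)}=k)=1-\Pr(A^{(j)}=0)$. Therefore
\[
a_{j+1}=(1-r)a_j+(1-r)-(p-r)\pi_j .
\]

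I would then study the increment $\Delta_j:=a_{j+1}-a_j$. Subtracting two consecutive copies of the recursion and inserting $\pi_{j+1}-\pi_j=-(N-1)r(p-r)^j$ yields the linear recursion $\Delta_{j+1}=(1-r)\Delta_j+(N-1)r(p-r)^{j+1}$ with initial value $\Delta_0=a_1-a_0=1-p=(N-1)r$. Solving it (homogeneous part $\propto(1-r)^{j}$ plus a particular solution $\propto(p-r)^{j}$) collapses, after the $\Delta_0$ constant cleans up to $1-r$, to the strikingly simple closed form
\[
\Delta_j=a_{j+1}-a_j=(1-r)^{j+1}-(p-r)^{j+1}.
\]
The claim then follows at once: under the case hypothesis $1-r\ge|p-r|$ we have $(1-r)^{j+1}\ge|p-r|^{j+1}\ge(p-r)^{j+1}$, so $\Delta_j\ge0$ for every $j$, i.e. $a_j$ is increasing.

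The routine parts are the index bookkeeping in the two sums and guessing/verifying the particular solution of the $\Delta$-recursion. The conceptual crux — and the step I would emphasize — is that the entire monotonicity question reduces to comparing the two geometric rates $1-r$ and $p-r$, which is precisely what the hypothesis $1-r\ge|p-r|$ controls; the same closed form makes transparent why the complementary regime $1-r<|p-r|$ produces the oscillating, non-monotone behaviour that must be handled in the separate case.
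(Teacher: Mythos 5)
Your proof is correct, and it reaches the paper's key identity by a genuinely different route. The paper's own argument starts from the asserted closed form $a_{j}=\frac{N-1}{Nr}+\frac{(p-r)^{j+1}}{Nr}-\frac{(1-r)^{j+1}}{r}$ (stated without derivation, inherited from the companion work) and simply differences it to obtain $a_{j+1}-a_{j}=(1-r)^{j+1}-(p-r)^{j+1}$, after which the conclusion under $1-r\ge|p-r|$ is the same one-line comparison of geometric rates that you give. You instead never touch the closed form: you derive the affine recursion $a_{j+1}=(1-r)a_j+(1-r)-(p-r)\pi_j$ directly from the defining convolution sum, using the normalization $\sum_{k\ge1}\Pr(A^{(j)}=k)=1-\pi_j$ from Lemma~\ref{lem:random_variable} and the explicit solution $\pi_k=\frac1N+\frac{N-1}{N}(p-r)^k$ of the belief recursion, then difference and solve the resulting linear recursion for $\Delta_j$ (all of which I checked: the particular-solution coefficient is $-1$ since $B(p-1)=(N-1)r$, and $\Delta_0=1-p$ forces the homogeneous constant $1-r$). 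What your route buys is self-containedness: the paper's proof leans on an unproved explicit expression, whereas your argument establishes the increment formula from the definitions alone, and summing your $\Delta_j$ would in fact reprove that closed form. What the paper's route buys is brevity, given that the closed form is taken as known. Both proofs share the same crux — reducing monotonicity to the comparison $(p-r)^{j+1}\le|p-r|^{j+1}\le(1-r)^{j+1}$ — and your closing remark that this comparison is exactly what fails in the regime $1-r<|p-r|$ correctly explains the oscillation handled in the paper's second case.
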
 

\begin{IEEEproof} 
The explicit expression of $a_{j}$ is:
\begin{equation}\label{eq:expression_maoii}
a_{j}=\frac{N-1}{Nr}+\frac{(p-r)^{i+1}}{Nr}-\frac{(1-r)^{i+1}}{r}
\end{equation}
Therefore, after some computations and mathematical analysis, we obtain:
\begin{equation}
a_{j+1}-a_{j}=(1-r)^{j+1}-(p-r)^{j+1}
\end{equation}
Given that $0 \leq |p-r| \leq 1-r$, then $(p-r)^{j+1} \leq |p-r|^{j+1} \leq (1-r)^{j+1}$. Therefore, $(1-r)^{j+1}-(p-r)^{j+1} \geq 0$.  
Hence, $a_{j}$ is increasing with $j$.
\end{IEEEproof}

Based on this lemma, we prove the following lemma.
\begin{lemma}\label{lem:V_increasing}
$V_{\beta}(.)$ is increasing with $a_j$.
\end{lemma}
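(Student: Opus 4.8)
The plan is to establish the monotonicity through the value iteration scheme \eqref{eq:bellman_equation_time_t}, rather than working directly with the fixed point. I would define the iterates by $V^0_{\beta}(\cdot)=0$ and
\[
V^{t+1}_{\beta}(a_j)=\min\big\{a_j+\beta V^t_{\beta}(a_{j+1});\, a_j+\lambda+\rho\beta V^t_{\beta}(a_0)+(1-\rho)\beta V^t_{\beta}(a_{j+1})\big\},
\]
and prove by induction on $t$ that each $V^t_{\beta}(\cdot)$ is non-decreasing in the index $j$. Since $\lim_{t\to\infty}V^t_{\beta}(a_j)=V_{\beta}(a_j)$ (see \cite[Chapter~8.5]{puterman2014markov}), the monotonicity of $V_{\beta}(\cdot)$ follows by passing to the limit, because a pointwise limit of non-decreasing functions is non-decreasing.

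For the induction, the base case is immediate since $V^0_{\beta}\equiv 0$ is constant. For the inductive step, fix $j<j'$ and assume $V^t_{\beta}(a_k)$ is non-decreasing in $k$. The crucial structural observation is that under both the passive action and an unsuccessful transmission the index advances by exactly one, $a_j\mapsto a_{j+1}$, a map that preserves the ordering of indices, whereas the successful-transmission branch sends the state to the fixed index $0$ and hence contributes the constant $V^t_{\beta}(a_0)$, independent of $j$. Writing $Q_0^t(a_j)=a_j+\beta V^t_{\beta}(a_{j+1})$ for the passive Q-value and $Q_1^t(a_j)=a_j+\lambda+\rho\beta V^t_{\beta}(a_0)+(1-\rho)\beta V^t_{\beta}(a_{j+1})$ for the active one, I would invoke Lemma \ref{lem:MAoII_increasing} (so $a_j\le a_{j'}$) together with the induction hypothesis at indices $j+1\le j'+1$ (so $V^t_{\beta}(a_{j+1})\le V^t_{\beta}(a_{j'+1})$) to conclude that both $Q_0^t$ and $Q_1^t$ are non-decreasing in $j$.

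The final step is to observe that the pointwise minimum of two non-decreasing functions is itself non-decreasing: from $Q_0^t(a_j)\le Q_0^t(a_{j'})$ and $Q_1^t(a_j)\le Q_1^t(a_{j'})$ one gets $V^{t+1}_{\beta}(a_j)=\min\{Q_0^t(a_j),Q_1^t(a_j)\}\le Q_0^t(a_{j'})$ and likewise $\le Q_1^t(a_{j'})$, hence $V^{t+1}_{\beta}(a_j)\le V^{t+1}_{\beta}(a_{j'})$, which closes the induction.

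I do not expect a genuine obstacle here; once the iteration is set up the argument is essentially bookkeeping. The only point requiring care is the transition structure: one must verify that in \emph{every} branch of the minimization the continuation index is a non-decreasing function of the current index (here it is always $j+1$, or the constant $0$), since it is precisely this order-preservation, combined with the monotonicity of the instantaneous cost $a_j$ from Lemma \ref{lem:MAoII_increasing}, that lets monotonicity propagate through the $\min$. I would also remark that since $V^1_{\beta}(a_j)=a_j$ is in fact strictly increasing, the iterates are strictly increasing for $t\ge 1$, although only non-decreasingness survives in the limit, which is all that is needed for the threshold structure.
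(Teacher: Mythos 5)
Your proposal is correct and follows essentially the same route as the paper: an induction on the value-iteration iterates $V^t_{\beta}$, showing that both the passive and active Q-values are non-decreasing in the index (using Lemma \ref{lem:MAoII_increasing} for the stage cost and the order-preserving transition $a_j \mapsto a_{j+1}$ together with the induction hypothesis), noting that a minimum of non-decreasing functions is non-decreasing, and passing to the limit $V^t_{\beta} \to V_{\beta}$. The paper's proof is the same argument with the Q-values denoted $V^t_{\beta,0}$ and $V^t_{\beta,1}$ instead of $Q^t_0$ and $Q^t_1$.
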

\begin{IEEEproof}
We prove the present lemma by induction using the Value iteration equation \eqref{eq:bellman_equation_time_t}. In fact, we show that $V^t_{\beta}(\cdot)$ is increasing and we conclude for $V_{\beta}(\cdot)$.\\
As $V^0_{\beta}(.)=0$, then the property holds for $t=0$.
If $V^t_{\beta}(.)$ is increasing with $a$, we show that for $a_{j} \leq a_{i}$, $V^t_{\beta,0}(a_{j}) \leq V^t_{\beta,0}(a_{i})$ and $V^t_{\beta,1}(a_{j}) \leq V^t_{\beta,1}(a_{i})$ where for each $k \in \mathbf{N}$:
\begin{align}
V^t_{\beta,0}(a_{k})&=a_{k}+\beta V^t_{\beta}(a_{k+1}) \\
V^t_{\beta,1}(a_{k})&=a_{k}+\lambda+\rho \beta V^t_{\beta}(a_{0})+(1-\rho)\beta V^t_{\beta}(a_{k+1}) 
\end{align}
We have that:
\begin{equation}
V^{t+1}_{\beta,0}(a_{j}) - V^{t+1}_{\beta,0}(a_{i})=a_{j}-a_{i}+\beta (V^t_{\beta}(a_{j+1}) - V^t_{\beta}(a_{i+1}))
\end{equation}
According to Lemma \ref{lem:MAoII_increasing}, given that $a_{j} \leq a_{i}$, then $j \leq i$. That means $a_{j+1} \leq a_{i+1}$. Therefore, since $V^t_{\beta}(.)$ is increasing with $a_j$, we have that:
$$V^{t+1}_{\beta,0}(a_{j+1}) - V^{t+1}_{\beta,0}(a_{i+1}) \leq 0$$
As consequence, $V^{t+1}_{\beta,0}(\cdot)$ is increasing with $a_{j}$.\\
In the same way, we have:
$$V^{t+1}_{\beta,1}(a_{j}) - V^{t+1}_{\beta,1}(a_{i})=a_{j}-a_{i}+(1-\rho)\beta (V^t_{\beta}(a_{j+1}) - V^t_{\beta}(a_{i+1}))$$
Hence: 
\begin{equation}
V^{t+1}_{\beta,1}(a_{j}) - V^{t+1}_{\beta,1}(a_{i}) \leq 0
\end{equation}
As consequence, $V^{t+1}_{\beta,1}(\cdot)$ is increasing with $a_{j}$.\\
Since $V^{t+1}_{\beta}(.)=\min\{V^{t+1}_{\beta,0}(\cdot),V^{t+1}_{\beta,1}(\cdot)\}$, then $V^{t+1}_{\beta}(.)$ is increasing with $a_j$. Accordingly, we demonstrate by induction that $V^t_{\beta}(.)$ is increasing for all $t$. Knowing that $\underset{t \rightarrow +\infty}{\text{lim}} V^t_{\beta}(a_j)=V_{\beta}(a_j)$, $V_{\beta}(.)$ must be also increasing with $a_j$.
\end{IEEEproof}

We define:
\begin{equation}
\Delta V_{\beta}(a_{j})=V_{\beta,1}(a_j)-V_{\beta,0}(a_j)
\end{equation}
where $\underset{t \rightarrow +\infty}{\text{lim}} V_{\beta,0}^t(a_j)=V_{\beta,0}(a_j)$ and $\underset{t \rightarrow +\infty}{\text{lim}} V_{\beta,1}^t(a_j)=V_{\beta,1}(a_j)$.\\
Subsequently, $\Delta V_{\beta}(a_{j})$ equals to:
\begin{equation}
\Delta V_{\beta}(a_{j})=\rho \beta [\frac{\lambda}{\rho \beta }+V_{\beta}(a_0)-V_{\beta}(a_{j+1})]
\end{equation}
According to Lemma \ref{lem:V_increasing}, $V_{\beta}(.)$ is increasing with $a_{j+1}$. Therefore, $\Delta V_{\beta}(a_{j})$ is decreasing with $a_j$. Hence, there exists $a_{n}$ such that for all $a_j \leq a_{n}$, $\Delta V_{\beta}(a_{j})\geq 0$, and for all $a_j > a_{n}$, $\Delta V_{\beta}(a_{j}) <0$. Given that the optimal action for state $a_j$ is the one that minimizes $\min\{V_{\beta,0}(\cdot),V_{\beta,1}(\cdot)\}$, then for all $a_j \leq a_{n}$, the optimal decision is to stay idle since $\min\{V_{\beta,0}(a_j),V_{\beta,1}(a_j)\}=V_{\beta,0}(a_j)$, and for all $a_j > a_{n}$, the optimal decision is to transmit since $\min\{V_{\beta,0}(a_j),V_{\beta,1}(a_j)\}=V_{\beta,1}(a_j)$. Specifically, as $a_{j}$ is increasing with $j$, there exists $n$ such that for all $j < n$, the optimal action is the passive action, and for all $j \geq n$, the optimal action is the active one. Hence, we establish that when $1-r \geq |p-r|$, the optimal threshold is a threshold based policy.
We move now to the second case that require more analysis.

\item $1-r < |p-r|$:\\
Our approach for this case will be first to give the structural form of MAoII function $a_j$. Then, based on that, we prove that the optimal solution is threshold policy. 
Indeed, unlike the first case, we show that $a_j$ is an oscillating function with $j$. In order to establish that, we proceed first by giving remark and lemmas as follows:
\begin{remark}\label{rem:two_states}
Given that $p+(N-1)r=1$, then $(N-1)r \leq 1$. That means $r \leq \frac{1}{N-1}$. 
If $1-r < |p-r|$, then $p$ is necessarily less than $r$. Which means that $1-r < r-p$. Thus, $r > \frac{1}{2}$.
Leveraging these two results, $\frac{1}{N-1} > \frac{1}{2}$. Hence, $N < 3$. Thereby, $N=2$.($N>1$)  
\end{remark}
\begin{lemma}\label{lem:growth_age}
$a_{2k}$ is increasing with $k$ and under Assumption \ref{assump:volatility}, $a_{2k+1}$ is decreasing with $k$.
\end{lemma}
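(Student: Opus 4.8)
The plan is to reduce everything to the two consecutive increments of $a_j$ already recorded in the proof of Lemma \ref{lem:MAoII_increasing}, an algebraic identity that holds for any admissible $(p,r)$:
\[
a_{j+1}-a_{j}=(1-r)^{j+1}-(p-r)^{j+1}.
\]
By Remark \ref{rem:two_states}, the regime $1-r<|p-r|$ forces $N=2$, hence $p=1-r$, $p-r=1-2r<0$ and $r>\tfrac{1}{2}$. Writing $\alpha=1-r>0$ and $b=p-r=1-2r<0$, I will repeatedly use $\alpha+1=2-r>0$ together with the identity $b+1=2(1-r)=2\alpha>0$. The idea is simply to telescope the increment formula over two steps, obtaining the gaps between consecutive even (resp. odd) indices, and then read off the sign.

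For the even subsequence I would sum the two increments ending at $a_{2k+2}$:
\[
a_{2k+2}-a_{2k}=(a_{2k+2}-a_{2k+1})+(a_{2k+1}-a_{2k})=\alpha^{2k+1}(\alpha+1)-b^{2k+1}(b+1).
\]
Here $\alpha^{2k+1}(\alpha+1)>0$, while $b^{2k+1}<0$ (an odd power of a negative number) and $b+1>0$, so $-b^{2k+1}(b+1)>0$ as well. Both terms being strictly positive gives $a_{2k+2}-a_{2k}>0$, i.e. $a_{2k}$ is strictly increasing; note this step uses neither Assumption \ref{assump:volatility} nor anything beyond $N=2$ and $r<1$.

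For the odd subsequence the same telescoping yields
\[
a_{2k+3}-a_{2k+1}=\alpha^{2k+2}(\alpha+1)-b^{2k+2}(b+1),
\]
and now the difficulty appears: $b^{2k+2}>0$ (an even power), so the second term is negative and competes with the positive first term. I must therefore establish domination, namely
\[
\frac{\alpha^{2k+2}(\alpha+1)}{|b|^{2k+2}(b+1)}=\Big(\tfrac{1-r}{2r-1}\Big)^{2k+2}\,\frac{2-r}{2(1-r)}<1 .
\]
Since $r>\tfrac{2}{3}$ gives $1-r<2r-1$, the geometric factor $\big(\tfrac{1-r}{2r-1}\big)^{2k+2}$ is largest at $k=0$, so it suffices to treat $k=0$ and prove $g(r):=\frac{(1-r)(2-r)}{2(2r-1)^2}<1$. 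Clearing denominators, this is equivalent to $0<7r^{2}-5r=r(7r-5)$, that is, to $r>\tfrac{5}{7}$.

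Finally I would invoke Assumption \ref{assump:volatility}: for $N=2$ the condition $(N-1)r\ge 4p$ reads $r\ge 4(1-r)$, i.e. $r\ge\tfrac{4}{5}>\tfrac{5}{7}$, which is exactly what the previous step needs; hence $a_{2k+3}-a_{2k+1}<0$ for every $k\ge 0$ and $a_{2k+1}$ is strictly decreasing. The main obstacle is precisely this odd case: unlike the even case, the two increments contribute with opposite signs and the claim is genuinely false for $r$ only slightly above $\tfrac{2}{3}$ (the $k=0$ ratio already exceeds $1$ near $r=\tfrac{2}{3}$), so the quantitative threshold $r>\tfrac{5}{7}$ — and therefore the volatility hypothesis — is essential, whereas the even case required no such bound.
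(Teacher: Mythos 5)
Your proof is correct and takes essentially the same route as the paper: both rest on the two-step identity $a_{i+2}-a_i=(1-r)^{i+1}(2-r)-(p-r)^{i+1}(1+p-r)$, dispose of the even case by a pure sign argument, and reduce the odd case to the $k=0$ comparison (your "geometric factor largest at $k=0$" step is exactly the paper's inequality $(1-r)^{2k}\le (p-r)^{2k}$), arriving at the same threshold $r\ge \tfrac{5}{7}$, which the volatility assumption $r\ge\tfrac{4}{5}$ then supplies. Your quadratic criterion $r(7r-5)>0$ is the paper's cubic condition $5r-12r^2+7r^3\le 0$ after cancelling the factor $(1-r)$, so the two arguments coincide.
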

\begin{proof}
We have:
\begin{align}
a_{i+2}-a_i= (1-r)^{i+1}(2-r)-(p-r)^{i+1}(1+p-r)
\end{align}
If $i=2k$, since $p-r<0$, then $(p-r)^{2k+1}(1+p-r)<0$. Therefore $a_{2(k+1)}-a_{2k} \geq 0$.\\
If $i=2k+1$, we have: 
\begin{align}
a_{2k+1+2}-a_{2k+1}=&[(1-r)^{2k+2}(2-r) \nonumber\\
&-(p-r)^{2k+2}(1+p-r)] \nonumber\\
=&[(1-r)^{2k}(1-r)^{2}(2-r) \nonumber\\
&-(p-r)^{2k}(p-r)^{2}(1+p-r)]
\end{align}  
We have $(1-r)^{2k} \leq (p-r)^{2k}$. 
In the sequel, we prove that if $r>\frac{5}{7}$, then $a_{2k+1}$ is decreasing with $k$.
For that we investigate the sign of $(1-r)^{2}(2-r)- (p-r)^{2}(1+p-r)$ in function of $r$ to establish our desired result.
To that end, we replace $p$ by $1-(N-1)r$ and we get the following inequality that we should prove:
\begin{equation}
(1-r)^2(2-r) \leq (1-Nr)^2(2-Nr)
\end{equation}
The difference $(1-r)^2(2-r) \leq (1-Nr)^2(2-Nr)$ is equal to $5r(N-1)-4r^2(N^2-1)+r^3(N^3-1)$. As was mentioned in Remark \ref{rem:two_states}, $N=2$. Accordingly, we should prove that $5r-12r^2+7r^3 \leq 0$. As $r >0$, then we must demonstrate that the second degree polynomial $5-12r+7r^2$ is negative when $r \geq \frac{5}{7}$.
\begin{lemma}\label{lem:sign_polynom_1}
If $r \geq \frac{5}{7}$, then $5-12r+7r^2\leq 0$
\end{lemma}
\begin{proof}
Resolving the equation $5-12r+7r^2=0$, we get two different roots which are $\frac{5}{7}$ and $1$. That means in $[\frac{5}{7},1]$, $5-12r+7r^2$ is less than $0$.
Hence, $5-12r+7r^2\leq 0$ for $r \geq \frac{5}{7}$. That completes the proof.    
\end{proof}
According to Assumption \ref{assump:volatility}, $(N-1)r=r \geq 4p$. Therefore, $r=1-p \geq 1-\frac{r}{4}$. Thus, $1>r\geq \frac{4}{5}\geq \frac{5}{7}$. Then according to Lemma \ref{lem:sign_polynom_1}, $5-12r+7r^2\leq 0$. 
Leveraging that and the fact that $(1-r)^{2k} \leq (p-r)^{2k}$, then $a_{2(k+1)+1}-a_{2k+1} \leq 0$. Hence $a_{2k+1}$ is decreasing in $k$. 
\end{proof}

\begin{lemma}
For all $(k,k') \in [0,\mathbb{N}]^2$, $a_{2k} \leq a_{2k'+1}$. 
\end{lemma}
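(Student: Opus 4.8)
The plan is to exploit the monotonicity already recorded in Lemma~\ref{lem:growth_age} together with the fact that the even- and odd-indexed subsequences of $(a_j)$ share a common limit. First I would identify this limit. From the closed form $a_j=\frac{N-1}{Nr}+\frac{(p-r)^{j+1}}{Nr}-\frac{(1-r)^{j+1}}{r}$ and the standing case $1-r<|p-r|$ (which by Remark~\ref{rem:two_states} forces $N=2$ and, together with Assumption~\ref{assump:volatility}, $r\in(\tfrac23,1)$), both $|p-r|$ and $1-r$ lie strictly inside $(0,1)$. Hence $(p-r)^{j+1}\to 0$ and $(1-r)^{j+1}\to 0$, so $a_j\to L$ with $L:=\frac{N-1}{Nr}$; in particular $\lim_k a_{2k}=\lim_k a_{2k+1}=L$.

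Next I would convert the monotonicity into a two-sided bound around $L$. By Lemma~\ref{lem:growth_age} the sequence $(a_{2k})_k$ is increasing, and an increasing sequence that converges to $L$ stays below its limit, so $a_{2k}\le L$ for every $k$. By the same lemma $(a_{2k+1})_k$ is decreasing, and a decreasing sequence converging to $L$ stays above its limit, giving $a_{2k'+1}\ge L$ for every $k'$. Chaining these yields $a_{2k}\le L\le a_{2k'+1}$ for all $k,k'$, which is precisely the assertion.

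The reason I prefer routing the argument through the limit, rather than verifying $a_{2k}\le L\le a_{2k'+1}$ directly by sign analysis, is that the direct route hides the genuinely delicate point. The inequality $a_{2k}-L=\frac{(p-r)^{2k+1}}{Nr}-\frac{(1-r)^{2k+1}}{r}\le 0$ comes for free, since an odd power of $p-r<0$ is negative while $(1-r)^{2k+1}>0$. However, $a_{2k'+1}-L\ge 0$ is equivalent to $|p-r|^{2k'+2}\ge 2(1-r)^{2k'+2}$, whose worst case $k'=0$ reduces to $2r^2\ge 1$, i.e.\ $r\ge 1/\sqrt2$, and therefore genuinely uses Assumption~\ref{assump:volatility} ($r\ge\tfrac45$) rather than merely the case hypothesis $1-r<|p-r|$ (which only gives $r>\tfrac23$). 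Since Lemma~\ref{lem:growth_age} has already absorbed this subtlety inside the proof of the decreasing property, the monotonicity-plus-limit packaging makes the present lemma essentially immediate. The only obstacle, then, is not analytic but bookkeeping: I would state explicitly the elementary fact that a monotone convergent sequence is bounded by its limit on the appropriate side, so that the chain $a_{2k}\le L\le a_{2k'+1}$ is airtight.
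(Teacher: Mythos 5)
Your proof is correct and follows essentially the same route as the paper: both identify the common limit $\frac{N-1}{Nr}$ of the two subsequences from the closed form of $a_j$, then use the monotonicity in Lemma \ref{lem:growth_age} to conclude $a_{2k}\leq \frac{N-1}{Nr}\leq a_{2k'+1}$. Your added remarks (making the bounded-by-limit step explicit, and noting that the decreasing property of $(a_{2k+1})$ is where Assumption \ref{assump:volatility} is genuinely used) are accurate but do not change the argument.
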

\begin{proof}
Baring in mind the equation (\ref{eq:expression_maoii}), we have $a_k$ converges to$\frac{N-1}{Nr}$. The same applies to the sub-sequences $a_{2k}$ and $a_{2k+1}$. Given that $a_{2k}$ is increasing, then $a_{2k} \leq \frac{N-1}{Nr}$ for all $k$. Similarly, as $a_{2k+1}$ is decreasing when $k \geq 0$, then $a_{2k+1} \geq \frac{N-1}{Nr}$. We deduce that for all $(k,k') \in [0,\mathbb{N}]^2$, $a_{2k} \leq a_{2k'+1}$.
\end{proof}
As these lemmas above have been laid out, we are now able to prove the Theorem \ref{theo:threshold_policy_discounted}. For that purpose, our main challenge is to show that $V_{\beta}^t(a_i)$ is increasing with $a_i$ since this result require intricate and non trivial mathematical analysis to prove it. 
For that, based on lemmas above, we show by induction these following statements for all $t$:
\begin{itemize}
\item (a) $V_{\beta}^t(a_i)$ is increasing with $a_i$. In other words, if $a_i \leq a_j$, then $V_{\beta}^t(a_i) \leq V_{\beta}^t(a_j)$
\item (b) $a_{2k+2}-a_{2k} \geq (V_{\beta}^t(a_{2k+1})-V_{\beta}^t(a_{2k+3}))$
\item (c) $a_{2k+1}-a_{2k+3} \geq (V_{\beta}^t(a_{2k+4})-V_{\beta}^t(a_{2k+2}))$
\item (d) $a_{2k+1}-a_{2k} \geq (V_{\beta}^t(a_{2k+1})-V_{\beta}^t(a_{2k+2}))$
\end{itemize}
These statements hold for $t=0$ since $V^0_{\beta}(a_k)=0$ for all $k \geq 0$.\\
We consider that these statements hold for a given $t$. Then, we prove that this is also the case for $t+1$.\\
We start first by showing the first point, i.e., $V_{\beta}^{t+1}(a_k) \leq V_{\beta}^{t+1}(a_{k+1})$.
\begin{itemize}

\item The growth of the function $V_{\beta}^{t+1}(.)$:\\
To proceed so, we show successively these tree following points:
\begin{enumerate}
\item $V_{\beta}^{t+1}(a_{2k}) \leq V_{\beta}^{t+1}(a_{2k+2})$
\item $V_{\beta}^{t+1}(a_{2k+1}) \leq V_{\beta}^{t+1}(a_{2k+3})$
\item $V_{\beta}^{t+1}(a_{2k}) \leq V_{\beta}^{t+1}(a_{2k+1})$
\end{enumerate}
We start first by demonstrating the first point.
For that, we prove that $V_{\beta,0}^{t+1}(a_{2k}) \leq V_{\beta,0}^{t+1}(a_{2k+2})$ and $V_{\beta,1}^{t+1}(a_{2k}) \leq V_{\beta,1}^{t+1}(a_{2k+2})$:
We have 
\begin{align}
&V_{\beta,0}^{t+1}(a_{2k+2})-V_{\beta,0}^{t+1}(a_{2k})\\
&=a_{2k+2}-a_{2k}- \beta(V_{\beta}^t(a_{2k+1})-V_{\beta}^t(a_{2k+3}))
\end{align}
By induction assumption, $V_{\beta}^t(a_{2k+1})-V_{\beta}^t(a_{2k+3}) \leq a_{2k+2}-a_{2k}$. Therefore, $V_{\beta,0}^{t+1}(a_{2k+2})-V_{\beta,0}^{t+1}(a_{2k}) \geq 0$ since $a_{2k+2}-a_{2k}\geq 0$ according to Lemma \ref{lem:growth_age}. 
Following the same steps, we prove also that $V_{\beta,1}^{t+1}(a_{2k}) \leq V_{\beta,1}^{t+1}(a_{2k+2})$. 
Hence, we deduce that:
\begin{equation}\label{eq:first_inequality_for_growth_function}
V_{\beta}^{t+1}(a_{2k}) \leq V_{\beta}^{t+1}(a_{2k+2})
\end{equation} 
The second point is to establish that $V_{\beta,0}^{t+1}(a_{2k+1}) \geq V_{\beta,0}^{t+1}(a_{2k+3})$, and $V_{\beta,1}^{t+1}(a_{2k+1}) \geq V_{\beta,1}^{t+1}(a_{2k+3})$. Similarly to the first case, we have that:
\begin{align}
&V_{\beta,d}^{t+1}(a_{2k+1})-V_{\beta,d}^{t+1}(a_{2k+3})\nonumber\\
&=a_{2k+1}-a_{2k+3}- \beta(1-\mathbf{1}_{\{d=1\}}\rho)(V_{\beta}^t(a_{2k+4})-V_{\beta}^t(a_{2k+2}))
\end{align}
where $d \in \{0,1\}$.
Likewise, by induction assumption $V_{\beta}^t(a_{2k+4})-V_{\beta}^t(a_{2k+2}) \leq a_{2k+1}-a_{2k+3}$. Therefore, $V_{\beta,d}^{t+1}(a_{2k+1})-V_{\beta,d}^{t+1}(a_{2k+3}) \geq 0$. Hence:
\begin{equation}\label{eq:second_inequality_for_growth_function}
V_{\beta}^{t+1}(a_{2k+1}) \geq V_{\beta}^{t+1}(a_{2k+3})
\end{equation}
As for the last point, we prove that $V_{\beta}^{t+1}(a_{2k}) \leq V_{\beta}^{t+1}(a_{2k+1})$:
\begin{align}
&V_{\beta,d}^{t+1}(a_{2k+1})-V_{\beta,d}^{t+1}(a_{2k})\nonumber\\
&=a_{2k+1}-a_{2k}- \beta(1-\mathbf{1}_{\{d=1\}}\rho)(V_{\beta}^{t}(a_{2k+1})-V_{\beta}^{t}(a_{2k+2}))
\end{align} 
Likewise, by induction assumption, $V_{\beta}^{t}(a_{2k+1})-V_{\beta}^{t}(a_{2k+2}) \leq a_{2k+1}-a_{2k}$. Therefore, $V_{\beta,d}^{t+1}(a_{2k+1})-V_{\beta,d}^{t+1}(a_{2k}) \geq 0$ since $a_{2k+1}-a_{2k}=[(1-r)^{2k+1}-(p-r)^{2k+1}]\geq 0$. Therefore:
\begin{equation}\label{eq:third_inequality_for_growth_function} 
V_{\beta}^{t+1}(a_{2k+1}) \geq V_{\beta}^{t+1}(a_{2k})
\end{equation}
That concludes the result.\\
We recall that our goal through these analysis above is to show that $V_{\beta}^{t+1}(.)$ is an increasing function with $a_i$.
Whereas, we only proved that $V_{\beta}^{t+1}$ is increasing in the set $\{a_{2k}\}_{k \in \mathbf{N}}$ and increasing in the set $\{a_{2k+1}\}_{k \in \mathbf{N}}$. In other words, if we take $a_{2k} \leq a_{2k'}$, then $V_{\beta}^{t+1}(a_{2k}) \leq V_{\beta}^{t+1}(a_{2k'})$, and if we take $a_{2k+1} \leq a_{2k'+1}$, $V_{\beta}^{t+1}(a_{2k+1}) \geq V_{\beta}^{t+1}(a_{2k'+1})$.
We still have to prove that the growth property is valid in $\{a_{2k}\}_{k \in \mathbf{N}} \cup \{a_{2k+1}\}_{k \in \mathbf{N}}$.
Given that $a_{2k} \leq a_{2k'+1}$ for all $k$ and $k' \geq 0$, we just need to prove that $V_{\beta}^{t+1}(a_{2k}) \leq V_{\beta}^{t+1}(a_{2k'+1})$ for all $k \geq 0$ and $k'\geq 0$ to establish our desired result.
Indeed, if $k \leq k'$, that means according to Equations \eqref{eq:first_inequality_for_growth_function} and \eqref{eq:third_inequality_for_growth_function}, $V_{\beta}^{t+1}(a_{2k}) \leq V_{\beta}^{t+1}(a_{2k'}) \leq V_{\beta}^{t+1}(a_{2k'+1})$.
If $k \geq k'$, then according to Equations \eqref{eq:second_inequality_for_growth_function} and \eqref{eq:third_inequality_for_growth_function}, $V_{\beta}^{t+1}(a_{2k}) \leq V_{\beta}^{t+1}(a_{2k+1}) \leq V_{\beta}^{t+1}(a_{2k'+1})$. 
That concludes the proof.
Hence, the first point regarding the growth of $V_{\beta}^{t+1}(.)$ is established, i.e., for all $a_i\leq a_j$:
\begin{equation}\label{eq:growth_V_t+1}
V_{\beta}^{t+1}(a_i) \leq V_{\beta}^{t+1}(a_j)
\end{equation} 
\item 
Now we prove the next following points successively:
\begin{itemize}
\item $a_{2k+2}-a_{2k} \geq (V_{\beta}^{t+1}(a_{2k+1})-V_{\beta}^{t+1}(a_{2k+3}))$
\item $a_{2k+1}-a_{2k+3} \geq (V_{\beta}^{t+1}(a_{2k+4})-V_{\beta}^{t+1}(a_{2k+2}))$
\item $a_{2k+1}-a_{2k} \geq (V_{\beta}^{t+1}(a_{2k+1})-V_{\beta}^{t+1}(a_{2k+2}))$
\end{itemize}
We provide first this useful Lemma: 
\begin{lemma}\label{lem:growth_difference_by_two_steps_age}
For all integer $i$, we have that $|a_{i+2}-a_i|$ is decreasing with $i$
\end{lemma}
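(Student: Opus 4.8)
The plan is to work directly from the closed form of the second difference. Lemma \ref{lem:growth_age} already records that
\[
a_{i+2}-a_i=(1-r)^{i+1}(2-r)-(p-r)^{i+1}(1+p-r),
\]
so I would set $f(i)=a_{i+2}-a_i$ and study $|f(i)|$. Since we are in the regime $1-r<|p-r|$, Remark \ref{rem:two_states} forces $N=2$ and $p<r$; substituting $p=1-r$ then collapses all the coefficients. Writing $a:=1-r$ and $b:=2r-1=|p-r|$, one checks the identities $2-r=1+a$, $1+p-r=2a$, $p-r=-b$, and crucially $1-b=2a$. Hence $f(i)=g(i)+(-1)^i h(i)$ with $g(i):=a^{i+1}(1+a)>0$ and $h(i):=2a\,b^{i+1}>0$. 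Assumption \ref{assump:volatility} reads $r\ge 4p=4(1-r)$, i.e. $r\ge 4/5$, so that $a\le 1/5$, $b\ge 3/5$ and $a<b$; I would note at the outset that the single inequality $2b\ge 1+a$ is equivalent to $5r\ge 4$ and therefore holds, and that it is exactly the fact that will be used twice below.

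First I would pin down the sign of $f(i)$. Because $b\ge a$ gives $b^{i}\ge a^{i}$, and because $2b\ge 1+a$, we get $2b^{i+1}\ge 2b\,a^{i}\ge(1+a)a^{i}$, i.e. $h(i)\ge g(i)$ for every $i$. Consequently $f$ is positive on even indices with $|f(i)|=g(i)+h(i)$, and negative on odd indices with $|f(i)|=h(i)-g(i)$ (recovering the monotonicity directions already seen in Lemma \ref{lem:growth_age}). Monotonicity of $|f|$ then splits into two transitions. The even-to-odd step is immediate: $|f(i+1)|-|f(i)|=\big(h(i+1)-h(i)\big)-\big(g(i+1)+g(i)\big)$, and both brackets are nonpositive since $b<1$ and $g>0$.

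The main obstacle is the odd-to-even step, where the signs work against us: here $|f(i+1)|-|f(i)|=\big(g(i+1)+g(i)\big)+\big(h(i+1)-h(i)\big)$, a sum of a positive and a negative term, so a genuine inequality must be proved. Using $g(i+1)+g(i)=a^{i+1}(1+a)^2$ and $h(i)-h(i+1)=2a\,b^{i+1}(1-b)=4a^2 b^{i+1}$ (this is where $1-b=2a$ enters), the claim $|f(i+1)|\le|f(i)|$ reduces to $a^{i+1}(1+a)^2\le 4a^2 b^{i+1}$, i.e. after dividing by $a^2$ (legitimate since $i\ge 1$ on odd indices) to $a^{i-1}(1+a)^2\le 4b^{i+1}$. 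I would close this by $4b^{i+1}=4b^2 b^{i-1}\ge 4b^2 a^{i-1}\ge(1+a)^2 a^{i-1}$, where the last step is again just $2b\ge 1+a$ squared. Since the two cases together exhaust all $i\ge 0$, this establishes that $|a_{i+2}-a_i|$ is decreasing. It is worth flagging that the whole argument rests on the single numerical inequality $2b\ge 1+a$, which is precisely Assumption \ref{assump:volatility}; this explains why the volatility hypothesis is exactly the right strength for the lemma.
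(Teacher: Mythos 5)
Your proof is correct and follows essentially the same route as the paper's: split on the parity of $i$, use the signs of the consecutive second differences (as in Lemma \ref{lem:growth_age}) to remove the absolute values, observe that the even-to-odd transition is immediate from signs, and reduce the odd-to-even transition to the algebraic inequality $a^{i+1}(1+a)^2 \leq 4a^2 b^{i+1}$, i.e. $(2-r)^2(1-r)^{i+1}\leq (p-r)^{i+1}(1+p-r)^2$, settled by the volatility assumption $r\geq 4/5$. The only difference is that you fully justify this last inequality (via $2b \geq 1+a$ and $b\geq a$), whereas the paper merely asserts that it follows from $r \geq \tfrac{4}{5}$.
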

\begin{proof}
If $i=2k$, then:
\begin{align}
&|a_{i+3}-a_{i+1}|-|a_{i+2}-a_i|=(a_{i+1}-a_{i+3})-(a_{i+2}-a_i)\nonumber\\
&=-[(2-r)^2(1-r)^{i+1}-(p-r)^{i+1}(1+p-r)^2]\nonumber\\
&\leq 0
\end{align}
since $(p-r)^{i+1}\leq 0$.\\
If $i=2k+1$, then:
\begin{align}
&|a_{i+3}-a_{i+1}|-|a_{i+2}-a_i|=(a_{i+3}-a_{i+1})-(a_{i}-a_{i+2})\nonumber\\
&=[(2-r)^2(1-r)^{i+1}-(p-r)^{i+1}(1+p-r)^2]\nonumber\\
&=[(2-r)^2(1-r)^{2k+2}-(p-r)^{2k+2}(1+p-r)^2]\nonumber\\
& \leq 0
\end{align}
since $r \geq \frac{4}{5} \Longrightarrow (2-r)^2(1-r)^{2k+2}-(p-r)^{2k+2}(1+p-r)^2 \leq 0  \ \ \forall k \geq 0$ 
\end{proof}
As for the first point, we have that:
\begin{align}
&V_{\beta}^{t+1}(a_{2k+1})-V_{\beta}^{t+1}(a_{2k+3})\nonumber\\
& \leq \max\{(V_{\beta,0}^{t+1}(a_{2k+1})-V_{\beta,0}^{t+1}(a_{2k+3}))\nonumber\\
& \hspace{1.2 cm},(V_{\beta,1}^{t+1}(a_{2k+1})-V_{\beta,1}^{t+1}(a_{2k+3}))\}\nonumber\\
&=\underset{d=0,1}{\text{max}}\{a_{2k+1}-a_{2k+3}-\nonumber\\
&\hspace{1.5 cm} \beta(1-\mathbf{1}_{\{d=1\}}\rho)(V_{\beta}^{t}(a_{2k+4})-V_{\beta}^{t+1}(a_{2k+2}))\} 
\end{align}
By induction assumption, we have that $V_{\beta}^{t}(a_{2k+4})-V_{\beta}^{t}(a_{2k+2})\geq 0$. Thus, $V_{\beta}^{t+1}(a_{2k+1})-V_{\beta}^{t+1}(a_{2k+3}) \leq a_{2k+1}-a_{2k+3}$. 
According to Lemma \ref{lem:growth_difference_by_two_steps_age}, $|a_{2k+3}-a_{2k+1}|-|a_{2k+2}-a_{2k}| \leq 0$. As consequence, $a_{2k+1}-a_{2k+3} \leq a_{2k+2}-a_{2k}$. Then:
\begin{equation}\label{eq:first_inequality_V}
V_{\beta}^{t+1}(a_{2k+1})-V_{\beta}^{t+1}(a_{2k+3}) \leq a_{2k+2}-a_{2k}
\end{equation}
For the second point, we have that:
\begin{align}
&V_{\beta}^{t+1}(a_{2k+4})-V_{\beta}^{t+1}(a_{2k+2}) \nonumber\\
&\leq \max\{(V_{\beta,0}^{t+1}(a_{2k+4})-V_{\beta,0}^{t+1}(a_{2k+2}))\nonumber\\
&\hspace{1.5 cm} ,(V_{\beta,1}^{t+1}(a_{2k+4})-V_{\beta,1}^{t+1}(a_{2k+2}))\}\nonumber\\
&=\underset{d \in \{0,1\}}{\text{max}}\{a_{2k+4}-a_{2k+2}\nonumber\\
&\hspace{1.5cm }-(1-\mathbf{1}_{\{d=1\}}\rho)\beta(V_{\beta}^{t}(a_{2k+3})-V_{\beta}^{t+1}(a_{2k+5}))\} 
\end{align}
By induction assumption, we have that $V_{\beta}^{t}(a_{2k+3})-V_{\beta}^{t}(a_{2k+5})\geq 0$. Thus, $V_{\beta}^{t+1}(a_{2k+4})-V_{\beta}^{t+1}(a_{2k+2}) \leq a_{2k+4}-a_{2k+2}$. 
According to Lemma \ref{lem:growth_difference_by_two_steps_age}, $|a_{2k+4}-a_{2k+2}|-|a_{2k+3}-a_{2k+1}| \leq 0$. As consequence, $a_{2k+4}-a_{2k+2} \leq a_{2k+1}-a_{2k+3}$. Then:
\begin{equation}\label{eq:second_inequality_V}
V_{\beta}^{t+1}(a_{2k+4})-V_{\beta}^{t+1}(a_{2k+2}) \leq a_{2k+1}-a_{2k+3}
 \end{equation}   
We move now to the last point. 
\begin{align}
&V_{\beta}^{t+1}(a_{2k+1})-V_{\beta}^{t+1}(a_{2k+2}) \nonumber \\
&\leq \max\{(V_{\beta,0}^{t+1}(a_{2k+1})-V_{\beta,0}^{t+1}(a_{2k+2}))\nonumber\\
&\hspace{1.5 cm} ,(V_{\beta,1}^{t+1}(a_{2k+1})-V_{\beta,1}^{t+1}(a_{2k+2}))\}\nonumber\\
&=\underset{d \in \{0,1\}}{\text{max}}\{a_{2k+1}-a_{2k+2}\nonumber\\
&\hspace{1.5cm }-(1-\mathbf{1}_{\{d=1\}}\rho)\beta(V_{\beta}^{t}(a_{2k+3})-V_{\beta}^{t+1}(a_{2k+2}))\} \\
\end{align}
By induction assumption, we have that $V_{\beta}^{t}(a_{2k+3})-V_{\beta}^{t}(a_{2k+2})\geq 0$. Thus, $V_{\beta}^{t+1}(a_{2k+1})-V_{\beta}^{t+1}(a_{2k+2}) \leq a_{2k+1}-a_{2k+2}$. Given that $a_{2k+2} \geq a_{2k}$, then:
\begin{equation}\label{eq:third_inequality_V}
V_{\beta}^{t+1}(a_{2k+1})-V_{\beta}^{t+1}(a_{2k+2}) \leq a_{2k+1}-a_{2k}
\end{equation} 
\end{itemize}  
Combining Equations \eqref{eq:growth_V_t+1}, \eqref{eq:first_inequality_V}, \eqref{eq:second_inequality_V} and \eqref{eq:third_inequality_V}, we conclude that all the statements (a), (b), (c) and (d), hold for $t+1$. As consequence, we proved by induction that for all $t$, $V_{\beta}^t(.)$ is increasing with $a_i$. As $\underset{t \Rightarrow +\infty}{\lim} V_{\beta}^t(.)=V_{\beta}(.)$, then $V_{\beta}(.)$ is as well an increasing function with $a_i$.\\

Following the same procedure as done for the first case ($1-r\geq |p-r|$), we establish that the optimal solution of the bellman equation \eqref{eq:bellman_equation_discounted} is also a threshold based policy. 
Explicitly, there exists $a_n$ such that when the current state $a_{j} < a_n$, the prescribed action is a passive action, and when $ a_j \geq a_{n}$, the prescribed action is an active action.
That concludes the proof for the second case where $1-r < |p-r|$. 
\end{itemize}

\section{Proof of Theorem \ref{theo:threshold_policy_original_condition}}\label{app:theo:threshold_policy_original_condition}
We prove that there exits a constant $K$ such that for all $i$ and $\beta$, we have that:
\begin{equation}
|V_{\beta}(a_i)-V_{\beta}(a_0)| \leq K
\end{equation}
For that, we distinguish between two cases.
\begin{itemize}
\item $1-r \geq |p-r|$:
We consider a finite threshold $a_m(\beta)$:
\begin{itemize}
\item $i \geq m$:\\
We have the expected time to go from the state $a_i \geq a_m(\beta)$ to $a_0$ denoted by $E(T_{i0})$ is bounded by a given a constant that doesn't depend neither on $\beta$ nor on $i$.
Indeed $E(T_{i0})=\sum_{k=1}^{+\infty} k \rho (1-\rho)^{k-1}=M$.
Thus, 
\begin{align}
V_{\beta}(a_i)=&E[\sum_{t=0}^{T_{i0}-1} \beta^t (a(t)+\lambda d(t))|a(0)=a_i]\nonumber\\
&+E[\sum_{t=T_{i0}}^{+\infty} \beta^t (a(t)+\lambda d(t))|a(T_{i0})=a_0]\nonumber\\ 
\leq^{(a)}& (\underset{i \rightarrow +\infty}{\lim} a_i+\lambda) E(T_{i0})+E(\beta^{T_{i0}})V_{\beta}(a_0)\nonumber \\
\leq& (\underset{i \rightarrow +\infty}{\lim} a_i+\lambda)M+V_{\beta}(a_0)
\end{align}  
The inequality $(a)$ comes from the fact that as $a_i$ is increasing with $i$ when $1-r \geq |p-r|$, then $a_i$ is less than $\underset{i \rightarrow +\infty}{\lim} a_i$ for all integer $i$.
As consequence: $V_{\beta}(a_i)-V_{\beta}(a_0) \leq (l+\lambda)M$, where $l=\underset{i \rightarrow +\infty}{\lim} a_i$.\\
\item  $i<m$:\\
For $i\leq m$, $V_{\beta}(a_i)-V_{\beta}(a_0) \leq V_{\beta}(a_m)-V_{\beta}(a_0) \leq (l+\lambda)M$.\\
\end{itemize}
If the threshold policy is infinite. That means for all $i$ and $\beta$, $V_{\beta}(a_i)=\sum_{k=0}^{+\infty} \beta^k a_{i+k}$.
Consequently:
$$V_{\beta}(a_i)-V_{\beta}(a_0)=\sum_{k=0}^{+\infty} \beta^k (a_{i+k}-a_k)$$
$$|V_{\beta}(a_i)-V_{\beta}(a_0)| \leq \sum_{k=0}^{+\infty} |a_{i+k}-a_k|$$
We have:
$a_{i+k}-a_k=\sum_{j=k}^{i+k-1} a_{j+1}-a_j=\sum_{j=k}^{i+k-1} [(1-r)^{j+1}-(p-r)^{j+1}]$.\\
Therefore, by computing the sum, we have:
\begin{align}
|a_{i+k}-a_k| \leq &(1-r)^{k+1} \frac{1-(1-r)^i}{r}\nonumber \\
&+|p-r|^{k+1} \frac{1-|p-r|^i}{1-|p-r|}
\end{align}
Hence:
\begin{align}
\sum_{k=0}^{+\infty} |a_{i+k}-a_k| \leq& \frac{(1-(1-r)^i)(1-r)}{r^2} \nonumber \\
&+\frac{(1-|p-r|^i)|p-r|}{(1-|p-r|)^2}\nonumber \\
 \leq& \frac{(1-r)}{r^2}+\frac{|p-r|}{(1-|p-r|)^2}
\end{align}
Then: 
$$|V_{\beta}(a_i)-V_{\beta}(a_0)|\leq \frac{(1-r)}{r^2}+\frac{|p-r|}{(1-|p-r|)^2}$$
That concludes the result when the threshold is infinite.
\item $(1-r) <|p-r|$:
We distinguish between two cases:
\begin{itemize}
\item The threshold policy is strictly greater than $a_0$:\\
We have for all $i$ and $\beta$:
\begin{equation}\label{eq:inequality_difference_value_discounted_case_2}
V_{\beta}(a_i)-V_{\beta}(a_0) \leq V_{\beta}(a_1)-V_{\beta}(a_0)
\end{equation}
since $a_i \leq a_1$ for all $i$.
Given that $V_{\beta}(a_0)=a_0+\beta V_{\beta}(a_1)=\beta V_{\beta}(a_1)$, then: 
\begin{equation}\label{eq:expression_difference_value_discounted_case_2}
V_{\beta}(a_1)-V_{\beta}(a_0)=(1-\beta) V_{\beta}(a_1)
\end{equation}
To that extent, we need to show that $(1-\beta) V_{\beta}(a_1)$ is less than a constant independent of $\beta$.\\
To that end, in the sequel, we prove by induction on $t$ that for all $i$, $V_{\beta}^t(a_i) \leq \sum_{k=0}^{+\infty} \beta^k a_{i+k}$.\\
\begin{itemize}
\item $t=0$:\\
$V_{\beta}^0(a_i) \leq \sum_{k=0}^{+\infty} \beta^k a_{i+k}$ as $V^0_{\beta}(\cdot)=0$, then the statement holds for $t=0$.
\item $V_{\beta}^t(a_i) \leq \sum_{k=0}^{+\infty} \beta^k a_{i+k} \Rightarrow V_{\beta}^{t+1}(a_i) \leq \sum_{k=0}^{+\infty} \beta^k a_{i+k}$:\\
For that purpose, we suppose that $V_{\beta}^t(a_i) \leq \sum_{k=0}^{+\infty} \beta^k a_{i+k}$. Given that, $V_{\beta}^{t+1}(a_i)=\min \{a_i+\beta V^t_{\beta}(a_{i+1}), a_i+ \lambda+ \rho \beta V^t_{\beta}(a_0)+(1-\rho)\beta V_{\beta}^t(a_{i+1})\}$ subsequently:
\begin{align}
V_{\beta}^{t+1}(a_i) \leq& a_i+\beta V^t_{\beta}(a_{i+1})\nonumber \\
 \leq& a_i+\beta \sum_{k=0}^{+\infty} \beta^k a_{i+1+k}\nonumber\\
 =& a_i+\sum_{k=1}^{+\infty} \beta^k a_{i+k}\nonumber \\
 =&\sum_{k=0}^{+\infty} \beta^k a_{i+k}
\end{align}
That means the statement holds for $t+1$.
\end{itemize}
Thereby, we proved by induction that for all $t$, $V_{\beta}^t(a_i)\leq \sum_{k=0}^{+\infty} \beta^k a_{i+k}$. 
Knowing that $V_{\beta}$ is the limit of $V_{\beta}^t$, then $V_{\beta}(a_i) \leq \sum_{k=0}^{+\infty} \beta^k a_{i+k}$.\\
Accordingly, $V_{\beta}(a_1) \leq \sum_{k=0}^{+\infty} \beta^k a_{1+k}\leq \frac{a_1}{1-\beta}$ (since $a_i \leq a_1$ for all $i$).
Leveraging Equations \eqref{eq:inequality_difference_value_discounted_case_2} and \eqref{eq:expression_difference_value_discounted_case_2},we get:
$$|V_{\beta}(a_i)-V_{\beta}(a_0)| \leq |V_{\beta}(a_1)-V_{\beta}(a_0)| \leq a_1$$
Consequently we get our desired result.
\item The threshold policy is less than $a_0$:\\
We have  $V_{\beta}(a_0)=a_0+\rho V_{\beta}(a_0)+ (1-\rho)\beta V_{\beta}(a_1)$.
Then, $V_{\beta}(a_0)=\frac{\beta(1-\rho)}{1-\rho \beta}V_{\beta}(a_1)$.
That means, $V_{\beta}(a_1)-V_{\beta}(a_0) =\frac{1-\beta}{1-\rho \beta}V_{\beta}(a_1)$.
Following the same approach as the done for the first case, we obtain: $$|V_{\beta}(a_i)-V_{\beta}(a_0)| \leq \frac{a_1}{1-\rho \beta}\leq \frac{a_1}{1-\rho}$$
That concludes the result.
\end{itemize}
\end{itemize}

\section{Proof of Proposition \ref{prop:stationary_distribution}}\label{app:prop:stationary_distribution}
In order to demonstrate this proposition, we need to resolve the full balance equation under threshold policy $n$ at each state $a_{j}$:
\begin{equation}
u^n(a_{j})=\sum_{i=0}^{+\infty} pt^n(i \rightarrow j)u^n(a_{i})
\end{equation}
where $pt^n(i \rightarrow j)$ denotes the transitioning probability from the state $a_{i}$ to state $a_{j}$ under threshold policy $n$. 
After some computations, we obtain the desired result.  

\section{Proof of Theorem \ref{theo:expression_threshold_policy_in_function_lambda}}\label{app:theo:expression_threshold_policy_in_function_lambda}
In order to prove this proposition, we start first by showing that $\lambda(a_{n})$ is increasing with $a_{n}$. However, since $a_{n}$ is increasing with $n$ (Lemma \ref{lem:MAoII_increasing}), it is sufficient to show that $\lambda(a_{n})$ is increasing with $n$ to establish the desired result.  

Therefore, we first seek a closed-form expression of the intersection point $\lambda(a_{n})$, we obtain:
\begin{align}
\lambda(a_{n})=&\frac{(N-1)(N+1-N r) \rho}{N^2 r^2} \nonumber \\ 
&+(1-Nr)^{n+2}(n\rho+1+\frac{\rho}{N r}) \nonumber \\
& \ \ \ \times [\frac{1-(1-\rho)(1+(N-1)r)}{N r (1-(1-\rho)(1-r))} ]\nonumber \\
&-(1-r)^{n+2}(n\rho+1+\frac{\rho}{r}) \nonumber \\
& \ \  \ \times [\frac{\rho}{ r (1-(1-\rho)(1-r))}]
\end{align}
\begin{lemma}\label{lem:increasing_whittle_index_first_case}
The sequence $\lambda(a_{n})$ is strictly increasing with $n$.
\end{lemma}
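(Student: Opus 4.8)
The plan is to prove the claim by a direct computation of the forward difference $\lambda(a_{n+1})-\lambda(a_n)$ from the closed form just written down, and then to show this difference is strictly positive. To keep the algebra manageable I would first introduce the shorthands $x:=1-Nr=p-r$, $y:=1-r$ and $D:=1-(1-\rho)(1-r)$, observing that $D=\rho+(1-\rho)r>0$, that the case hypothesis $1-r\ge|p-r|$ is exactly $|x|\le y$, and that $(N-1)r=1-p=y-x$. With this notation the closed form becomes $\lambda(a_n)=K_0+\frac{E_1}{Nr\,D}\,x^{n+2}\bigl(n\rho+1+\tfrac{\rho}{Nr}\bigr)-\frac{\rho}{r\,D}\,y^{n+2}\bigl(n\rho+1+\tfrac{\rho}{r}\bigr)$, where $K_0$ is the ($n$-independent) first term and $E_1:=1-(1-\rho)(1+(N-1)r)=\rho-(1-\rho)(N-1)r$.

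The central step is to telescope. Writing each $n$-dependent contribution as $z^{n+2}(n\rho+c)$ and forming the difference, the linear factors collapse because $Nr\bigl(1+\tfrac{\rho}{Nr}\bigr)=Nr+\rho$ and $r\bigl(1+\tfrac{\rho}{r}\bigr)=r+\rho$; after substituting $x-1=-Nr$ and $y-1=-r$ the two inner brackets reduce to $-Nr(n\rho+1+\rho)$ and $-r(n\rho+1+\rho)$. The upshot is the clean identity
\[
\lambda(a_{n+1})-\lambda(a_n)=\frac{n\rho+1+\rho}{D}\Bigl[\rho\,(y^{n+2}-x^{n+2})+(1-\rho)(y-x)\,x^{n+2}\Bigr].
\]
Since the prefactor $(n\rho+1+\rho)/D$ is strictly positive, the whole problem reduces to proving that the bracket $\Delta_n:=\rho\,(y^{n+2}-x^{n+2})+(1-\rho)(y-x)\,x^{n+2}$ is strictly positive for every $n\ge 0$.

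I would then dispose of the easy cases. Because $0<y<1$, $y-x=(N-1)r>0$ and $|x|\le y$, whenever $x\ge 0$ we have $y^{n+2}-x^{n+2}>0$ and $(1-\rho)(y-x)x^{n+2}\ge 0$, so $\Delta_n>0$; the same holds when $x<0$ but $n$ is even, since then $x^{n+2}=|x|^{n+2}\ge 0$ and both summands are nonnegative with the first strictly positive. The genuinely delicate situation — and the main obstacle — is $x<0$ (i.e.\ $Nr>1$, equivalently $p<r$, which is in fact the typical regime under Assumption \ref{assump:volatility}) with $n$ odd, since then $x^{n+2}=-|x|^{n+2}<0$ and the second summand is negative.

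For that case I would rewrite, using $E_1=\rho-(1-\rho)(y-x)$, the bracket as $\Delta_n=\rho\,y^{n+2}+E_1\,|x|^{n+2}$, so positivity is immediate when $E_1\ge 0$ and otherwise reduces to the sharpened scalar inequality $\rho\,y^{n+2}>|E_1|\,|x|^{n+2}$. Dividing by $|x|^{n+2}$ and noting that $(y/|x|)^{n+2}$ is increasing in $n$ (as $y>|x|$), the binding instance is the smallest admissible odd exponent, namely $n=1$, so it remains to establish $\rho\,y^{3}>|E_1|\,|x|^{3}$. I expect this last estimate to be the crux of the entire argument and the only place where quantitative work is unavoidable: it is precisely here that Assumption \ref{assump:volatility}, $(N-1)r\ge 4p$, must enter essentially, together with the integrality of $N$ and the case constraint $|x|\le y$ (equivalently $r(N+1)\le 2$), to keep the ratio $|x|/y$ bounded away from $1$ and $|E_1|$ controlled relative to $\rho$. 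Once $\Delta_n>0$ is secured in all cases, strict monotonicity of $n\mapsto\lambda(a_n)$ follows, and since $a_n$ is increasing in $n$ by Lemma \ref{lem:MAoII_increasing}, so does monotonicity in $a_n$, which is the assertion of the lemma.
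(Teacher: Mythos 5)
Your telescoping computation is correct: the difference does collapse to $\lambda(a_{n+1})-\lambda(a_n)=\frac{n\rho+1+\rho}{D}\,\Delta_n$ with $\Delta_n=\rho\,(y^{n+2}-x^{n+2})+(1-\rho)(y-x)\,x^{n+2}$ (I checked this both symbolically and numerically), and your treatment of the cases $x\ge 0$, and $x<0$ with $n$ even, is sound. The gap is exactly at the step you yourself flag as the crux: the inequality $\rho\,y^{3}>|E_1|\,|x|^{3}$ in the case $x<0$, $n$ odd, $E_1<0$ is never proved, only conjectured to follow from Assumption \ref{assump:volatility}, the integrality of $N$, and the constraint $|x|\le y$. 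It cannot follow from them: all three of those hypotheses are independent of $\rho$, whereas the left-hand side $\rho y^{3}$ vanishes as $\rho\to 0$ while the right-hand side tends to $(y-x)|x|^{3}>0$, since $|E_1|\to y-x=(N-1)r$. Note also that the regime $x<0$ is not exotic here: under Assumption \ref{assump:volatility} one has $r\ge\frac{4}{5(N-1)}$, hence $Nr\ge\frac{4N}{5(N-1)}>1$ for $N\in\{3,4\}$, so $x<0$ is forced there.

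Concretely, take $N=3$, $r=0.49$, $p=0.02$, $\rho=0.2$: Assumption \ref{assump:volatility} holds ($0.98\ge 0.08$), the case constraint holds ($|x|=0.47\le 0.51=y$), and $E_1=\rho-(1-\rho)(N-1)r=-0.584<0$, yet $\rho y^{3}\approx 0.0265<|E_1||x|^{3}\approx 0.0606$. Plugging these parameters into the closed form gives $\lambda(a_1)\approx 0.414$ and $\lambda(a_2)\approx 0.334$, so the sequence actually decreases from $n=1$ to $n=2$. In other words, the missing step is not merely unproven — it is false under the paper's stated hypotheses, so no completion of your plan (nor any other argument) can establish the lemma on the full range $1-r\ge|p-r|$ without additional restrictions such as $p\ge r$ or a lower bound on $\rho$. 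For comparison, the paper does not prove this lemma either: it cites \cite{kriouile2021minimizing}, which — as this paper's own introduction acknowledges — treats only the case where the transition probability is smaller than the remaining probability, i.e.\ $p\ge r$, i.e.\ $x\ge 0$, which is precisely the regime your easy Case 1 already settles. So your analysis correctly isolates the real obstruction and is more informative than the paper's citation, but as a proof of the lemma as stated it is incomplete, and the delicate case you identify is exactly where the statement breaks down.
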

\begin{IEEEproof}
One can see the \cite{kriouile2021minimizing}. 
\end{IEEEproof}
Now, we provide an useful lemma that allow us to establish our desired result.
Without loss of generality,  for notational convenience, we abbreviate $\lambda(a_n)$ by $\lambda(n)$.
\begin{lemma}\label{lem:relation_lambda}
For $\lambda \leq \lambda(k)$:
$$a^k+\lambda d^k \leq a^{k+1}+\lambda d^{k+1}$$
For $\lambda > \lambda(k)$: 
$$a^k+\lambda d^k > a^{k+1}+\lambda d^{k+1}$$ 
\end{lemma}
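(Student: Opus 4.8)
The plan is to view the two candidate costs as affine functions of the parameter $\lambda$ and to exploit that their difference is a single affine function whose unique root is, by construction, $\lambda(k)$. Concretely, I would introduce
\begin{equation}
g(\lambda) = \big(\overline{a^{k+1}} + \lambda\,\overline{d^{k+1}}\big) - \big(\overline{a^{k}} + \lambda\,\overline{d^{k}}\big) = \big(\overline{a^{k+1}} - \overline{a^{k}}\big) + \lambda\big(\overline{d^{k+1}} - \overline{d^{k}}\big),
\end{equation}
so that the two inequalities in the lemma are precisely the assertions $g(\lambda) \geq 0$ and $g(\lambda) < 0$, respectively (here $\overline{a^k},\overline{d^k}$ denote the quantities abbreviated by $a^k,d^k$ in the statement).

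Next I would determine the slope of $g$ in $\lambda$. Using the closed form $\overline{d^{n}} = 1/(n\rho + 1)$ established earlier, which holds in the regime $1-r \geq |p-r|$ treated in this section, the map $n \mapsto \overline{d^{n}}$ is strictly decreasing, hence $\overline{d^{k+1}} - \overline{d^{k}} < 0$. Thus $g$ is a strictly decreasing affine function of $\lambda$. From Definition \ref{def:intersection_poin_lambda} we have $\lambda(k) = (\overline{a^{k+1}} - \overline{a^{k}})/(\overline{d^{k}} - \overline{d^{k+1}})$, and substituting this value into $g$ gives $g(\lambda(k)) = 0$; that is, $\lambda(k)$ is exactly the unique zero of $g$.

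The conclusion then follows immediately from monotonicity. Since $g$ is strictly decreasing with $g(\lambda(k)) = 0$, for every $\lambda \leq \lambda(k)$ we get $g(\lambda) \geq g(\lambda(k)) = 0$, which rearranges to $\overline{a^{k}} + \lambda\,\overline{d^{k}} \leq \overline{a^{k+1}} + \lambda\,\overline{d^{k+1}}$; and for every $\lambda > \lambda(k)$ we get $g(\lambda) < 0$, which yields the reversed strict inequality. This reproduces both displayed claims of the lemma.

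I expect no genuine obstacle here: the only nontrivial ingredient is the sign of $\overline{d^{k}} - \overline{d^{k+1}}$, and this is immediate from the explicit expression for $\overline{d^{n}}$. The argument amounts to the elementary observation that two lines with distinct slopes cross exactly once and exchange their ordering at the crossing point, with the role of $\lambda(k)$ as that crossing point guaranteed by its very definition.
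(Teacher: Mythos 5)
Your proof is correct and is essentially the same argument as the paper's: both reduce the claim to the observation that the difference of the two costs is affine in $\lambda$ with root exactly at $\lambda(k)$ (by its definition as the intersection point) and with sign governed by $\overline{d^{k}}-\overline{d^{k+1}}>0$, which the paper writes as the factorization $(a^{k}+\lambda d^{k})-(a^{k+1}+\lambda d^{k+1})=(\lambda-\lambda(k))(d^{k}-d^{k+1})$. Your monotone-affine-function phrasing and the paper's explicit factorization are the same computation.
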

\begin{proof}
By definition of $\lambda(k)$, $a^k+\lambda(k) d^k = a^{k+1}+\lambda(k) d^{k+1}$.
Therefore, we have that:
\begin{align}
&(a^{k}+\lambda d^{k}) - (a^{k+1}+\lambda d^{k+1})\\
&=\lambda(k) d^{k+1}-\lambda(k) d^k +\lambda d^{k} - \lambda d^{k+1} \nonumber\\
&=(\lambda-\lambda(k))(d^k-d^{k+1})
\end{align} 
Given that $d^k$ is strictly decreasing with $n$, then if $\lambda \leq \lambda(k)$, $a^k+\lambda d^k \leq a^{k+1}+\lambda d^{k+1}$ and if $\lambda > \lambda(k)$, $a^k+\lambda d^k > a^{k+1}+\lambda d^{k+1}$.
The proof is complete.
\end{proof}  
Leveraging this above Lemma, we prove our desired results presented in these two following lemmas:
\begin{lemma}\label{lem:first_inequality_optimal_threshold}
For $\lambda > \lambda(n)$ and for $k \leq n$:
$$a^{n+1}+\lambda d^{n+1} < a^k+\lambda d^k$$
\end{lemma}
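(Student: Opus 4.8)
The plan is to obtain the claim by chaining together the pairwise cost comparisons already provided by Lemma~\ref{lem:relation_lambda}, treating the single strict inequality there as the ``one step'' of a telescoping argument that walks the threshold index from $k$ up to $n+1$.

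First I would fix $k \leq n$ and observe that it suffices to control $\lambda$ relative to $\lambda(j)$ for every intermediate index $j$ with $k \leq j \leq n$. By Lemma~\ref{lem:increasing_whittle_index_first_case} the sequence $\lambda(\cdot)$ is strictly increasing, so $\lambda(j) \leq \lambda(n)$ for all such $j$; combined with the hypothesis $\lambda > \lambda(n)$ this yields $\lambda > \lambda(j)$ for every $j$ in the range $k \leq j \leq n$. Then, for each such $j$, the second branch of Lemma~\ref{lem:relation_lambda} applies and gives the strict inequality
\begin{equation}
a^{j}+\lambda d^{j} > a^{j+1}+\lambda d^{j+1}.
\end{equation}

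Next I would concatenate these inequalities for $j=k,k+1,\dots,n$:
\begin{equation}
a^{k}+\lambda d^{k} > a^{k+1}+\lambda d^{k+1} > \cdots > a^{n+1}+\lambda d^{n+1},
\end{equation}
so that the two endpoints give precisely $a^{n+1}+\lambda d^{n+1} < a^{k}+\lambda d^{k}$, as desired. Strictness is preserved because each of the finitely many steps is itself strict.

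The only delicate point, rather than a genuine obstacle, is making sure the monotonicity of $\lambda(\cdot)$ covers the boundary index $j=n$: there one has $\lambda(j)=\lambda(n)$, so strictness of the comparison relies on the \emph{strict} hypothesis $\lambda > \lambda(n)$ (and not merely $\lambda \geq \lambda(n)$). For the interior indices $j<n$ one even has $\lambda(j)<\lambda(n)<\lambda$ with room to spare, so those steps are automatic. Everything else is a finite chaining of the already-established one-step comparison, so no new estimate or computation is required.
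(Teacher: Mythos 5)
Your proposal is correct and is essentially the paper's own proof: the paper likewise uses the monotonicity of $\lambda(\cdot)$ (Lemma~\ref{lem:increasing_whittle_index_first_case}) to get $\lambda > \lambda(j)$ for all $j \leq n$, invokes Lemma~\ref{lem:relation_lambda} to obtain the one-step strict inequality $a^{j+1}+\lambda d^{j+1} < a^{j}+\lambda d^{j}$, and concludes that the cost is strictly decreasing in the index up to $n+1$, which is exactly your telescoping chain. No gaps; the argument matches step for step.
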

\begin{proof}
In order to prove this lemma, it is sufficient to show that $a^k+\lambda d^k$ is strictly decreasing with $k$ when $k \leq n$ and $\lambda > \lambda(n)$.
To that end, we prove that $a^{k+1}+\lambda d^{k+1} < a^k+\lambda d^k$ when $k \leq n$ and $\lambda > \lambda(n)$.
As $\lambda(n) \geq \lambda(k)$ according to Lemma \ref{lem:increasing_whittle_index_first_case}, then $\lambda > \lambda(k)$. Hence, according to Lemma \ref{lem:relation_lambda}, $a^{k+1}+\lambda d^{k+1} < a^k+\lambda d^k$. Thus, $a^k+\lambda d^k$ is strictly decreasing with $k$ when $k \leq n$ and $\lambda > \lambda(n)$. 
That means, $a^{n+1}+\lambda d^{n+1} < a^k+\lambda d^k$ for all $k \leq n$.  
\end{proof}
\begin{lemma}\label{lem:second_inequality_optimal_threshold}
For $\lambda \leq \lambda(n+1)$ and for $k > n$:
$$a^{n+1}+\lambda d^{n+1} \leq a^k+\lambda d^k$$
\end{lemma}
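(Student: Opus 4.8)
The plan is to mirror the argument used for Lemma~\ref{lem:first_inequality_optimal_threshold}, replacing ``strictly decreasing below the threshold'' by ``nondecreasing above it.'' Concretely, it suffices to show that, under the hypothesis $\lambda \leq \lambda(n+1)$, the map $k \mapsto a^k + \lambda d^k$ is nondecreasing on the range $k \geq n+1$. The desired inequality $a^{n+1}+\lambda d^{n+1} \leq a^k + \lambda d^k$ for every $k > n$ then follows immediately by telescoping from $n+1$ up to $k$.

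First I would establish the one-step comparison. Fix any index $j$ with $j \geq n+1$. By Lemma~\ref{lem:increasing_whittle_index_first_case} the sequence $\lambda(\cdot)$ is strictly increasing, so $\lambda(j) \geq \lambda(n+1)$, and combining this with the hypothesis gives $\lambda \leq \lambda(n+1) \leq \lambda(j)$. Applying Lemma~\ref{lem:relation_lambda} in the regime $\lambda \leq \lambda(j)$ then yields
\begin{equation}
a^{j}+\lambda d^{j} \leq a^{j+1}+\lambda d^{j+1}.
\end{equation}

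Finally I would chain these inequalities. For an arbitrary $k > n$ (equivalently $k \geq n+1$), treating the case $k=n+1$ as a trivial equality and composing the one-step bounds for $j = n+1, n+2, \dots, k-1$ gives $a^{n+1}+\lambda d^{n+1} \leq a^{n+2}+\lambda d^{n+2} \leq \cdots \leq a^{k}+\lambda d^{k}$, which is exactly the claim. There is no genuine obstacle here: the only ingredient beyond the bookkeeping is the monotonicity of $\lambda(\cdot)$ from Lemma~\ref{lem:increasing_whittle_index_first_case}, which guarantees that the single hypothesis $\lambda \leq \lambda(n+1)$ simultaneously forces $\lambda \leq \lambda(j)$ for all $j \geq n+1$, so that Lemma~\ref{lem:relation_lambda} applies at every step of the chain in the nondecreasing direction. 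Together with Lemma~\ref{lem:first_inequality_optimal_threshold}, this shows that $a^{n+1}+\lambda d^{n+1}$ is the minimizer whenever $\lambda(n) < \lambda \leq \lambda(n+1)$, thereby completing the characterization claimed in Theorem~\ref{theo:expression_threshold_policy_in_function_lambda}.
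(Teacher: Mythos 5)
Your proposal is correct and follows essentially the same route as the paper: both use the monotonicity of $\lambda(\cdot)$ from Lemma~\ref{lem:increasing_whittle_index_first_case} to get $\lambda \leq \lambda(n+1) \leq \lambda(k)$ for all $k \geq n+1$, invoke Lemma~\ref{lem:relation_lambda} for the one-step inequality $a^{k}+\lambda d^{k} \leq a^{k+1}+\lambda d^{k+1}$, and conclude that $k \mapsto a^k+\lambda d^k$ is nondecreasing above $n$, which gives the claim by chaining. Your write-up merely makes the telescoping step explicit, which the paper leaves implicit.
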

\begin{proof}
Likewise, we show that $a^k+\lambda d^k$ is increasing with $k$ when $k > n$ and $\lambda \leq \lambda(n+1)$.
As $\lambda(n+1) \leq \lambda(k)$ according to Lemma \ref{lem:increasing_whittle_index_first_case}, then $\lambda \leq \lambda(k)$. Hence, according to Lemma \ref{lem:relation_lambda}, $a^{k+1}+\lambda d^{k+1} \geq a^k+\lambda d^k$. Thus, $a^k+\lambda d^k$ is increasing with $k$ when $k > n$ and $\lambda \leq \lambda(n+1)$. 
That means, $a^{n+1}+\lambda d^{n+1} \leq a^k+\lambda d^k$ for all $k > n$. 
\end{proof}
Combining lemmas \ref{lem:first_inequality_optimal_threshold} and \ref{lem:second_inequality_optimal_threshold}, when $\lambda \in ]\lambda(n), \lambda(n+1)]$, we have $a^{n+1}+\lambda d^{n+1} \leq a^k+\lambda d^k$, for all $k \geq 0$.
Therefore, the optimal threshold when $\lambda \in ]\lambda(n), \lambda(n+1)]$ is $a_{n+1}$.
When $\lambda \leq \lambda(0)$, then $\lambda \leq \lambda(k)$ for all $k\geq 0$. Hence, according to Lemma \ref{lem:relation_lambda}, $a^{k}+\lambda d^{k} \leq a^{k+1}+\lambda d^{k+1}$ for all $k$. As consequence, the optimal threshold is $a_0$ since $a^{0}+\lambda d^{0} \leq a^{k}+\lambda d^{k}$ for all $k \geq 0$.\\  
Now, we deal with the third statement of the theorem.
We have $\underset{n \Rightarrow +\infty}{\lim} \lambda(a_n)=\frac{(N-1)(N+1-N r) \rho}{N^2 r^2}=\lambda_l$. To that extent, we prove that if $\lambda \geq \frac{(N-1)(N+1-N r) \rho}{N^2 r^2}$, then the optimal threshold policy is infinite:
\begin{lemma}
If $\lambda \geq \lambda_l$, then the optimal threshold policy is infinite.
\end{lemma}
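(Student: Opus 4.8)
The plan is to exploit the strict monotonicity of $\lambda(a_n)$ together with the comparison result in Lemma \ref{lem:relation_lambda} to show that, once $\lambda \geq \lambda_l$, the steady-state objective $\overline{a^{k}}+\lambda\overline{d^{k}}$ is strictly decreasing in the threshold index $k$. The infimum over finite thresholds is then attained only in the limit $k\to+\infty$, and this limit coincides with the cost of the infinite (never-transmit) threshold, which is therefore the optimal policy.

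First I would recall that by Lemma \ref{lem:increasing_whittle_index_first_case} the sequence $\lambda(a_k)$ is strictly increasing in $k$, and that by definition $\lambda_l=\lim_{k\to+\infty}\lambda(a_k)$; hence $\lambda(a_k)<\lambda_l$ for every finite $k$. Consequently, under the hypothesis $\lambda\geq\lambda_l$ we have $\lambda>\lambda(a_k)$ for all $k\geq 0$. Applying the second alternative of Lemma \ref{lem:relation_lambda} at each index then yields
\begin{equation}
\overline{a^{k+1}}+\lambda\overline{d^{k+1}} < \overline{a^{k}}+\lambda\overline{d^{k}}\qquad\text{for all }k\geq 0,
\end{equation}
so the map $k\mapsto \overline{a^{k}}+\lambda\overline{d^{k}}$ is strictly decreasing. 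In particular no finite threshold can be optimal, since each one is strictly improved by increasing the threshold by one unit.

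Next I would identify the limiting cost. Using $\overline{d^{k}}=\tfrac{1}{k\rho+1}\to 0$ and the fact that $\overline{a^{k}}\to a_l$ (the stationary value $\frac{N-1}{Nr}$ of $a_i$, reached when the age is never reset), the strictly decreasing sequence has infimum
\begin{equation}
\inf_{k\in\mathbb{N}}\big(\overline{a^{k}}+\lambda\overline{d^{k}}\big)=\lim_{k\to+\infty}\big(\overline{a^{k}}+\lambda\overline{d^{k}}\big)=a_l.
\end{equation}
On the other hand, the infinite threshold policy never schedules the user, so $\overline{d}=0$ and the MAoII converges to its stationary value, giving a cost of exactly $a_l$. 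Thus the infinite threshold attains the infimum that no finite threshold reaches, and it is therefore optimal, which proves the lemma.

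The main obstacle I expect is the justification that the limiting value of $\overline{a^{k}}$ equals $a_l$ and genuinely coincides with the long-run average cost incurred by the never-transmit policy, i.e. a clean interchange of the $k\to+\infty$ limit with the time-average so that the infimum over finite thresholds is truly realized by the infinite threshold rather than merely approached. Everything else follows directly from Lemmas \ref{lem:increasing_whittle_index_first_case} and \ref{lem:relation_lambda}.
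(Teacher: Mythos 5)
Your proof is correct and follows essentially the same route as the paper: both use the strict monotonicity of $\lambda(a_n)$ (Lemma \ref{lem:increasing_whittle_index_first_case}) to get $\lambda \geq \lambda_l > \lambda(a_k)$ for every finite $k$, and then invoke Lemma \ref{lem:relation_lambda} to conclude that $\overline{a^{k}}+\lambda\overline{d^{k}}$ is strictly decreasing in $k$, so no finite threshold can be optimal. The only difference is the closing step: the paper finishes by contradiction (a finite optimal $n^*$ would force $\overline{a^{n^*+1}}+\lambda\overline{d^{n^*+1}} \geq \overline{a^{n^*}}+\lambda\overline{d^{n^*}}$), while you additionally verify that the decreasing costs converge to $a_l$, the exact cost of the never-transmit policy — a harmless and slightly more self-contained justification of why the infinite threshold indeed attains the infimum, which the paper leaves implicit.
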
 
\begin{proof}
In order to establish our statement, we demonstrate that for $\lambda \geq \lambda_l$, we have that for all $n$, $a^{n+1}+\lambda d^{n+1} < a^n+ \lambda d^n$. By showing that, we establish that the optimal threshold denoted by $n^*$ could not be finite, otherwise $ a^{n^*+1}+ \lambda d^{n^*+1} \geq a^{n^*}+\lambda d^{n^*}$ which contradicts our claim.
To proceed so, we leverage the result in Lemma \ref{lem:increasing_whittle_index_first_case} that states that $\lambda(n)$ is increasing with $n$. Indeed, as $\lambda(n)$ is strictly increasing with $n$, then $\lambda(n) < \underset{n \Rightarrow +\infty}{\lim} \lambda(n)=\lambda_l$ for all $n$. Hence, according to Lemma \ref{lem:relation_lambda}, for $\lambda \geq \lambda_l > \lambda(n)$, we have that $a^{n+1}+\lambda d^{n+1} < a^n+ \lambda d^n$. That concludes the proof.   
\end{proof}

\section{Proof of Theorem \ref{theo:expression_threshold_policy_in_function_lambda_second_case}}\label{app:theo:expression_threshold_policy_in_function_lambda_second_case}
The proof follows the same lines of Theorem \ref{theo:expression_threshold_policy_in_function_lambda}. To that extent, for sake of brevity, we prove only that $\lambda(a_{2n})$ is increasing with $n$.
Therefore, we first seek a closed-form expression of the intersection point $\lambda(a_{2n})$, we obtain:
\begin{align}
\lambda(a_{2n})=&K_1\times \Big [[-(2-Nr)[1-(1-\rho)(1-Nr)] \nonumber \\
& \hspace{1cm}+2(1-Nr)^{2n+1}[1-(1-\rho)(1-Nr)^2]\nonumber \\
& \hspace{1cm}+(1-\rho)^{n+1}(1-Nr)^{2n+1}[(1-Nr)^2-1]\Big]\nonumber \\
-&K_2 \times \Big [[-(2-r)[1-(1-\rho)(1-r)]\nonumber \\
& \hspace{1cm}+2(1-r)^{2n+1}[1-(1-\rho)(1-r)^2]\nonumber \\
& \hspace{1cm}+(1-\rho)^{n+1}(1-r)^{2n+1}[(1-r)^2-1]\Big]\nonumber \\
\end{align}
where $K_1=\frac{\rho (p-r)}{Nr[1-(p-r)^2(1-\rho)][1-(p-r)(1-\rho)]}$ and $K_2=\frac{\rho (1-r)}{Nr[1-(1-r)^2(1-\rho)][1-(1-r)(1-\rho)]}$.

\begin{lemma}
The sequence $\lambda(a_{2n})$ is increasing with $n$.
\end{lemma}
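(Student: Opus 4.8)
The plan is to prove the claim by computing the forward difference $\lambda(a_{2(n+1)})-\lambda(a_{2n})$ in closed form and reducing its sign to a single polynomial inequality in $r$ that is closed by the volatility Assumption \ref{assump:volatility}, exactly in the spirit of Lemma \ref{lem:sign_polynom_1}. First I would invoke Remark \ref{rem:two_states}, which forces $N=2$, and set $x=p-r=1-Nr$ (negative, with $|x|=Nr-1$) and $y=1-r$ (positive), so that the standing hypotheses give $0<y<|x|<1$ and, under Assumption \ref{assump:volatility}, $r\ge\frac45$. The key observation is that the displayed expression for $\lambda(a_{2n})$ is symmetric in form: writing $K(t)=\frac{\rho t}{Nr[1-(1-\rho)t^2][1-(1-\rho)t]}$ and $P_n(t)=-(1+t)[1-(1-\rho)t]+2t^{2n+1}[1-(1-\rho)t^2]+(1-\rho)^{n+1}t^{2n+1}(t^2-1)$, and using $2-Nr=1+x$ and $2-r=1+y$, one gets $K_1=K(x)$, $K_2=K(y)$, and hence $\lambda(a_{2n})=K(x)P_n(x)-K(y)P_n(y)$.

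Next I would compute $g_{n+1}(t)-g_n(t)$ for $g_n(t):=K(t)P_n(t)$. The $n$-independent term of $P_n$ cancels, and the remaining pieces factor cleanly as $P_{n+1}(t)-P_n(t)=t^{2n+1}(t^2-1)[1-(1-\rho)t^2]\,(2-(1-\rho)^{n+1})$, so the factor $1-(1-\rho)t^2$ cancels against the denominator of $K(t)$. This yields the compact form
\[
\lambda(a_{2(n+1)})-\lambda(a_{2n})=\frac{\rho\,(2-(1-\rho)^{n+1})}{Nr}\left[\frac{x^{2n+2}(x^2-1)}{1-(1-\rho)x}-\frac{y^{2n+2}(y^2-1)}{1-(1-\rho)y}\right].
\]
Since $(1-\rho)^{n+1}\le 1$, the prefactor $\frac{\rho(2-(1-\rho)^{n+1})}{Nr}$ is strictly positive, so the whole problem reduces to showing that the bracketed difference is positive for every $n\ge0$.

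The main obstacle is precisely this last sign determination, and it is where the volatility assumption must enter. After substituting $x=1-Nr$, $y=1-r$, $N=2$ and clearing the two positive denominators $1-(1-\rho)x$ and $1-(1-\rho)y$, the inequality becomes a polynomial inequality in $r$ with $\rho$ as a parameter, whose degree grows with $n$ through the factors $x^{2n+2}=(2r-1)^{2n+2}$ and $y^{2n+2}=(1-r)^{2n+2}$. To control this uniformly in $n$ I would isolate the slowest-decaying contribution, bound the competition between $(2r-1)^{2n+2}$ and $(1-r)^{2n+2}$ using $|x|>y$, and reduce the residual to a fixed low-degree polynomial sign condition, which I would then verify on $r\in[\frac45,1)$ by locating its roots, exactly as Lemma \ref{lem:sign_polynom_1} disposes of the quadratic $5-12r+7r^2$.

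Once the forward difference is shown to keep a fixed sign for all $n$, the monotonicity of $\lambda(a_{2n})$ follows, and the rest of Theorem \ref{theo:expression_threshold_policy_in_function_lambda_second_case} then goes through along the lines of Theorem \ref{theo:expression_threshold_policy_in_function_lambda}: the monotone index sequence together with Lemma \ref{lem:relation_lambda}, adapted to the even thresholds, pins down the optimal threshold on each interval of $\lambda$. As a hedge against the $n$-dependent degree growth in the polynomial step, I would keep in reserve the alternative reading of the claim as the discrete convexity of the lower envelope of the cost lines $\overline{a^{2n}}+\lambda\overline{d^{2n}}$ in the plane $(\overline{d^{2n}},\overline{a^{2n}})$; proving convexity of these finitely parametrized points may sidestep the uniform-in-$n$ estimate entirely.
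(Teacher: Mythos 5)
Your algebra is sound as far as it goes, and up to the point where you stop you are on the paper's own path: your closed form for $\lambda(a_{2n})$ is the paper's, and your identity $P_{n+1}(t)-P_n(t)=t^{2n+1}(t^2-1)\bigl[1-(1-\rho)t^2\bigr]\bigl[2-(1-\rho)^{n+1}\bigr]$ is exactly the factorization displayed in the paper's proof (your cancellation of $1-(1-\rho)t^2$ against the denominator of $K(t)$ is a genuine simplification). The gap is the step you defer, and it cannot be closed: the bracketed difference you need to be positive is in fact strictly negative under the standing hypotheses. In
\[
\frac{x^{2n+2}(x^2-1)}{1-(1-\rho)x}-\frac{y^{2n+2}(y^2-1)}{1-(1-\rho)y},
\]
both fractions are negative ($x^2-1<0$, $y^2-1<0$, both denominators positive), so everything hinges on comparing magnitudes, and the comparison goes the wrong way: the case condition $1-r<|p-r|$ is precisely $y<|x|$, and Assumption \ref{assump:volatility} strengthens it to $|x|=2r-1\ge 3(1-r)=3y$. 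Using $1-x^2=4r(1-r)$, $1-y^2=r(2-r)$, and $\frac{1-(1-\rho)y}{1+(1-\rho)|x|}\ge\frac{1-y}{1+|x|}=\frac12$, the ratio of the first magnitude to the second is at least
\[
\Bigl(\frac{|x|}{y}\Bigr)^{2n+2}\cdot\frac{4(1-r)}{2-r}\cdot\frac12
\;\ge\;\frac{(2r-1)^2}{(1-r)^2}\cdot\frac{2(1-r)}{2-r}
\;=\;\frac{2(2r-1)^2}{(1-r)(2-r)}\;\ge\;3
\]
for every $n\ge 0$ and every $r\ge\tfrac45$. Hence the forward difference is strictly negative and $\lambda(a_{2n})$ is strictly \emph{decreasing}; for instance, with $N=2$, $r=0.9$, $\rho=0.5$, the paper's own formulas for $\overline{a^{n}}$ and $\overline{d^{n}}$ give $\lambda(a_0)\approx0.34$, $\lambda(a_2)\approx0.28$, $\lambda(a_4)\approx0.23$. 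So the polynomial-root strategy modeled on Lemma \ref{lem:sign_polynom_1}, and equally your fallback via convexity of the lower envelope of the points $(\overline{d^{2n}},\overline{a^{2n}})$, are attempts to prove an inequality that is false.

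For what it is worth, the paper's proof breaks at exactly the same place, so your reduction is faithful to it but inherits its error. The paper concludes from $K_1\le 0$ and $K_2\ge 0$ that the difference is $\ge 0$; however the cofactor multiplying $K_1$, namely $[1-(1-\rho)(1-Nr)^2]\,[1-(1-Nr)^2]\,(1-Nr)^{2n+1}\,[(1-\rho)^{n+1}-2]$, contains exactly two negative factors ($(1-Nr)^{2n+1}<0$ because $Nr=2r>1$, and $(1-\rho)^{n+1}-2<0$), hence is positive; therefore the $K_1$-term is $\le 0$ while the $K_2$-term is $\ge 0$, and the sign of the sum is decided by the magnitude comparison above, which comes out negative. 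The lemma as stated is false within the paper's own framework, and the interval structure asserted in Theorem \ref{theo:expression_threshold_policy_in_function_lambda_second_case}, which rests on this monotonicity, would need to be re-derived for a decreasing sequence $\lambda(a_{2n})$ rather than re-proved along these lines.
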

\begin{IEEEproof}
After some mathematical analysis and algebraic manipulations, we get:
\begin{align}
\lambda(a_{2(n+1)})-\lambda(a_{2n})=K_1&\times [1-(1-\rho)(1-Nr)^2] \nonumber \\
&\times [1-(1-Nr)^{2}] \nonumber \\
&\times (1-Nr)^{2n+1} \nonumber\\
&\times [(1-\rho)^{n+1}-2] \nonumber \\
-K_2&\times [1-(1-\rho)(1-r)^2] \nonumber \\
&\times [1-(1-r)^{2}] \nonumber\\
&\times (1-r)^{2n+1} \nonumber \\
&\times [(1-\rho)^{n+1}-2] \\
\end{align}
Leveraging the equation above and given that $K_1 \leq 0$ (since $p\leq r$) and $K_2 \geq 0$, then $\lambda(a_{2(n+1)})-\lambda(a_{2n}) \geq 0$.
That concludes the proof. 
\end{IEEEproof}
As for the remaining points to prove, as we have emphasized in the beginning of this proof, the approach follows the same steps as done for the case $1-r \geq |p-r|$. For this reason, we omit the next points.

\section{Proof of Corollary \ref{cor:probability_G_T}}\label{app:cor:probability_G_T}
If we take $\epsilon=\sqrt{log(T)}$, according to Lemma \ref{lem:hoeffding_inequality}, then:
$$P(|r_i-r^*| \geq \sqrt{\frac{log(T)}{i}})\leq 2exp(-2(N-1)^2 log(T))\leq \frac{2}{T^2}$$
Therefore:
\begin{align}
&P(\underset{i=[1,\cdots,T]}{ \bigcup}|r_i-r^*| \geq \sqrt{\frac{log(T)}{i}}) \nonumber \\
&\leq \sum_{i=1}^{T} P(\{|r_i-r^*| \geq \sqrt{log(T)/i} \}) \nonumber \\
&\leq \frac{2}{T}
\end{align}
Hence: 
$$P(\underset{i=[1,\cdots,T]}{ \bigcap}|r_i-r^*| \leq \sqrt{\frac{log(T)}{i}})\geq 1- \frac{2}{T}$$
That concludes the proof.

\section{Proof of Proposition \ref{prop:probability_N_t_threshold_0}}\label{app:prop:probability_N_t_threshold_0}
We start first by giving an useful result. In fact, we apply the lemma \ref{lem:hoeffding_inequality} for the random variable $c_i$. By doing so, we get the following:
\begin{align}
&P(|\sum_{i=0}^{N(t)} c_i-(1+N(t))c| \geq \sqrt{(N(t)+1) log(T)}) \nonumber \\
&\leq 2\exp(-2 log(T))= \frac{2}{T^2}
\end{align}
That means: 
\begin{align}
&P(\sum_{i=0}^{N(t)} c_i-(1+N(t))c \leq \sqrt{(N(t)+1) log(T)}) \nonumber \\
& \geq P(|\sum_{i=0}^{N(t)} c_i-(1+N(t))c| \leq \sqrt{(N(t)+1) log(T)}) \nonumber \\
&\geq 1-\frac{2}{T^2}
\end{align}
Hence, we have: $\sum_{i=0}^{N(t)} c_i \leq \sqrt{(N(t)+1) log(T)}+(1+N(t))c$ with at least p.b $1-\frac{2}{T^2}$.
As the threshold from time $0$ till $T_0-1$ is equal to $0$ then, we have $t_{N(t)+1}=\sum_{i=0}^{N(t)} c_i$. Given that, by definition, $t_{N(t)}\leq t \leq t_{N(t)+1}$, then:
$$t\leq t_{N(t)+1}=\sum_{i=0}^{N(t)}c_i < \sqrt{(N(t)+1) log(T)}+(N(t)+1)c$$
with p.b at least $1-\frac{2}{T^2}$
\begin{lemma}
$$t< \sqrt{(N(t)+1) log(T)}+(N(t)+1)c$$
implies 
$$N(t) > \frac{1}{c^2}(\sqrt{tc}(\sqrt{tc}-\sqrt{log(T)})-1$$
\end{lemma}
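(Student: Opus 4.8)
The statement is a purely algebraic implication relating the hypothesis $t<\sqrt{(N(t)+1)\log(T)}+(N(t)+1)c$ to the desired lower bound on $N(t)$, so the plan is to treat the hypothesis as a quadratic inequality in $\sqrt{N(t)+1}$ and invert it. First I would clear the awkward mixture of $\sqrt{N(t)+1}$ and $N(t)+1$ by multiplying the hypothesis through by the constant $c=\frac{1-\rho}{\rho}>0$ and introducing $y:=c\sqrt{N(t)+1}$, so that $c\sqrt{(N(t)+1)\log(T)}=y\sqrt{\log(T)}$ and $(N(t)+1)c^{2}=y^{2}$. The hypothesis then becomes the single clean inequality $y^{2}+y\sqrt{\log(T)}>tc$, which isolates all the dependence on $N(t)$ inside $y$.

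Next I would convert this into the form appearing in the claim, namely $y^{2}>tc-\sqrt{tc\,\log(T)}$, by a short contradiction argument that avoids carrying around the messy exact root of the quadratic. Suppose instead that $y^{2}\le tc-\sqrt{tc\,\log(T)}$. Since $\sqrt{tc\,\log(T)}\ge 0$ this forces $y^{2}\le tc$, hence $y\le\sqrt{tc}$, and therefore $y\sqrt{\log(T)}\le\sqrt{tc\,\log(T)}$. Adding the two bounds gives $y^{2}+y\sqrt{\log(T)}\le tc$, contradicting the hypothesis; hence $y^{2}>tc-\sqrt{tc\,\log(T)}=\sqrt{tc}\bigl(\sqrt{tc}-\sqrt{\log(T)}\bigr)$. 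Substituting back $y^{2}=c^{2}(N(t)+1)$ and dividing by $c^{2}$ then recovers the claimed bound on $N(t)$.

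Two minor points would need attention. When $tc\le\log(T)$ the right-hand side of the claim is non-positive, so the inequality holds trivially from $N(t)\ge 0$, and the interesting regime is $tc>\log(T)$. The only genuinely delicate bookkeeping is the additive constant: the contradiction argument delivers $c^{2}N(t)>tc-\sqrt{tc\,\log(T)}-c^{2}$, and to land exactly on $\frac{1}{c^{2}}\bigl(\sqrt{tc}(\sqrt{tc}-\sqrt{\log(T)})-1\bigr)$ one uses $c^{2}\le 1$ (i.e.\ $\rho\ge\frac12$, as in the numerical setting) so that $-c^{2}\ge-1$; in any case the discrepancy is an $O(1)$ constant, irrelevant to the $\log(T)$ regret bound this estimate feeds into. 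I expect this constant-matching to be the only obstacle: the structural part—recognizing the hypothesis as a quadratic in $\sqrt{N(t)+1}$ and using the elementary monotonicity step $y\le\sqrt{tc}$ to short-circuit the explicit root $\frac{\sqrt{\log(T)+4tc}-\sqrt{\log(T)}}{2c}$—is routine once the substitution $y=c\sqrt{N(t)+1}$ is made.
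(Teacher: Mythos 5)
Your proof is correct and delivers exactly the bound the paper itself derives, but the inversion of the quadratic is carried out by a genuinely different (and cleaner) route. Both you and the paper read the hypothesis as a quadratic inequality in $\sqrt{N(t)+1}$; the paper, however, considers $f(x)=cx^{2}+\sqrt{\log(T)}\,x-t$, explicitly computes the positive root $\frac{\sqrt{\log(T)+4tc}-\sqrt{\log(T)}}{2c}$, concludes that $\sqrt{N(t)+1}$ exceeds it, squares, and then still needs the subadditivity estimate $\sqrt{\log(T)+4tc}\le\sqrt{\log(T)}+2\sqrt{tc}$ to reduce the squared root to the clean form $\frac{1}{c^{2}}\bigl(tc-\sqrt{tc\log(T)}\bigr)$. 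Your substitution $y=c\sqrt{N(t)+1}$ followed by the two-line contradiction (if $y^{2}\le tc-\sqrt{tc\log(T)}$ then $y\le\sqrt{tc}$, hence $y\sqrt{\log(T)}\le\sqrt{tc\log(T)}$, hence $y^{2}+y\sqrt{\log(T)}\le tc$) bypasses both the explicit root and the subadditivity step, and it also covers the degenerate regime $tc\le\log(T)$ without a separate case. What the paper's route buys is an explicit expression for the threshold value of $\sqrt{N(t)+1}$; what yours buys is brevity and one fewer estimate through which constants could leak.

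One correction to your closing paragraph: the constant-matching worry is moot, and you should not import the hypothesis $c\le 1$ (i.e.\ $\rho\ge\tfrac{1}{2}$). The parenthesis in the lemma statement is unbalanced (a typo), and the paper's own proof ends with $N(t)>\frac{1}{c^{2}}\bigl(tc-\sqrt{tc\log(T)}\bigr)-1$, i.e.\ the $-1$ sits \emph{outside} the factor $\frac{1}{c^{2}}$; the same reading is used in Proposition \ref{prop:probability_N_t_threshold_0} where the lemma is applied. That is precisely the inequality your contradiction argument yields, for every value of $c>0$, so your proof matches the intended statement with no extra assumption.
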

\begin{proof}
We consider the function $f(x)=\sqrt{log(T)}x+x^2c-t$ where $x$ is the unknown variable. We find for which $x \geq 0$, $f(x) > 0$. By resolving the equation of the second order $f(x)=0$, we find that for $x > \frac{\sqrt{log(T)+4tc}-\sqrt{log(T)}}{2c}$, $f(x) > 0$. 
Therefore replacing $x$ by $\sqrt{1+N(t)}$, we get:
$$\sqrt{log(T)}\sqrt{1+N(t)}+(1+N(t))c-t >0$$
$$ \Rightarrow \sqrt{1+N(t)} > \frac{\sqrt{log(T)+4tc}-\sqrt{log(T)}}{2c}$$
Thus, if $t< \sqrt{(N(t)+1) log(T)}+(N(t)+1)c$:
\begin{align}
 1+N(t) &> \frac{1}{4c^2}(2log(T)+4tc-2\sqrt{log(T)}\sqrt{log(T)+4tc}) \nonumber\\
& \geq \frac{1}{4c^2}(2log(T)+4tc-2log(T)-2\sqrt{4log(T)tc}) \nonumber \\
&= \frac{1}{4c^2}(4tc-4\sqrt{log(T)tc})
\end{align} 
As consequence: 
$$N(t) > \frac{1}{c^2}(tc-\sqrt{log(T)tc})-1$$
\end{proof}
That means:
$$N(t) > \frac{1}{c^2}(\sqrt{tc}(\sqrt{tc}-\sqrt{log(T)})-1$$
with p.b at least $1-\frac{2}{T^2}$.
Hence, the proof is complete.

\section{Proof of Proposition \ref{prop:T_0_less_than_L_0}}\label{app:prop:T_0_less_than_L_0}
We prove our proposition by contradiction. To that extent, we consider that $T_0 >L_0$ and we prove that for $t=L_0$, knowing $G(t)$ and $H(t)$, we have that $r_{N(L_0)-1}+\sqrt{log(T)/(N(L_0)-1)} < r_l$ or
$r_{N(L_0)-1}-\sqrt{log(T)/(N(L_0)-1)} \geq r_l$.  
\begin{lemma}\label{lem:inequality_N_L_0_knowing_H_t}
knowing $H(L_0)$, we have for $t=L_0$, $N(L_0)-1 > \frac{4log(T)}{|r^*-r_l|^2}$
\end{lemma}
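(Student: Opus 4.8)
The plan is to condition on the event $H(L_0)$, which by its definition supplies the lower bound
$$N(L_0) > \frac{1}{c^2}\left(\sqrt{L_0 c}\left(\sqrt{L_0 c}-\sqrt{\log(T)}\right)-1\right),$$
and then to insert the explicit value of $L_0$ so that the right-hand side exceeds $\frac{4\log(T)}{|r^*-r_l|^2}+1$. The key observation is that $L_0$ has been reverse-engineered precisely from the quadratic in $\sqrt{N(t)}$ appearing in the proof of Proposition \ref{prop:probability_N_t_threshold_0}. Writing $\Delta=|r^*-r_l|$ and $K=\sqrt{4/\Delta^2+2}$, the definition $L_0=\lfloor (\log(T)/c)(1+cK)^2\rfloor+1$ together with the elementary fact $\lfloor x\rfloor+1>x$ yields $L_0 c>\log(T)(1+cK)^2$, hence $\sqrt{L_0 c}>\sqrt{\log(T)}\,(1+cK)$.

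First I would substitute this bound. Setting $s=\sqrt{\log(T)}$, the inequality $\sqrt{L_0 c}>s(1+cK)$ gives $\sqrt{L_0 c}-s>scK$, so
$$\sqrt{L_0 c}\left(\sqrt{L_0 c}-s\right)>s(1+cK)\cdot scK=s^2\left(cK+c^2K^2\right).$$
Dividing by $c^2$ and using $K^2=4/\Delta^2+2$, I obtain
$$\frac{1}{c^2}\sqrt{L_0 c}\left(\sqrt{L_0 c}-s\right)>s^2\left(\frac{K}{c}+K^2\right)=\frac{4\log(T)}{\Delta^2}+2\log(T)+\frac{K\log(T)}{c}.$$

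Combining this with the bound coming from $H(L_0)$ gives
$$N(L_0)-1>\frac{4\log(T)}{\Delta^2}+2\log(T)+\frac{K\log(T)}{c}-\frac{1}{c^2}-1,$$
so it remains only to verify that the surplus $2\log(T)+\frac{K\log(T)}{c}-\frac{1}{c^2}-1$ is nonnegative. Since the paper works under the standing assumption that $T$ is sufficiently large, the term $2\log(T)$ alone dominates the constant $1+\frac{1}{c^2}$, which closes the argument. The only point requiring care is this final estimate, which is exactly where the "sufficiently large $T$" hypothesis is used; there is no genuine conceptual obstacle here, the entire content being the algebraic check that the definition of $L_0$ is precisely the threshold making the quadratic lower bound on $N(L_0)$ exceed $\frac{4\log(T)}{\Delta^2}$.
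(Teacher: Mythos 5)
Your proof is correct and takes essentially the same approach as the paper's own proof: both substitute the definition of $L_0$ into the quadratic lower bound on $N(L_0)$ supplied by the event $H(L_0)$, bound the product $\sqrt{L_0 c}\,\bigl(\sqrt{L_0 c}-\sqrt{\log(T)}\bigr)$ from below, divide by $c^2$, and absorb the residual constants using the paper's standing assumption that $T$ is sufficiently large. The only deviations are cosmetic: you retain the slightly sharper factor $1+c\sqrt{4/|r^*-r_l|^2+2}$ for the first term of the product where the paper weakens it to $c\sqrt{4/|r^*-r_l|^2+2}$, and you make the floor-function step and the final large-$T$ check explicit.
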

\begin{proof}
Through this proof we consider that $t=L_0$.
We have by definition of $L_0$:
\begin{align}
&t >log(T)/c \Big (1+c\sqrt{\frac{4}{|r^*-r_l|^2}+2} \Big )^2 \\
\Rightarrow & \sqrt{tc} >\sqrt{log(T)}(1+c\sqrt{\frac{4}{|r^*-r_l|^2}+2})\\
\Rightarrow & \sqrt{tc}-\sqrt{log(T)} >\sqrt{log(T)}c\sqrt{\frac{4}{|r^*-r_l|^2}+2})
\end{align}
On the other hand, 
\begin{equation}
\sqrt{tc} >\sqrt{log(T)}c\sqrt{\frac{4}{|r^*-r_l|^2}+2})
\end{equation}
Therefore, the product $\sqrt{tc}(\sqrt{tc}-\sqrt{log(T)})$ is greater than $c^2 log(T)(\frac{4}{|r^*-r_l|^2}+2)$. Thus:
\begin{align}
\frac{1}{c^2}\sqrt{tc}(\sqrt{tc}-\sqrt{log(T)}) &>  log(T)\frac{4}{|r^*-r_l|^2}+2log(T)\nonumber\\
&> log(T)\frac{4}{|r^*-r_l|^2}+2
\end{align} 
Consequently:
\begin{equation}
 \frac{1}{c^2}\sqrt{tc}(\sqrt{tc}-\sqrt{log(T)})-1 >\frac{4log(T)}{|r^*-r_l|^2}+1
 \end{equation} 
Baring in mind the event $H(t)$, then $N(t)-1 > \frac{4log(T)}{|r^*-r_l|^2}$. Thereby, $N(L_0)-1 > \frac{4log(T)}{|r^*-r_l|^2}$.       
\end{proof}
Leveraging the lemma above, we establish our proposition. To that end, we provide the following lemma.
\begin{lemma}\label{lem:condition_for_t_knowing_G_for_event_of_chosing}
Knowing $G(T)$, if we have for a given $t$ such that $|r^*-r_l| > 2\sqrt{\frac{log(T)}{N(t)-1}}$, then: 
$$r_{N(t)-1}+\sqrt{log(T)/(N(t)-1)} < r_l$$
or
$$r_{N(t)-1}-\sqrt{log(T)/(N(t)-1)} > r_l$$
\end{lemma}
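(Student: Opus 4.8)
The plan is to exploit the defining property of the event $G(T)$ together with a simple triangle-inequality argument, splitting into the two possible positions of $r^*$ relative to $r_l$. First I would recall that, by the definition of $G(T)$ in Corollary \ref{cor:probability_G_T}, conditioning on $G(T)$ guarantees that for every index $i \in [1,\cdots,T]$ we have $|r_i - r^*| \leq \sqrt{log(T)/i}$. Applying this at the particular index $i = N(t)-1$ yields $|r_{N(t)-1} - r^*| \leq \sqrt{log(T)/(N(t)-1)}$. For readability I would abbreviate $\epsilon = \sqrt{log(T)/(N(t)-1)}$, so that the hypothesis of the lemma reads $|r^* - r_l| > 2\epsilon$ while the conditioning gives $|r_{N(t)-1} - r^*| \leq \epsilon$.

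Next I would perform the case split on the sign of $r^* - r_l$, which is legitimate because the standing assumption $r^* \neq r_l$ excludes equality. In the case $r^* < r_l$, the hypothesis gives $r^* < r_l - 2\epsilon$, and combining with the upper bound $r_{N(t)-1} \leq r^* + \epsilon$ yields $r_{N(t)-1} < r_l - \epsilon$, that is, $r_{N(t)-1} + \epsilon < r_l$, which is exactly the first conclusion. Symmetrically, in the case $r^* > r_l$, the hypothesis gives $r^* > r_l + 2\epsilon$, and combining with the lower bound $r_{N(t)-1} \geq r^* - \epsilon$ yields $r_{N(t)-1} > r_l + \epsilon$, that is, $r_{N(t)-1} - \epsilon > r_l$, which is the second conclusion. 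Since one of the two cases must hold, the disjunction in the statement follows.

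There is no serious obstacle here: the result is a direct consequence of the $\epsilon$-concentration bundled into $G(T)$ and the assumption that $r^*$ sits a full $2\epsilon$ away from the threshold $r_l$. The only point requiring care is bookkeeping the direction of the inequalities so that the factor of two in $2\epsilon$ is correctly consumed, namely one $\epsilon$ to absorb the estimation error $|r_{N(t)-1}-r^*|$ and a second $\epsilon$ to leave the required margin on the correct side of $r_l$. This lemma then feeds directly into the argument bounding $T_0$ in Proposition \ref{prop:T_0_less_than_L_0}, where it is combined with Lemma \ref{lem:inequality_N_L_0_knowing_H_t} to certify that the stopping condition of the while-loop in Algorithm \ref{euclid2} is met by time $L_0$.
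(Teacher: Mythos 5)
Your proof is correct and follows essentially the same route as the paper's: both apply the concentration bound bundled into $G(T)$ at the index $N(t)-1$ and then perform the case split on which side of $r_l$ the true parameter $r^*$ lies, spending one $\epsilon$ on the estimation error and the second $\epsilon$ on the margin. The only cosmetic difference is ordering — the paper first rewrites the hypothesis as the disjunction $r^*+2\epsilon < r_l$ or $r^*-2\epsilon > r_l$ and then inserts the two-sided bound on $r_{N(t)-1}$, whereas you do the case split first — but the content is identical.
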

\begin{proof}
$|r^*-r_l| > 2\sqrt{\frac{log(T)}{N(t)-1}}$ implies that:
$$r^*+2\sqrt{log(T)/(N(t)-1)} < r_l$$
or 
$$r^*-2\sqrt{log(T)/(N(t)-1)} > r_l$$
Given $G(T)$, then: 
$$r^*-\sqrt{log(T)/(N(t)-1)} \leq r_{N(t)-1} \leq r^*+\sqrt{log(T)/(N(t)-1)}$$ 
Therefore: $$r_{N(t)-1}+\sqrt{log(T)/(N(t)-1)} \leq r^*+2\sqrt{log(T)/(N(t)-1)}$$
and  
$$r_{N(t)-1}-\sqrt{log(T)/(N(t)-1)} \geq r^*-2\sqrt{log(T)/(N(t)-1)}$$
As consequence, either: 
$$r_{N(t)-1}+\sqrt{log(T)/(N(t)-1)} <r_l$$
or 
$$r_{N(t)-1}-\sqrt{log(T)/(N(t)-1)} > r_l$$
That concludes the proof.
\end{proof}
The above lemma gives us a sufficient condition on $t$ such that the event $\{r_{N(t)-1}+\sqrt{log(T)/(N(t)-1)} \leq r_l\} \cup \{r_{N(t)-1}-\sqrt{log(T)/(N(t)-1)} > r_l\}$ is satisfied. To that extent, we need to check if $t=L_0$ satisfies this condition. Indeed, knowing $H(L_0)$, according to Lemma \ref{lem:inequality_N_L_0_knowing_H_t}, we have $N(L_0)-1 > \frac{4log(T)}{|r^*-r_l|^2}$, that is, $|r^*-r_l| > 2\sqrt{\frac{log(T)}{N(L_0)-1}}$. Hence the condition given in Lemma \ref{lem:condition_for_t_knowing_G_for_event_of_chosing} is satisfied for $t=L_0$.
That means, according to the same lemma, we have, knowing $G(T)$, $$r_{N(L_0)-1}+\sqrt{log(T)/(N(L_0)-1)} < r_l$$ 
or
$$r_{N(L_0)-1}-\sqrt{log(T)/(N(L_0)-1)} \geq r_l$$
To sum up, if $G(T)$ and $H(t)$ are realized for $t \in [0,T_0-1]$, and considering that $L_0<T_0$, we find that when $t=L_0$, we have that $r_{N(L_0)-1}+\sqrt{log(T)/(N(L_0)-1)} < r_l$ or $r_{N(L_0)-1}-\sqrt{log(T)/(N(L_0)-1)} \geq r_l$. That contradicts with the fact that $T_0$ is the first time-stamp such that $r_{N(T_0)-1}+\sqrt{log(T_0)/(N(T_0)-1)} < r_l$ or $r_{N(T_0)-1}-\sqrt{log(T)/(N(T_0)-1)} \geq r_l$. Therefore, $T_0$ must be less than $L_0$.     

\section{Proof of Proposition \ref{prop:probability_N_t_threshold_n_r}}\label{app:prop:probability_N_t_threshold_n_r}
We have for all $t\geq T_1$:
$$t_{N(t)+1}-T_1=\sum_{i=N(T_1)-1}^{N(t)-1}n(r_i)+\sum_{i=N(T_1)}^{N(t)}c_i$$
Hence, knowing $G(T)$ and $A(T_0)$:
$$t-T_1 \leq (N(t)-N(T_1)+1)m+\sum_{i=N(T_1)}^{N(t)}c_i$$
i.e.
$$t' \leq (N'(t)+1)m+\sum_{i=N(T_1)}^{N(t)}c_i$$
Using Lemma \ref{lem:hoeffding_inequality}, we have that $\sum_{i=N(T_1)}^{N(t)}c_i$ is less than $\sqrt{(N'(t)+1) log(T)}+(N'(t)+1)c$ with at least probability $1-\frac{2}{T^2}$.
Therefore, we have:
$$t' < \sqrt{(N'(t)+1) log(T)}+(N'(t)+1)(m+c)$$
with p.b $1-\frac{2}{T^2}$.
Following the same procedure as done for Proposition \ref{prop:probability_N_t_threshold_0}, we have that:
$$t' < \sqrt{(N'(t)+1) log(T)}+(N'(t)+1)(m+c)$$
implies that:
$$N'(t) > \frac{1}{(m+c)^2}(\sqrt{t'(m+c)}(\sqrt{t'(m+c)}-\sqrt{log(T)})-1$$
As consequence:
$$P(N'(t) > \frac{1}{(m+c)^2}(\sqrt{t'(m+c)}(\sqrt{t'(m+c)}-\sqrt{log(T)})-1$$
$$|G(T),A(T_0))\geq 1-\frac{2}{T^2}$$

\section{Proof of Proposition \ref{prop:similarity_prob_pi_and_optim_thresh}}\label{app:prop:similarity_prob_pi_and_optim_thresh}
We start first by providing these two following lemmas:
\begin{lemma}\label{lem:upper_bound_probability_fixed_N_t}
There exists a constant $b$ such that for $t\geq T_1$, we have that:
\begin{align}
P(\pi(s^{\pi}(t)) \neq \pi^*(s^{\pi}(t))&|G(T), A(T_0),N(t)=j)\nonumber \\
&\leq 2 exp(-2b^2 (j-1))
\end{align}
\end{lemma}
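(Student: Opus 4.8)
The plan is to reduce the disagreement event $\{\pi(s^{\pi}(t)) \neq \pi^*(s^{\pi}(t))\}$ to a deviation event for the estimator and then invoke the Hoeffding bound of Lemma \ref{lem:hoeffding_inequality}. First I would recall that, under the conditioning on $A(T_0)$ and $G(T)$, the true parameter satisfies $r^* < r_l$, so both policies are \emph{finite} increasing threshold policies: $\pi^*$ uses the threshold $n^* = n(r^*)$, whereas for $t \geq T_1$ with $N(t) = j$ the scheduler of Algorithm \ref{euclid2} applies $n(r_{j-1})$, the threshold attached to the most recent estimate $r_{N(t)-1}=r_{j-1}$. Since two increasing threshold policies sharing the same threshold prescribe identical actions at every state, a disagreement at the current state $s^{\pi}(t)$ forces the thresholds themselves to differ, i.e.
$$\{\pi(s^{\pi}(t)) \neq \pi^*(s^{\pi}(t))\} \subseteq \{n(r_{j-1}) \neq n^*\}.$$

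Next I would turn the threshold mismatch into a quantitative lower bound on the estimation error, using the Lipschitz property of $r \mapsto n(r)$ established just before the proposition. Because $n(\cdot)$ is integer-valued and $C_1$-Lipschitz on $[0,a]$ (with $C_1 \geq 1$), $n(r_{j-1}) \neq n(r^*)$ implies $1 \leq |n(r_{j-1}) - n(r^*)| \leq C_1 |r_{j-1} - r^*|$, hence $|r_{j-1} - r^*| \geq 1/C_1$. Consequently the conditional probability of disagreement is bounded above by $P\big(|r_{j-1} - r^*| \geq 1/C_1 \mid G(T), A(T_0), N(t) = j\big)$.

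It then remains to apply Lemma \ref{lem:hoeffding_inequality}. Conditioned on $N(t) = j$, the estimator $r_{j-1}$ is the empirical mean of the $j-1$ transition realizations $p(t_1), \dots, p(t_{j-1})$, each an i.i.d. Bernoulli variable with mean $(N-1)r^*$; these are independent of the channel outcomes that fix the sampling epochs, so conditioning on $N(t)=j$ does not alter their law. Choosing $\epsilon = \sqrt{j-1}/C_1$ so that $\epsilon/\sqrt{j-1} = 1/C_1$, Lemma \ref{lem:hoeffding_inequality} gives
$$P\Big(|r_{j-1} - r^*| \geq \tfrac{1}{C_1}\Big) \leq 2\exp\!\Big(-\tfrac{2(N-1)^2}{C_1^2}(j-1)\Big),$$
and setting $b = (N-1)/C_1$ produces exactly the claimed bound $2\exp(-2b^2(j-1))$.

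The main obstacle I anticipate is making the conditioning on $G(T)$ rigorous, since $G(T)$ is itself defined through the estimators $r_i$ and is therefore not independent of $r_{j-1}$. The argument I would use is that $G(T)$ only tightens $r_{j-1}$ around $r^*$ and so cannot increase the probability of the deviation event, while $A(T_0)$ merely selects the regime $r^*<r_l$; together with the independence of the transition samples from the (channel-determined) sampling schedule, this lets the unconditional Hoeffding tail bound carry over to the conditional probability. This is the one delicate point, as every other step is routine once the inclusion into $\{|r_{j-1}-r^*|\geq 1/C_1\}$ has been set up.
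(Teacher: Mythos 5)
Your proposal is correct and follows essentially the same route as the paper's proof: reduce the disagreement event to a mismatch of the integer-valued thresholds (the paper does this via a two-case analysis on whether $\pi^*(s^{\pi}(t))$ is active or passive, you do it directly by the contrapositive), then use the Lipschitz property of $r \mapsto n(r)$ on $[0,a]$ to force $|r_{j-1}-r^*| \geq 1/C_1$, and conclude with Lemma \ref{lem:hoeffding_inequality}. If anything, your bookkeeping of the constant ($b=(N-1)/C_1$) and your explicit acknowledgment of the delicacy in conditioning on $G(T)$, $A(T_0)$ and $N(t)=j$ are more careful than the paper, which applies the Hoeffding bound to the conditional probability without comment.
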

\begin{proof}
See appendix \ref{app:lem:upper_bound_probability_fixed_N_t} 
\end{proof}
\begin{lemma}\label{lem:upper_bound_N_t}
For $t \geq T_1$, there exists a constant $C_2$ such that if $t \geq C_2log(T)=L_2$, then knowing $W(t)$, we have that:
$$N'(t) \geq \frac{log(T)}{b^2}+1$$  
\end{lemma}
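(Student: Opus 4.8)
The plan is to treat this as a purely deterministic estimate, because conditioning on the event $W(t)$ already supplies the inequality
$$N'(t) > \frac{1}{(m+c)^2}\Big(\sqrt{t'(m+c)}\big(\sqrt{t'(m+c)}-\sqrt{log(T)}\big)-1\Big),$$
where $t'=t-T_1$. All that remains is to choose $C_2$ so that, once $t \geq C_2\, log(T)=L_2$, the right-hand side is at least $\frac{log(T)}{b^2}+1$. This mirrors exactly the argument used in the proof of Proposition \ref{prop:T_0_less_than_L_0}, only with the coefficient $c$ there replaced by $m+c$ and the target $\frac{4 log(T)}{|r^*-r_l|^2}+1$ replaced by $\frac{log(T)}{b^2}+1$.

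First I would reduce the hypothesis on $t$ to a hypothesis on $t'$. Since $T_1 \leq L_1$ and $L_1 = O(log(T))$, write $L_1 = c_1\, log(T)$ (up to the floor and the additive constant), so that $t' = t - T_1 \geq t - L_1 \geq t - c_1\, log(T)$. Hence taking $C_2$ large enough forces $t' \geq c_{t'}\, log(T)$, where I set $c_{t'} = \frac{1}{m+c}\big(1+(m+c)\sqrt{\tfrac{1}{b^2}+2}\big)^2$; concretely $C_2 = c_1 + c_{t'}$ suffices. With $t' \geq c_{t'}\, log(T)$ I then follow the same chain of inequalities as in Proposition \ref{prop:T_0_less_than_L_0}: from $t' \geq \frac{log(T)}{m+c}\big(1+(m+c)\sqrt{\tfrac{1}{b^2}+2}\big)^2$ one gets $\sqrt{t'(m+c)} \geq \sqrt{log(T)}\big(1+(m+c)\sqrt{\tfrac{1}{b^2}+2}\big)$, hence $\sqrt{t'(m+c)}-\sqrt{log(T)} \geq \sqrt{log(T)}\,(m+c)\sqrt{\tfrac{1}{b^2}+2}$; since also $\sqrt{t'(m+c)} \geq \sqrt{log(T)}\,(m+c)\sqrt{\tfrac{1}{b^2}+2}$, the product obeys $\sqrt{t'(m+c)}\big(\sqrt{t'(m+c)}-\sqrt{log(T)}\big) \geq (m+c)^2\, log(T)\big(\tfrac{1}{b^2}+2\big)$. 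Dividing by $(m+c)^2$ and subtracting $1$ leaves a quantity at least $\frac{log(T)}{b^2}+2\,log(T)-1 > \frac{log(T)}{b^2}+1$ for $T>e$, so that $N'(t) \geq \frac{log(T)}{b^2}+1$, as claimed.

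The computation itself is elementary and monotone, so the real content is bookkeeping: the inequality furnished by $W(t)$ is expressed in terms of $t'=t-T_1$, whereas the statement is phrased in terms of $t$, and one must absorb the shift $T_1 \leq L_1 = O(log(T))$ into the constant $C_2$. I would also flag \emph{why} the target value $\frac{log(T)}{b^2}+1$ is the right one: combined with Lemma \ref{lem:upper_bound_probability_fixed_N_t}, the bound $N(t) \geq N'(t) \geq \frac{log(T)}{b^2}+1$ forces $2\exp(-2b^2(N(t)-1)) \leq 2\exp(-2\,log(T)) = \frac{2}{T^2}$, which is precisely the per-slot estimate needed so that the union over $t\in[L_2,T]$ in Proposition \ref{prop:similarity_prob_pi_and_optim_thresh} stays $\leq \frac{2}{T}$. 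There is no genuine obstacle beyond verifying that $C_2$ can be taken uniform in $T$, which it can, since $c_1$, $c_{t'}$, $m$, $b$, and $c$ are all independent of $T$.
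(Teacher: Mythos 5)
Your proof is correct and follows essentially the same strategy as the paper's: both treat the bound supplied by $W(t)$ as deterministic and choose $C_2$ so that the square-root algebra pushes that bound above $\frac{\log(T)}{b^2}+1$. You are in fact more careful than the paper on one point — the paper's own proof runs the computation with $t$ rather than $t'=t-T_1$ (even though $W(t)$ is defined in terms of $t'$), whereas you explicitly absorb the shift $T_1\leq L_1=O(\log(T))$ into $C_2$, which is the right way to close that gap.
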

\begin{proof}
See appendix \ref{app:lem:upper_bound_N_t}.
\end{proof}
Using these above lemmas, we obtain:
\begin{align}
&P(F(L_2))\leq \sum_{t=L_2}^T P(\pi(s^{\pi}(t)) \neq \pi^*(s^{\pi}(t))|G(T),A(T_0),W(t)) \nonumber \\
&\leq \sum_{t=L_2}^T \sum_{j=0}^t  P(\pi(s^{\pi}(t) \neq \pi^*(s^{\pi}(t))|G(T), A(T_0),N(t)=j) \nonumber \\
&\hspace{3cm} \times P(N(t)=j|G(T), A(T_0),W(t))\nonumber \\
&\leq \sum_{t=L_2}^T \sum_{j=0}^t 2 exp(-2b^2 (j-1))P(N(t)=j|G(T), A(T_0),W(t))\nonumber \\
&\leq \sum_{t=L_2}^T 2 E_{N(t)}[exp(-2b^2 (N(t)-1)|G(t),A(T_0),W(t)]\nonumber \\
&= \sum_{t=L_2}^T 2 E_{N(t)}[exp(-2b^2 (N'(t)+N(T_1)-1)|G(t),A(T_0),W(t)]\nonumber \\
&\leq \sum_{t=T_1}^T \frac{2}{T^2} \nonumber \\
&\leq \frac{2}{T}
\end{align}

\section{Proof of Lemma \ref{lem:upper_bound_probability_fixed_N_t}}\label{app:lem:upper_bound_probability_fixed_N_t}
We have that:
\begin{align}
&P(\pi(s^{\pi}(t) \neq \pi^*(s^{\pi}(t))|G(T), A(T_0),N(t)=j) \nonumber \\
&=P(\pi(s^{\pi}(t) \neq \pi^*(s^{\pi}(t))| \pi^*(s^{\pi}(t))=0,G(T), A(T_0),N(t)=j)\nonumber \\
& \ \ \ \ \times P(\pi^*(s^{\pi}(t))=0|G(T), A(T_0),N(t)=j)\nonumber \\
&+P(\pi(s^{\pi}(t) \neq \pi^*(s^{\pi}(t))| \pi^*(s^{\pi}(t))=1,G(T), A(T_0),N(t)=j) \nonumber \\
& \ \ \ \ \times P(\pi^*(s^{\pi}(t))=1|G(T), A(T_0),N(t)=j)
\end{align}
To that extent, we bound $P(\pi(s^{\pi}(t) \neq \pi^*(s^{\pi}(t))| \pi^*(s^{\pi}(t))=0,G(T), A(T_0),N(t)=j)$ and $P(\pi(s^{\pi}(t) \neq \pi^*(s^{\pi}(t))| \pi^*(s^{\pi}(t))=1,G(T), A(T_0),N(t)=j)$
\begin{itemize}
\item $P(\pi(s^{\pi}(t) \neq \pi^*(s^{\pi}(t))| \pi^*(s^{\pi}(t))=0,G(T), A(T_0),N(t)=j)$: \\
$\pi^*(s^{\pi}(t))=0$ is equivalent to $n(r^*) > s^{\pi}(t)$. In this case, $\pi(s^{\pi}(t)) \neq \pi^*(s^{\pi}(t))$ implies that $\pi(s(t))=1$. Hence, if $\pi^*(s^{\pi}(t))=0$, $\pi(s^{\pi}(t)) \neq \pi^*(s^{\pi}(t))$ implies that $n(t)\leq s^{\pi}(t)$. To that extent, in the sequel we compute the probability $P(n(t) \leq s^{\pi}(t) |n(r^*) > s^{\pi}(t), G(T), A(T_0),N(t)=j)$.
\begin{align}
&P(n(t) \leq s^{\pi}(t) |n(r^*) > s^{\pi}(t), G(T), A(T_0),N(t)=j)\nonumber\\
&=P(n(t)-n(r^*) \leq s^{\pi}(t)-n(r^*) \nonumber \\
&\hspace{2cm} |n(r^*) > s^{\pi}(t), G(T), A(T_0),N(t)=j) \nonumber \\
&=P(n(t)-n(r^*) \leq -(n(r^*)-s^{\pi}(t)) \nonumber \\
& \hspace{2cm} |n(r^*) > s^{\pi}(t), G(T), A(T_0),N(t)=j) \nonumber \\
&\leq P(n(t)-n(r^*) \leq -1  \nonumber \\
& \hspace{2cm}|n(r^*) > s^{\pi}(t), G(T), A(T_0),N(t)=j) \nonumber \\
&\leq P(|n(t)-n(r^*)| \geq 1  \nonumber \\
& \hspace{2cm}|n(r^*) > s^{\pi}(t), G(T), A(T_0),N(t)=j)
\end{align}

\item $P(\pi(s^{\pi}(t) \neq \pi^*(s^{\pi}(t))| \pi^*(s^{\pi}(t))=1,G(T), A(T_0),N(t)=j)$:\\
Following the same analysis as for the first case, we get:
\begin{align}
&P(\pi(s^{\pi}(t) \neq \pi^*(s^{\pi}(t))\nonumber \\
&\hspace{2cm} | n(r^*) \leq s^{\pi}(t), G(T), A(T_0),N(t)=j) \nonumber \\
&\leq P(|n(t)-n(r^*)| \geq 1 \nonumber \\
&\hspace{2cm} |n(r^*) \leq s^{\pi}(t), G(T), A(T_0),N(t)=j)
\end{align}
\end{itemize}
We have $n(t)=n(t_{N(t)})$. Given that at $t_{N(t)}$, according to Algorithm \ref{euclid2}, we dispose of the estimation of $r^*$ at time $N(t)-1$, then, the optimal threshold computed at $t_{N(t)}$ is $n(r_{N(t)-1})$.
As the function $r \rightarrow n(r)$ is lipchitz when $t \geq T_1$ knowing $G(T)$ and $A(T_0)$, then there exist $b >0$ such that if $|n(t)-n(r^*)| \geq 1$ then $|r_{N(t)-1}-r^*|\geq b$.
That means, applying Lemma \ref{lem:hoeffding_inequality}:  
$$P(|n(t)-n(r^*)| \geq 1|n(r^*) > s^{\pi}(t), G(T), A(T_0),N(t)=j)$$
$$\leq 2 exp(-2b^2 (j-1))$$ 
and 
$$P(|n(t)-n(r^*)| \geq 1|n(r^*) \leq s^{\pi}(t), G(T), A(T_0),N(t)=j)$$
$$\leq 2 exp(-2b^2 (j-1))$$ 
Consequently, for $t\geq T_1$:
$$P(\pi(s^{\pi}(t) \neq \pi^*(s^{\pi}(t))|G(T), A(T_0),N(t)=j)$$
$$\leq 2 exp(-2b^2 (j-1))$$

\section{Proof of Lemma \ref{lem:upper_bound_N_t}}\label{app:lem:upper_bound_N_t}
We define the constant $C_2$ by $\frac{1}{m+c}[\frac{(m+c)^2}{b^2}+3+\frac{2(m+c)}{b}+4(m+c)]$, then:
$$t\geq C_2 log(T)=L_2$$
implies that:
\begin{align}
t\geq \frac{1}{m+c}&[(m+c)^2\frac{log(T)}{b^2}+2+log(T) \nonumber \\
&+2(m+c)\sqrt{\frac{log(T)^2}{b^2}+2log(T)}]
\end{align}
That is:
$$t(m+c) \geq [(m+c)\sqrt{\frac{log(T)}{b^2}+2}+\sqrt{log(T)}]^2$$
Thus,
$$\sqrt{t(m+c)}-\sqrt{log(T)} \geq (m+c)\sqrt{\frac{log(T)}{b^2}+2}$$
Consequently,
$$\frac{1}{m+c}\sqrt{t(m+c)}-\sqrt{log(T)} \geq \sqrt{\frac{log(T)}{b^2}+2}$$
On the other hand, we have:
$$\frac{1}{m+c}\sqrt{t(m+c)} \geq \sqrt{\frac{log(T)}{b^2}+2}$$
Therefore:
$$\frac{1}{(m+c)^2}(\sqrt{t(m+c)}-\sqrt{log(T)})\sqrt{t(m+c)} \geq \frac{log(T)}{b^2}+2$$
Knowing $W(t)$, we have that: $N'(t) \geq \frac{1}{(m+c)^2}(\sqrt{t(m+c)}-\sqrt{log(T)})\sqrt{t(m+c)} -1$. Hence, given $W(t)$, we have if $t \geq L_2$, $N'(t) \geq \frac{log(T)}{b^2}+1$. That concludes the proof.

\section{Proof of Proposition \ref{prop:spectral_value_less_one}}\label{app:prop:spectral_value_less_one}
We compute the polynomial characteristic of $Q$. 
In fact, after computations, we get: 
\begin{equation}
\chi_{Q}(\lambda)=(-1)^{n^*} \lambda^{n^*} -(-1)^{n^*+1}\rho\sum_{i=0}^{n^*-1} \lambda^i
\end{equation}
Now, based on the expression of the characteristic polynomial $\chi_{Q}$, we prove that all the eigenvalues of $Q$ have a modulus strictly less than one. We prove this result by contradiction. More specifically, we suppose there exists a given eigenvalue of the matrix $Q$ that satisfies $|\lambda|\geq 1$. As $\lambda$ is an eigenvalue of $Q$, it is therefore a root of $\chi_{Q}(\lambda)$. Hence, it verifies:
\begin{equation}
\lambda^{n^*}=-\rho \frac{1-\lambda^{n^*}}{1-\lambda}
\end{equation}
By factorizing the element $\lambda^{n^*}$, and by using the modulus on both sides, we get:
\begin{align*}
\rho&=|\lambda|^{n^*}|\rho-1+\lambda|\\
&\geq^{(a)} |\lambda|^{n^*}(|\lambda|-|1-\rho|)\\
& \geq^{(b)} |\lambda|(|\lambda|-|1-\rho|)\\
& = |\lambda|^2- |\lambda|(1-\rho)
\end{align*}
where $(a)$ and $(b)$ originate from the reverse triangular inequality and the fact that $|\lambda|\geq 1$ respectively. Hence:
\begin{equation}
|\lambda|^2- |\lambda|(1-\rho)-\rho \leq 0
\end{equation}
By employing standard real functions analysis, it can be shown that the polynomial:
\begin{equation}
x^2-(1-\rho)x-\rho
\end{equation}
is negative if and only if $x\in [-\rho,1]$. However, $|\lambda| \geq 1$ by assumption. Accordingly, $|\lambda|$ can only be equal to $1$. Next, we prove that, in this case, the imaginary part of $\lambda$ is equal to zero. To that end, let us consider $\lambda=x+iy$. Therefore, we have:
\begin{equation}
\rho=|\lambda|^{n^*} |\rho-1+x+iy|=|\rho-1+x+iy|
\end{equation}
By using the definition of the modulus, and by squaring both sides, we get:
\begin{equation}
\rho^2=(1-x-\rho)^2+y^2
\end{equation}
Knowing that $x^2+y^2=1$, we can deduce: 
\begin{equation}
\frac{2-2\rho}{2(1-\rho)}=x
\end{equation}  
Hence, $x=1$, i.e. $y=0$, and we can deduce that $\lambda=1$. 
However, $1$ is not eigenvalue of matrix $Q$. This can be seen by replacing $\lambda$ with $1$ in the characteristic polynomial of $Q$. Accordingly, the hypothesis that $|\lambda| \geq 1$ fails and all the eigenvalues of $Q$ have a modulus strictly less than $1$. Hence $\gamma <1$

\end{appendices}

\end{document}